\title{Embedded Finite Models Beyond Restricted Quantifier Collapse}
\author{Michael Benedikt}
\address{Department of Computer Science, University of Oxford}
\email{michael.benedikt@cs.ox.ac.uk}
\author{Ehud Hrushovski}
\address{Mathematical Institute, University of Oxford}
\email{ehud.hrushovski.maths.ox.ac.uk}
\newtheorem{proposition}{Proposition}
\newtheorem{theorem}{Theorem}
\newtheorem{example}{Example}
\newtheorem{corollary}{Corollary}
\newtheorem{claim}{Claim}
\newtheorem{lemma}{Lemma}
\setlist[enumerate]{leftmargin=*}
 \newtheorem{definition}{Definition}
\newcommand{\isoinv}{\kw{IsoInv}} 
\newcommand{\contained}{\subseteq}
\newcommand{\isunion}{\bigcup}
\newcommand{\domain}{\kw{Dom}}
\newcommand{\fraisse}{Fra\"iss\'e }
\newcommand{\powerset}{\mathcal{P}}
\newcommand{\myeat}[1]{}
\newcommand{\pspace}{\kw{PSPACE}}
\newcommand{\np}{\kw{NP}}
\newcommand{\alba}{\kw{ALessBA}}
\newcommand{\myparagraph}[1]{{\bf #1.}}
\newcommand{\kw}[1]{{\mathsf{#1}}\xspace}
\newcommand{\mymin}{\kw{Min}}
\newcommand{\mymax}{\kw{Max}}
\newcommand{\bilin}{{b}}
\DeclareSymbolFont{largesymbols}{OMX}{yhex}{m}{n}
\DeclareMathAccent{\widetilde}{\mathord}{largesymbols}{"65}
\newcommand{\adom}{\kw{Adom}}
\newcommand{\nc}{\newcommand}
\nc{\renc}{\renewcommand}
\nc{\ssec}{\subsection}
\nc{\sssec}{\subsubsection} 
\nc\ol{\overline}
\nc\wT{\widetilde{T}}
\nc\wh{\widehat}
\nc{\Pp}{{\mathbb{P}}}
\nc{\Rr}{{\mathbb{R}}}
\nc{\BV}{{\mathbb{V}}}
\nc{\BW}{{\mathbb{W}}}
\nc{\Zz}{{\mathbb{Z}}}
\nc{\Qq}{{\mathbb{Q}}}
 \nc{\Nn}{{\mathbb{N}}}
\newcommand{\nats}{\mathbb{N}}
\nc{\res}{{\mathop{\operatorname{\rm res}}}}
\def\meet{\cap}
\def\union{\cup}
\def\si{\sigma}
\def\m{\smallsetminus}
\nc{\seq}[1]{\stackrel{#1}{\sim}}
\def\inv{^{-1}}
\nc{\Ff}{{\mathbb{F}}}
\def\eeq{\end{equation}}
\def\prf{\begin{proof}}
\def\pv{\end{proof} }
 \def\eprf{\end{proof} }
 \renc{\b}{{\beta}}
\setlist[itemize]{leftmargin=*}
\begin{document}

 \begin{abstract}
We revisit evaluation of logical formulas that allow both uninterpreted relations, constrained to be finite,
as well as an interpreted vocabulary over an infinite domain. This formalism was denoted \emph{embedded finite model theory} in the past.

It is clear that the expressiveness and evaluating complexity of formulas of this type depends heavily on the infinite structure. If we embed in a wild structure like the integers with additive and multiplicative arithmetic, logic is extremely expressive and formulas are impossible to evaluate. On the other hand, for some well-known decidable structures, the expressiveness and evaluating complexity are similar to the situation without the additional infrastructure. The latter phenomenon was formalized via the notion of ``Restricted Quantifier Collapse'': adding quantification over the infinite structure does not add expressiveness.
Beyond these two extremes little was known.

In this work we show that the possibilities for expressiveness and complexity are much wider.  We show that we can get almost any possible complexity of evaluation while staying within a decidable structure. We also show that in some decidable structures, there is a disconnect between expressiveness of the logic and complexity, in that we cannot eliminate quantification over the structure, but this is not due to an ability to embed complex relational computation in the logic. 

 We show failure of collapse for the theory of finite fields and the related theory of pseudo-finite fields, which will involve coding computation in the logic. As a by-product of this, we establish new lower-bounds for the complexity of decision procedures for several  decidable
theories  of fields, including the theory of finite fields.
 
In the process of investigating this landscape, we investigate several weakenings of collapse,
one allowing higher-order quantification over the finite structure, another allowing expansion
of the theory. We also provide results comparing collapse for unary
signatures with general signatures, and new analyses of collapse for natural decidable theories.

\end{abstract}

\maketitle


\section{Introduction}
This work concerns evaluating logical formulas over an infinite structure. For example, our structure could be
the real ordered field, and we want to evaluate a sentence $\phi$ that mentions the symbols $+,*,<$, and some constants -- say
`` is one element of the set $\{ 1/2, 1/3, 2/7\}$ the square root of another element of the set''.
Many structures exist such that this problem is decidable -- the real field, the complex field, integer arithmetic. And the complexity
ave been studied for many decades -- in the real field case, see for example \cite{bpr}.  In decidable cases, the complexity is generally high:
for example doubly-exponential for the real field.

Rather than analyzing the complexity of an arbitrary sentence as a function of the size of its string representation, one can look at families of sentences parameterized by a finite
relation. Generalizing the example above, we could take a finite set $P$ and ask the statement ``is one element of $P$ the square root of another
element of $P$''. Obviously the truth value will depend upon $P$, and the complexity will be higher the more elements $P$ has.
We can thus analyze the complexity of such a \emph{data-parameterized family} in the size of $P$. Following common practice in computational logic,
we  can denote this as the ``data complexity of the family'', indicating that everything except $P$ is fixed, as opposed to the ``combined complexity'' above, where the structure of the formula varies.

Let us formalize this notion of parameterized family a bit more. We can consider an  arbitrary theory $T$  -- a collection of logical sentences -- in some
language $L$. And we consider formulas over the language $L_V$ expanding
$L$ by predicates in a finite relational signature $V$, with the intention
that the interpretation of the $V$
predicates ranges over finite subsets in a model of $T$. When we talk about an $L_V$ formula, we always mean
a standard first-order formula in this signature.  Based on the intended semantics mentioned above,
two  $L_V$ formulas
are said to be equivalent (modulo $T$) if they agree over all finite interpretations of $V$
in a model of $T$. In past work,
such finite  interpretations are referred to as \emph{embedded finite models} (for $T$): see
\cite{benediktlibkinominimal,benediktsurvey,libkinsurvey}.  In analogy, we refer
to the case where $V=\{P\}$, $P$ a single unary predicate, as an \emph{embedded finite subset}.
Our data-parameterized families referred to informally above are just formulas of $L_V$, and the data complexity corresponds
to the complexity of evaluating a fixed  $L_V$ formula as the size of the $V$ interpretations varies.
There are several ways of measuring ``the size of $V$''. It could be  just the number of tuples in the interpretation, counting each tuple as a cost of one. Or we could  restrict to interpretations where all elements can be presented as strings, and refer to the size of the string encoding.  We defer the formalization of this in the introductory discussion below.

Above we mentioned the high complexity of evaluating logical sentences on a structure, in terms of the size of a sentence.
In contrast, it was discovered in the 1990's that the data complexity of parameterized formulas $L_V$, for common decidable models, is often polynomial in the size of $V$.
The general technique for showing such tractability results is to rewrite the $L_V$ formula to eliminate unbounded quantification over
the infinite models. That is, we transform to 
a special kind of
$L_V$ formula, a first-order formula built up from $L$-formulas
with  quantifiers ranging over elements in $V$: we call these \emph{first-order restricted-quantifier formulas}
or $1$-RQ formulas for short. If we assume that we have eliminated quantifiers in the $L$ formulas, this gives a family of quantifier-free formulas as we vary $V$. Assuming that the model and the interpretations are restricted so that evaluating quantifier-free formulas is tractable, the data complexity of this family will be low.

Prior work has identified conditions in which all sentences of $L_V$ are
equivalent  to $1$-RQ ones, either over all embedded finite models for
a given theory $T$ \cite{3belgians,benediktsurvey,flumziegler,casanovasziegler,belegradekisolation}
or over a particular class of such models  \cite{baldben}. For example, \cite{benediktlibkinominimal}
shows such results for a class of theories containing the real field, \cite{belegradekisolation} extends to a class
containing Presburger arithmetic, while  \cite{flumziegler} gives results for 
a class that includes the complex field.  These are sometimes
referred to as \emph{restricted-quantifier collapse} (RQC) results in the literature. They  show that the additional power
of quantification over an infinite structure in $L_V$ formulas can be ``compiled away''. In the presence of RQC,
  $L_V$ sentences are, in some sense, no more expressive than traditional first-order logic over finite models.
Much of the early work  was motivated from spatial databases, thus there was an emphasis on the case where the infinite universe
is the real numbers \cite{bpr,libkinsurvey}.
The above notion of RQC was for a theory, with the compilation producing an equivalent formula over all models of the theory. We can also say that a \emph{structure} has RQC if every $L_V$ formula is equivalent to a $1$-RQ one over the structure. This
is equivalent to the theory of the structure having RQC.

Previous research focused on the case where RQC holds, and its consequences for the complexity of evaluation.
It is not clear what happens when RQC fails. What can we say about the complexity or the expressiveness of $L_V$ formulas
in this case?  What are the possibilities for the expressiveness of $L_V$ formulas over all models, or over all decidable models?
In this paper we make a step towards a more complete picture, looking at models that do not have RQC, and 
 investigating several weakenings of the notion of 
RQC.

\myparagraph{Higher-order collapse and higher complexity theories}
Our goal is to explore theories which behave worse than RQC, but where we still get some control over expressiveness and complexity of evaluation.
Towards that end, we consider a weakening of RQC where we eliminate unrestricted quantification in favor
of \emph{higher-order quantification} over the finite structure.  
We say that a theory is $k$-RQC if every
$L_V$ formula is equivalent (over all embedded  finite models for $T$) to 
a formula built up from $L$-formulas using $i^{th}$ order quantification over elements in $V$
for $i \leq k$.  Restricted-quantifier collapse from the literature corresponds to $1$-RQC.
We call a theory $\omega$-RQC if every formula can be converted
to a $k^{th}$-order one for some $k$, but not necessarily with a uniform bound on $k$.
It is easy to see that completely computationally badly-behaved structures
like  full integer arithmetic cannot have this property -- indeed, $\omega$-RQC implies that one cannot use first-order formulas to perform a computation on the finite part of the structure that is non-elementary in the size of the structure alone.
We provide examples that show  that the hierarchy of theories with $k$-RQC does not
collapse. Our construction will further show that the complexity of evaluating logical formulas can be as wild as possible.


\myparagraph{Disconnects between collapse and complexity} In our investigation of higher-order collapse, we provide a recipe for generating decidable theories where first-order logic over the ambient structure can capture higher-order logic over the embedded finite structure, and examples where the complexity of evaluation is arbitrarily high. But some theories have a disconnect between expressiveness of the logic and complexity: there are first-order sentences over embedded finite models  which we cannot express in any reasonable logic quantifying over the finite structure alone, but the complexity of evaluation in the theory is still not very high. We give examples where this disconnect can be ``fixed'', by expanding the signature, so that the level of collapse reflects the complexity of evaluation. But we give a natural example, using Boolean Algebras, where this disconnect is fundamental:  the complexity of evaluation is elementary, but we cannot obtain RQC even by extending the signature.

\myparagraph{Notions of collapse and complexity}
Although many of our main results on  weaker notions of collapse are negative -- certain theories do not admit this form of collapse --
we show that our results imply  lower bounds on complexity
of decision procedures for the underlying theory.  

\myparagraph{Evaluating formulas over finite fields} We turn to the question of whether there are  important and natural decidable theories where both the expressiveness of the logic and also the complexity of evaluation is high. We demonstrate this for  the theory of finite fields and for the theory of
pseudo-finite fields \cite{ax}. The former refers to the first-order sentences
which hold in every finite field, while the latter adds on the fact that the model is infinite.
We show that the theory of finite fields and the theory of pseudo-finite fields is  not $\omega$-RQC.
 We use results about (failure of) RQC for  pseudo-fields to give lower bounds
on quantifier-elimination for the theory. We hope this connection between collapse to
restricted-quantifier formulas
and lower bounds can be exploited for other examples.

\myparagraph{Restricting the signatures for the uninterpreted predicates}
Finally, we consider weakening collapse by requiring that it holds only for signatures $L_P$: that is, for embedded finite subsets.
We show that this weakening of $1$-RQC is actually not a weakening at all: it is equivalent to $1$-RQC.
We show that for a variation of RQC studied in past work -- focusing on $L_V$ formulas depending only on the $V$ isomorphism type --
 the arity of relations in $V$ can make a difference.
 
\myparagraph{Our techniques}
The prior results in embedded finite model theory \cite{benediktsurvey,belegradekisolation} use
classical model-theoretic techniques. For example, results that unrestricted quantification can
be eliminated from real closed fields rely on  some very basic  properties of fields, namely o-minimality.
We hope that our work also offers some insight into the use of more recent technical
tools. In investigating finite fields, we naturally make use of the techniques developed for the theory of finite fields over the last decades, including quantifier-elimination and expressiveness results \cite{ax,zoelouangus}. In our results on impact of the signature, we make use of recent results on NIP theories, an area that has developed
rapidly over the last decade:
\cite{simontwovarsip,simon-chernikov-two}. NIP theories have strong links with both machine learning and efficient model checking
\cite{szymonnip}. In investigating weaker notions
of collapse one main tool is a construction of Henson
\cite{henson72}, which has also found other applications in theoretical computer science recently \cite{bodirskyhenson}.
In investigating weakening of collapse via expansion, we make use of results on Ramsey expansions of theories
\cite{definabilitypatterns}.

\myparagraph{Organization} Section \ref{sec:def} reviews the basic definitions 
around embedded finite model theory, and also overviews some older results on the topic.
Section \ref{sec:higher} studies higher order collapse.
Section \ref{sec:robust} investigates whether the failure of collapse
can be fixed by extending the signature.
Section \ref{sec:pseudo} focuses on the case of finite and pseudo-finite fields.
Section \ref{sec:arity} presents our results on collapse in monadic relational signatures vs collapse
over all relational signatures. 
We close with a  discussion of our  results and open questions in Section \ref{sec:conc}.

The body of the paper focuses on results of interest to theoretical computer science.
Some further algebraic examples, some background on a quantifier-elimination result used in the finite field section, along with  supplementary results of a more model-theoretic
nature,   are included the appendix.

\section{Definitions and prior results} \label{sec:def}

Let $T$ be a complete theory in a language $L$.
Fix a finite relational signature  $V$ disjoint from $L$
and let $L_V$ be the union of $L$ by $V$. We write an $L_V$ structure as $(M, I)$.
Given a $V$ structure $I$ , the 
\emph{active domain} of $I$, $\adom(I)$,  is the union of the projections of the interpretations
of symbols in $V$.  Thus if $I$ interprets each relation in $V$ by a finite set of tuples,
$\adom(I)$ is a finite set.
An \emph{$L_V$ formula} always means an ordinary first-order formula in this
signature.
A \emph{first-order restricted-quantifier formula}
(or just ``$1$-RQ formula'') is built up from first-order $L$-formulas and
$V$ atoms by quantifications of the form $\exists x \in \adom~ \phi$ or
$\forall x  \in \adom  ~ \phi$ where $x$ is a variable. The semantics on $L_V$ structure
$(M,I)$ with valuation $\sigma$   is that $\exists x \in \adom~ \phi$ holds when there is $x_0$ in $\adom(I)$ 
such that $\phi$ holds on $\sigma$ extended with $x \mapsto x_0$. The semantics of $\forall x \in \adom ~ \phi$ is given similarly, or by duality.
It is easy to see that these formulas can be translated to special kinds of $L_V$ formulas: we can expand out
$\exists x \in \adom ~ \phi$ into  disjunctions of formulas having the form $\exists x \vec y ~ A(x, \vec y) \wedge \phi$.

We will also focus on the case $V = \{P\}$, abbreviating $L_V$ as $L_P$.
Here active domain quantification is just quantification over $P$, and we write
 $\exists x \in P ~ \phi$ or $\forall x \in P ~ \phi$ for such quantification. We do not need atom
$P(y)$ in the base case of the syntax, since they can be mimiced by quantifications
$\exists x \in P ~ x=y$.

\begin{example} \label{ex:basic}
Let $T$ be the theory of an infinite linear order $<$. If $V$ contains two unary predicates
$P$ and $Q$, then one  $L_{V}$ sentence states that there is a real number above every member of
$Q$ and below every member of $P$:
\[
\exists x ~ (\forall q \in Q ~ x>q) \wedge(\forall p \in P ~ p< x) 
\]
This sentence is not $1$-RQ. But it is equivalent to the following $1$-RQ sentence:
\[
 \forall p \in P ~ \forall q \in Q ~ p>q
\]
\end{example}

Note that in the prior literature many other names are used for these formulas: e.g. ``active domain''
or ``restricted''. In the literature, attention is often focused on
theories with quantifier-elimination, and RQ  formulas are built up only from \emph{atomic
$L$ formulas} rather than arbitrary $L$ formulas. In this work we will not assume quantifier-elimination
in the theory, and thus use the more general definition.

An \emph{embedded finite model} for theory $T$ is a pair $(M,I)$ as above where $M \models T$, and
$\adom(I)$ is a finite subset of the domain of $M$.
An \emph{embedded finite subset} for theory $T$ is just the special case where $V=\{P\}$ with $P$ unary:
A pair $(M, P)$ where $M, \models T$ and $P$ is a finite subset of a domain.
We say two formulas $\phi(\vec x)$ and $\phi'(\vec x)$ of $L_V$
are \emph{equivalent over $T$} if   in every embedded finite model  $(M,I)$ for $T$,
$(M, I) \models \forall \vec x  ~ [\phi \leftrightarrow \phi']$.

We now come to our central definition:
\begin{definition} \label{def:onerqc}
We say that a theory $T$ has $1$-Restricted Quantifier Collapse (``is $1$-RQC'')  if:
 every $L_V$ formula  is equivalent to a $1$-RQ formula.
\end{definition}
In the prior literature $1$-RQC is referred to just as RQC \cite{benediktsurvey}: the 
reason for the prefix ``$1$'' will be clear when we introduce higher-order generalizations below.

\subsection{A little bit of model theory}
We briefly overview some of the model theoretic notions that are relevant to our work.

\myparagraph{Indiscernible sequences}
Let $J$ be a linear-ordered set and $a_i: i \in J$ be a $J$-indexed sequence in an $L$ structure $M$:
an injective function from $J$ into an $L$ structure $M$.
The sequence is \emph{order indiscernible} (or
order $L$-indiscernible, if $L$ is not clear from context) if
for any $k$,  any two $k$ tuples $a_{n_1}  \ldots a_{n_k}$, $a_{n'_1} \ldots a_{n'_k}$ with $n_1 < n_2 \ldots < n_k$
and $n'_1 <n'_2 \ldots <n'_k$,  and any $L$ formula
$\phi(x_1 \ldots x_k)$, $M \models \phi(a_{n_1} \ldots a_{n_k}) \leftrightarrow \phi(a_{n'_1} \ldots a_{n'_k})$. That is, the ordering
on the indices determines the formulas satisfied by elements in the sequence. We will often deal with the case where
the elements are ordered $a_i: i \in \nats$, and refer to this as an \emph{indiscernible sequence}.
Every theory with an infinite model has one
 with an
 indiscernible sequence \cite{hodgesbook}. A \emph{totally indiscernible set} is an infinite
set $A$ in a model where $M \models \phi(\vec a) \leftrightarrow \phi(\vec a')$ for each
$k$-tuple of distinct elements $\vec a, \vec a'$ and each $L$-formula $\phi(x_1 \ldots x_k)$. We often just
talk about an indiscernible set, where the context makes clear whether it is order indiscernible for some ordering
or totally indiscernible. If $\vec d$ is a subset of an $L$-structure $M$, a sequence is
\emph{indiscernible over $\vec d$} if it is indiscernible in the model for the extension of $L$ with constants
interpreted by $\vec d$.

\myparagraph{NFCP theories} A theory $T$ is NFCP\footnote{NFCP stands for ``Not the Finite Cover Property''.  
The reader without a background in model theory will probably
find expanding acronyms of model-theoretic classes
is not insightful,  while for readers with such a background spelling out the acronym is unnecessary.
We thus avoid spelling out  for the other classes (e.g. NIP) below. }if
it satisfies a strong quantitative form of the compactness theorem.
For every $\phi(x_1 \ldots x_j, y_1 \ldots y_k)$ there
is number $n$ such that  for any finite set  $S$ of $k$-tuples in a model $M$ of $T$,
if for every subset $S_0$ of  $S$ of size at most $n$, $M \models \exists \vec x~
\bigwedge_{\vec s \in S_0} \phi(\vec x, \vec s)$, then
$M \models \exists \vec x \bigwedge_{\vec s \in S} \phi(\vec x, \vec s)$.
Examples of NFCP theories include the theory of algebraically closed fields in
each characteristic -- in particular, the theory of the complex numbers.

NFCP theories are inherently \emph{unordered}: they do not admit a definable linear order.
In fact, a much stronger statement is true: NFCP theories are \emph{stable}, informally meaning
that there is no formula $\phi(\vec x, \vec y)$, such that $\phi$ restricted to arbitrarily large 
finite sets of tuples defines a linear order.

It is known that there are very basic stable theories that are not $1$-RQC. One of them
is the canonical example of a theory without NFCP, which will be particularly relevant to our
discussion.
\begin{example} \label{ex:fcp}
Let $L=\{E(x,y)\}$ and $T$ be the $L$-theory stating that $E$ is an equivalence
relation with classes of each finite size. Consider the $L_P$ formula $\phi_{\contained}$ stating
``some equivalence class is contained in $P$''. It is easy
to show that this is not equivalent to any $1$-RQ formula. Informally
 with a $1$-RQ formula, all we can say about a finite set $P$ that lies within a single equivalence
class are Boolean combinations of cardinality bounds: $|P|\geq k$ for fixed $k$.
\end{example}

\myparagraph{O-minimal theories}
We now turn to a model theoretic tameness property for \emph{linear ordered} structures.
Consider a theory $T$ with a relation $<(x,y)$ such that $T$ implies that $<$ is a linear order.
Such a $T$ is \emph{o-minimal} if for every $\phi(x, \vec y)$ for every model $M$
of $T$ and any $\vec c$ in $M$, $\{x | M \models \phi(x, \vec c)\}$ is a finite
union of intervals.
The real ordered group, real ordered field,  and the real exponential field
are all o-minimal \cite{ominimal}. Since no NFCP theory has a definable linear order, NFCP and o-minimal are
disjoint.

\myparagraph{NIP theories}
The most important class we deal with here are NIP theories \cite{simonnip}.
Given  $\phi(x_1 \ldots x_j, y_1 \ldots y_k)$, and a finite set of $j$-tuples $S$ in a model
$M$ we say that  $S$ is \emph{shattered} by $\phi_{\vec y}$ if for each subset $S_0$ of $S$ there
is $\vec y_0$   such that $S_0=\{\vec s \in S| M \models \phi(\vec s, \vec y_0)\}$.
A theory $T$ is NIP if for each formula $\phi$ and each partition of the free variables into
$\vec x, \vec y$, there is a number that bounds the  size of a set
shattered by $\phi_{\vec y}$ in a model of $T$.
NIP theories include both ordered and highly unordered structures. Specifically, they contain
o-minimal structures, Presburger arithmetic, as well as all stable structures, and hence all NFCP structures.
NIP can also be rephrased in terms of the well-known notion of VC dimension in learning theory.
Every partitioned $L$ formula $\phi(\vec x, \vec y)$ gives a family of subsets of the $j$-tuples in a model, as we vary $\vec y$. NIP can be restated as asserting
that for every $\phi$, the corresponding family has finite VC dimension \cite{laskowski}.

\myparagraph{Sufficient conditions for $1$-RQC}
In the setting of ``unordered'' structures, the main known example of $1$-RQC comes from NFCP theories:
\begin{theorem}  \cite{flumziegler} 
Every NFCP theory is $1$-RQC.
\end{theorem}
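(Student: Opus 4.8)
The plan is to convert an arbitrary $L_V$ formula to a $1$-RQ one by induction on structure, reducing everything to the elimination of a single unrestricted existential quantifier. Since $1$-RQ formulas are closed under Boolean connectives and under restricted quantification, and every $L$-formula and $V$-atom is $1$-RQ, the only nontrivial inductive step is: if $\theta(x,\vec w)$ is $1$-RQ, then $\exists x~\theta(x,\vec w)$ (with $x$ ranging over all of $M$) is equivalent over $T$ to a $1$-RQ formula. First I would put $\theta$ in a normal form: all of its restricted quantifiers can be pulled to the front, since the only other quantifiers occur inside the $L$-subformulas, which are treated as atomic. Then I would split on whether the witness $x$ lies in $\adom$. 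The case $x\in\adom$ is already a restricted quantifier, hence $1$-RQ. For $x\notin\adom$ every $V$-atom mentioning $x$ is false, so after substituting $\bot$ for these atoms we reduce to the essential case in which $x$ enters $\theta$ only through finitely many $L$-formulas $\phi_1(x,\vec y_1),\dots,\phi_p(x,\vec y_p)$ whose remaining arguments range over $\adom$ (via the front restricted quantifiers) and over the free variables $\vec w$ (modulo the bookkeeping that the witness be found outside the finite active domain, itself a restricted condition).

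Next I would isolate the semantic content. Writing $B$ for the finite set $\adom\cup\vec w$, the truth of $\theta(x,\vec w)$ depends on $x$ only through its \emph{$\{\phi_i\}$-type over $B$}: the record of which instances $\phi_i(x,\vec b)$, for $\vec b$ a tuple from $B$, hold. Hence $\exists x~\theta(x,\vec w)$ holds iff there is a $\{\phi_i\}$-type over $B$ that is (i) realized in $M$ and (ii) \emph{good}, i.e.\ makes the (now purely Boolean, once the signs are fixed) restricted formula $\theta$ evaluate to true. Condition (ii) is manifestly a restricted-quantifier condition on the chosen type together with the $V$-diagram. The whole difficulty is condition (i): realizability is an assertion about all of $M$, and it is here, and only here, that NFCP must be used.

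The key point is that NFCP turns realizability into a restricted-quantifier condition with a \emph{uniform} bound. For each $\phi_i$, NFCP supplies a number $n_i$ such that a system of instances $\{\,\varepsilon_j\,\phi_i(x,\vec b_j)\,\}$ (with signs $\varepsilon_j$) is satisfiable in $M$ exactly when each of its subsystems of size $\le n_i$ is; and each bounded subsystem is satisfiable exactly when the single $L$-formula $\exists x\,\bigwedge_{j}\varepsilon_j\phi_i(x,\vec b_j)$ holds of its parameters. The cleanest illustration is the pure-universal case
\[
\exists x\,\forall u\in\adom\;\phi(x,u)\;\equiv\;\forall u_1\in\adom\cdots\forall u_n\in\adom\;\exists x\,\bigwedge_{j\le n}\phi(x,u_j),
\]
whose right-hand side is $1$-RQ because the inner $\exists x\,\bigwedge_j\phi(x,u_j)$ is an ordinary $L$-formula in the active-domain parameters $u_1,\dots,u_n$. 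Thus ``the type is realized'' becomes a conjunction, over all choices of $\le\max_i n_i$ active-domain parameters, of $L$-formulas --- a block of restricted universal quantifiers in front of an $L$-formula. Crucially the bound $\max_i n_i$ does not depend on $|\adom|$, which is exactly the uniformity that a single $1$-RQ formula requires and that the finite-cover-property example (\exref{ex:fcp}) lacks.

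Assembling (i) and (ii) yields the desired $1$-RQ formula, and I expect the combinatorial bookkeeping of this assembly to be the main obstacle. The clean illustration bundles the matrix into one formula $\phi$, but for general $\theta$ the restricted quantifiers may alternate and the matrix may be an arbitrary Boolean combination of several $\phi_i$, so one cannot in general fold everything into a single formula and apply NFCP once. The honest way around this is to use that NFCP implies stability, so that over a finite set $B$ there are only polynomially many $\{\phi_i\}$-types and these types are \emph{uniformly definable} with parameters drawn from $B$ itself; quantification over ``realized good types over $\adom$'' can then be simulated by restricted quantifiers that select active-domain parameters naming each candidate type. A secondary point worth stressing in the write-up is why a purely semantic compactness argument does not suffice: embedded finite models do not form an elementary class (finiteness of the active domain is not first-order and is destroyed by ultraproducts), so one cannot simply separate ``true'' from ``false'' complete $1$-RQ types by compactness --- it is precisely NFCP's uniform bound that substitutes for the missing compactness.
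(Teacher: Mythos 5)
The paper does not prove this theorem; it is quoted from \cite{flumziegler} without proof, so there is no in-paper argument to compare against. Judged on its own terms, your plan is sound and is essentially the standard route to collapse results: induct down to the elimination of a single unrestricted $\exists x$, observe that after prenexing the restricted quantifiers and killing the $V$-atoms containing $x$ (legitimate once you split on $x\in\adom$, which is itself the $1$-RQ formula $\exists u\in\adom~u=x$), the truth of the matrix depends on $x$ only through its $\{\phi_i\}$-type over the finite set $B=\adom\cup\vec w$; then NFCP supplies exactly the uniform bound that turns ``this type is realized'' into a block of restricted universal quantifiers followed by an $L$-formula, and naming candidate types by active-domain parameters via uniform definability of local types over finite sets makes ``goodness'' restricted as well. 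Your identification of where NFCP is indispensable (the displayed pure-universal equivalence, and the contrast with Example~\ref{ex:fcp}) is exactly right, as is your remark that compactness alone cannot substitute for it. Two points deserve explicit treatment in a full write-up. First, NFCP as defined applies to positive conjunctions of instances of a \emph{single} partitioned formula, whereas a type over $B$ mixes several $\phi_i$ with both signs; you must code index and sign into extra parameter slots (using two elements of $B$ as flags, with degenerate cases for $|B|<2$) so that one application of NFCP to one formula yields one bound $n$. Second, your appeal to uniform definability of $\phi_i$-types over finite sets with parameters in the set is legitimate but should be sourced: NFCP implies stability, hence NIP, and the Chernikov--Simon result \cite{simon-chernikov-two} then applies --- this is precisely the device the paper itself deploys in the proof of Theorem~\ref{thm:rqc1bounded}, so your reconstruction is anachronistic relative to the original Flum--Ziegler argument but entirely valid. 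Neither point is a gap in the mathematics, only in the bookkeeping you already flagged.
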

Note that this includes the case of pure equality, which was known very early in database theory \cite{hullsunatact}.
And this result implies that the field of complex numbers is $1$-RQC.

On the side of ``ordered structures'', a
model-theoretic sufficient condition concerning $1$-RQC involves o-minimal theories:

\begin{theorem} \cite{benediktlibkinominimal} Every o-minimal theory is $1$-RQC.
\end{theorem}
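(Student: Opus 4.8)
The plan is to prove the statement by induction on the structure of the $L_V$ formula, the only nontrivial case being the elimination of a single unrestricted existential quantifier applied to a formula that is already $1$-RQ (the universal case following by duality). So suppose $\psi(x,\vec z)$ is $1$-RQ; I want to show that $\exists x\,\psi(x,\vec z)$ is equivalent over $T$ to a $1$-RQ formula. The Boolean cases and the active-domain quantifier cases of the induction are immediate, since $1$-RQ formulas are closed under Boolean operations and under active-domain quantification.

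First I would fix an embedded finite model $(M,I)$, write $A=\adom(I)$, and fix a valuation $\vec c$ of the free variables $\vec z$ (these range over all of $M$ and are simply carried along as free parameters in the final formula). Expanding each bounded active-domain quantifier occurring in $\psi$ into a finite disjunction or conjunction indexed by $A$, the set $S=\{x\in M : (M,I)\models\psi(x,\vec c)\}$ becomes a fixed Boolean combination of $L$-definable sets of the form $\alpha_j(x,\vec b)$, where each parameter tuple $\vec b$ is drawn from $A\cup\vec c$ and the $\alpha_j$ range over the finitely many $L$-subformulas occurring in $\psi$.

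Next I would invoke o-minimality in its uniform form. By o-minimality each $\alpha_j(x,\vec b)$ is a finite union of intervals, and---uniformly in $\vec b$---there are finitely many $L$-definable partial functions whose values supply all of the endpoints (this uniform definability of boundary points is the standard consequence of the Cell Decomposition and Monotonicity theorems together with compactness, which also yields a uniform bound on the number of intervals). Consequently $S$ is itself a finite union of intervals, and all of its endpoints lie in a finite set $E$ obtained by applying these fixed $L$-definable functions to tuples from $A\cup\vec c$. The crucial structural point is that $E$ is \emph{active-domain definable}: membership in $E$ and the ordering of $E$ are governed by $L$-conditions on tuples ranging over $A\cup\vec c$.

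Finally I would extract the representative points. The formula $\exists x\,\psi$ holds iff $S$ is nonempty, and since $S$ is a finite union of intervals with endpoints in $E$, this happens iff either some point of $E$ lies in $S$, or the open interval between some pair of consecutive points of $E$ (or one of the two unbounded gaps) is contained in $S$. Testing an endpoint is $1$-RQ: one quantifies the generating tuple over $\adom$, substitutes the $L$-definable endpoint function into the $L$-subformulas of $\psi$, and keeps $\vec c$ free. The main obstacle is the second alternative---asserting that a ``generic'' point of a gap satisfies $\psi$ without an unrestricted quantifier. Here I would argue that $\psi(x,\vec c)$ has constant truth value on each open interval between consecutive points of $E$, and that this constant value is an $L$-definable condition on the two endpoints, on $\vec c$, and on the boundary-generating parameters. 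Verifying this requires tracking the boundedly many order-configurations of $E$ (the uniform bound on the number of intervals keeps these finite) and checking that on every such cell the value of $\psi$ is dictated by an $L$-formula; this bookkeeping over how elements of $A$ generate and order the endpoints is where o-minimality's uniform finiteness does the real work, and is the step I expect to be the most delicate.
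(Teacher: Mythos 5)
Your outline is, in essence, the argument of the cited reference (the paper itself states this theorem only as a citation and gives no proof): induction eliminating one unrestricted existential at a time, uniform finiteness of the interval decomposition, definable endpoint functions applied to active-domain tuples together with the free parameters, and a case split between endpoints and gaps. So the approach is the right one and the skeleton is correct.

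The one step you flag as delicate --- certifying that a gap is contained in the solution set without an unrestricted quantifier --- has a cleaner, compositional resolution than the bookkeeping over order-configurations of $E$ that you anticipate. Since every boundary point of every set $\alpha_j(\cdot,\vec b)$ occurring in the expansion of $\psi$ lies in $E$, each such set is constant on each gap, hence so is $\psi(\cdot,\vec c)$; therefore it suffices to test $\psi$ ``just to the right of $e$'' for each $e\in E\cup\{-\infty\}$. This test is obtained from $\psi$ by replacing each $L$-subformula $\alpha_j(x,\vec b)$ by the $L$-formula $\exists\delta\,(\delta>e\wedge\forall x\,(e<x<\delta\rightarrow\alpha_j(x,\vec b)))$ while leaving the active-domain quantifiers of $\psi$ untouched; the result is a legitimate $1$-RQ formula because this paper's definition of restricted-quantifier formulas admits arbitrary $L$-formulas, not just atoms, as building blocks. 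In particular you need neither a definable choice of a point inside the gap (which a general o-minimal theory need not supply) nor any enumeration of configurations, and you do not even need to express consecutiveness in $E$: nonemptiness of $S$ is equivalent to $\psi$ holding at some $e\in E$ or at $e^{+}$ for some $e\in E\cup\{-\infty\}$. Two minor caveats: the uniform bound on the number of intervals and the definability of the endpoint functions use that o-minimality holds throughout the elementary class (the standard uniform finiteness theorem), which is available here since the hypothesis is on the theory; and the gap argument as written presupposes a dense order, with discretely ordered o-minimal theories handled separately via successor points.
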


\cite{belegradekisolation} showed $1$-RQC for an even broader class, what today
are known as \emph{distal theories}. We will not need the definition here,
but it subsumes $o$-minimal theories, Presburger arithmetic, and the theory
of the infinite tree, while being contained in NIP. Thus  in particular
\emph{the real closed field and Presburger arithmetic have $1$-RQC}.

An easy observation is that NIP is  necessary for $1$-RQC:

\begin{proposition} \label{prop:rcqnip} \cite{benediktsurvey} If $T$ is $1$-RQC, then $T$ is NIP.
\end{proposition}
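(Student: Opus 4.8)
The plan is to prove the contrapositive: assuming $T$ is not NIP, I will exhibit an $L_V$ sentence that is not equivalent over $T$ to any $1$-RQ sentence. By hypothesis some $L$-formula $\phi(\vec x,\vec y)$ has infinite VC dimension, so by Sauer--Shelah together with compactness there is, in a sufficiently saturated model of $T$, an $L$-indiscernible sequence $(\vec a_i : i\in\nats)$ and a single tuple $\vec b$ with $M\models\phi(\vec a_i,\vec b)$ exactly when $i$ is even. This ``infinite alternation of a single instance along an indiscernible sequence'' is the standard reformulation of the independence property, and it is the feature I will use: in an NIP theory every single-parameter trace on an indiscernible sequence is a union of boundedly many intervals, whereas here the trace cut out by $\vec b$ has unboundedly many blocks.

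Next I would record the key insensitivity of restricted-quantifier formulas. Evaluate an arbitrary $1$-RQ sentence $\theta$ on embedded finite models whose active domain is an initial segment $\{\vec a_0,\dots,\vec a_{m-1}\}$ of the sequence, equipped with a $V$-structure consisting of the induced order $<$ on indices and a marking predicate $S$. Because the sequence is $L$-indiscernible, the $L$-type of any tuple drawn from $\adom$ depends only on the order type of its indices; hence every $L$-formula occurring in the base case of $\theta$ is, on these models, equivalent to an $<$-formula. Pushing this through the restricted quantifiers (which simply range over the $m$ indices), $\theta$ is equivalent over this class of models to an ordinary first-order sentence over the finite structure $(\{0,\dots,m-1\},<,S)$, of quantifier rank bounded by a function of $\theta$ alone. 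Now consider the $L_V$ sentence $\Phi := \exists\vec y~\forall\vec x\in\adom~(S(\vec x)\leftrightarrow\phi(\vec x,\vec y))$, asserting that $S$ is cut out of $\adom$ by a single $\phi$-instance; its outer $\exists\vec y$ ranges over all of $M$, so it is not a priori $1$-RQ. On the models above $\Phi$ holds precisely when $S$ is a realizable single-parameter $\phi$-trace, and by the alternation witnessed by $\vec b$ these traces include sets with arbitrarily many blocks. The contradiction comes from an Ehrenfeucht--Fra\"iss\'e argument on linear orders: for each rank $d$ choose a length $m$ and markings $S_1,S_2$ that are indistinguishable by rank-$d$ sentences of $(\{0,\dots,m-1\},<,S)$ — hence by every $1$-RQ sentence of rank $\le d$ — yet with $S_1$ a $\phi$-trace and $S_2$ not, so that $\Phi$ separates them; taking $d$ above the fixed rank of $\Phi$ refutes $\Phi\equiv_T\theta$ for every $1$-RQ $\theta$.

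The main obstacle is precisely this last step, for two linked reasons. First, $\Phi$ must be kept genuinely non-trivial: if the indiscernible sequence happened to be fully shattered then every $S$ would be a trace and $\Phi$ would collapse to $\top$, so the sequence must be chosen (as an Ehrenfeucht--Mostowski sequence for $\phi$) so that its single-parameter traces are complex — unboundedly many blocks — without exhausting all of $2^{\{0,\dots,m-1\}}$. Second, one must match the combinatorics: the pair $S_1,S_2$ delivered by the order-logic game must be arranged to fall respectively inside and outside the realizable-trace family, so that the coarse block-counting available to bounded-rank order logic (which cannot tell $S_1$ from $S_2$) is provably weaker than what $\phi$'s independence property makes available to $\Phi$. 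Verifying that such a separating pair exists — equivalently, that the realizable-trace family is not itself captured at any fixed quantifier rank over $(<,S)$ — is the crux, and is exactly where the failure of NIP is consumed.
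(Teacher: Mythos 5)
Your reduction of $1$-RQ sentences to bounded-rank first-order sentences over $(\{0,\dots,m-1\},<,S)$ is sound and is the standard move (it is essentially how the paper handles the restricted side, cf.\ the proof of Theorem~\ref{thm:bbiso}). The gap is in the step you yourself flag as the crux, and it cannot be repaired in the form you propose. For an \emph{indiscernible} sequence $(\vec a_i)$ there is a strict dichotomy: either the number of alternations of the traces $\{i : \phi(\vec a_i,\vec b)\}$ is bounded uniformly in $\vec b$, or \emph{every} subset of every finite initial segment is a trace. Indeed, if some $\vec b$ realizes a pattern with $r$ alternations on some tuple of indices, then for any $A\subseteq\{0,\dots,m-1\}$ with at most $r$ alternations one chooses an order-preserving reindexing matching $A$ to that pattern; indiscernibility transfers the witness, so $A$ is a trace. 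Since your hypothesis (failure of NIP) puts you in the second horn — the alternating witness $\vec b$ forces full shattering of every initial segment — the family of realizable single-parameter traces is all of $2^{\{0,\dots,m-1\}}$, your sentence $\Phi=\exists\vec y\,\forall\vec x\in\adom\,(S(\vec x)\leftrightarrow\phi(\vec x,\vec y))$ is true on every embedded finite model drawn from the sequence, and it separates nothing. There is no Ehrenfeucht--Mostowski sequence with traces that are ``complex but not exhaustive'': that middle ground is excluded by indiscernibility. Nor can you evade this by leaving the indiscernible sequence, since that is exactly what makes the $L$-part of a $1$-RQ sentence collapse to order logic.

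The paper's argument (given for the stronger Theorem~\ref{thm:forall1boundednip}, which implies this proposition) consumes the independence property differently: rather than asking whether $S$ \emph{is} a trace, it exploits the fact that every subset is a trace to let the unrestricted quantifier $\exists b$ select an ``alternating'' subset of a set that has been equipped, via further instances of the IP formula with parameters, with definable linear orders and a definable bijection between two halves. This lets an $L_P$ sentence express $|P|\equiv 0 \pmod 4$ on a suitable class of finite sets, and the Ehrenfeucht--Fra\"iss\'e argument is then run against cardinality mod $4$ over two linear orders with a bijection — a property genuinely outside bounded-rank order logic. If you want to salvage your outline, the fix is to replace ``$S$ is a realizable trace'' by a counting property of this kind that full shattering makes definable with one unrestricted quantifier; the construction of the definable order from the IP formula is the nontrivial part you would still need to supply.
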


We will strengthen this result in  Theorem \ref{thm:forall1boundednip} below.

We mentioned previously that Example \ref{ex:fcp} is stable, hence NIP. This is true for the theory of
any equivalence relation. Since Example \ref{ex:fcp} is not $1$-RQC, we see that NIP theories do  not necessarily have $1$-RQC.
However, results from \cite{baldben} show that in NIP theories $T$, we can eliminate unrestricted quantifiers for
$L_V$ sentences that are \emph{isomorphism-invariant}, where this means that \emph{their truth only depends on the
isomorphism type of the $V$ predicates}. 

\begin{theorem} \label{thm:bbiso} \cite{baldben} In every NIP theory, there is a model containing an infinite
set $I$ such that for every $\isoinv$ $L_V$ formula $\phi$, there is a $1$-RQ formula $\phi'$
such that $\phi$ and $\phi'$ are equivalent, over embedded finite structures in the model where the domain of
the $V$ structure lies  within $I$.
\end{theorem}
The result in \cite{baldben} is stated a bit differently: we explain how to bridge the gap between them in the appendix.

 This result limits what can be expressed by a sentence in $L_V$ whose truth value depends
only on the isomorphism type of the $V$ structure.
Isomorphism invariance allows us to assume that the embedded finite structure
is based on  a set of indiscernibles for the model, and on such a set we can replace all
$L$-formulas by
a linear order representing the order on indiscernibles. From this we can see that  $\exists$ $1$-RQC implies that isomorphism-invariant sentences are 
expressible in 
\emph{order-invariant first-order logic}. Syntactically, this is
 first-order logic over a relational signature $V$ augmented with
an additional symbol $<$, with the additional semantic requirement
that the result of $Q$ is the same whenever $<$ is interpreted  by a linear order on the domain of
the finite structure, regardless of how the elements are ordered. For more 
on the expressiveness of order-invariant FO on finite models, see   \cite{nicoleorderinvariant}.

Thus Theorem  \ref{thm:bbiso} and known limitations of FO on finite structures imply the following:

\begin{corollary} \label{cor:nipordinv} In an NIP theory any isomorphism-invariant $L_V$ sentence
is expressible in order-invariant FO over the $V$ vocabulary.
Hence no $L_P$ sentence can express  a property of $|P|$ that holds
on  infinitely and co-infinitely many cardinalities.
And no $L \cup \{G\}$ sentence can express that $G$ is connected. 
\end{corollary}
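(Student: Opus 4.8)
The plan is to reduce the whole statement to Theorem~\ref{thm:bbiso} together with standard finite-model-theoretic facts about FO over linear orders; the substance is the opening sentence of the corollary, after which the two concrete non-expressibility claims follow by quoting known limitations of order-invariant FO. Let $\psi$ be an isomorphism-invariant $L_V$ sentence and let $T$ be NIP. By Theorem~\ref{thm:bbiso} I fix $M \models T$ and an infinite $S_0 \subseteq M$ such that, over embedded finite models whose active domain lies in $S_0$, $\psi$ is equivalent to some $1$-RQ sentence $\phi$. First I would refine $S_0$ to an infinite order-indiscernible sequence $(a_i : i \in \nats)$ that is still a subset of $S_0$: enumerating the $L$-formulas and iterating the infinite Ramsey theorem along a nested sequence of infinite subsets of $S_0$, then diagonalizing, produces such a sequence, and since it lies inside $S_0$ the equivalence $\psi \equiv \phi$ is preserved on every embedded finite model whose active domain is drawn from it.

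Next I would translate $\phi$ into a sentence over the relational signature $V \cup \{<\}$. On the models in question, every $L$-subformula of $\phi$ is instantiated only at elements of the indiscernible sequence, and the $L$-type of a tuple $(a_{i_1},\dots,a_{i_m})$ is determined by the order-and-equality pattern of the indices; replacing each $L$-formula by the corresponding formula in $\{<,=\}$ (with $<$ interpreted as the indexing order on the active domain) turns $\phi$ into a first-order sentence $Q$ over $V \cup \{<\}$. The key move, and the step I expect to be the main obstacle, is to extract \emph{order-invariance} of $Q$ from \emph{isomorphism-invariance} of $\psi$. Any finite $V$-structure $\mathcal{A}$ embeds into the indiscernibles, and each embedding $e$ yields an embedded finite model realizing the isomorphism type of $\mathcal{A}$ while inducing a linear order $<_e$ on its domain; since $\psi$ is isomorphism-invariant its value is the same for all $e$, whereas by construction $Q$ evaluates to this common value on $(\mathcal{A},<_e)$. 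As every linear order on $\mathcal{A}$ arises from some embedding, $Q$ returns the same answer under all orders, so $Q$ is order-invariant and defines exactly the class of $V$-structures picked out by $\psi$. This establishes the first sentence of the corollary.

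For the first consequence I take $V=\{P\}$: a property of $|P|$ is isomorphism-invariant, so by the above it is given by an order-invariant FO sentence over $\{P,<\}$ in which $P$ is the whole active domain, hence by a plain FO sentence over finite linear orders. An Ehrenfeucht--Fra\"iss\'e argument shows that any such sentence of quantifier rank $k$ has constant truth value on all orders of length at least $2^k$, so it defines a finite or cofinite set of lengths; a property holding on infinitely and co-infinitely many cardinalities is neither, giving the contradiction. For connectivity I take $V=\{G\}$ binary: connectivity is isomorphism-invariant, so were it $L\cup\{G\}$-expressible it would be order-invariant FO definable over $\{G,<\}$, contradicting the known fact (see \cite{nicoleorderinvariant}) that graph connectivity is not order-invariant FO definable, for instance because order-invariant FO is Gaifman/Hanf-local while connectivity is not.
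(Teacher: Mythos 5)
Your overall route is the same as the paper's: derive the corollary from Theorem~\ref{thm:bbiso} by working over an indiscernible witness set, replace $L$-formulas by order-and-equality formulas on the indices, extract order-invariance from isomorphism-invariance via the observation that every linear order on a finite $V$-structure is induced by some embedding into the indiscernibles, and then quote standard finite-model-theoretic limitations (Ehrenfeucht--Fra\"iss\'e for cardinality properties, locality for connectivity). The order-invariance step, which you correctly identify as the crux, is handled exactly as the paper intends, and the two applications are fine.

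The one step that does not work as written is the extraction of an infinite order-indiscernible subsequence from the set $S_0$ supplied by Theorem~\ref{thm:bbiso} by iterated Ramsey plus diagonalization inside the same model. First, the diagonal sequence you describe is only \emph{eventually} indiscernible: for the $j$-th formula only the tail from index $j$ onward is controlled, since the earlier diagonal elements need not lie in the $j$-th monochromatic set. Second, and more fundamentally, an infinite subset of a model need not contain \emph{any} infinite order-indiscernible subset: in $(\nats,<)$ no infinite subset $a_1<a_2<\cdots$ is order-indiscernible, because the formula ``there are exactly $m$ elements strictly between $x$ and $y$'' holds of $(a_1,a_2)$ but not of $(a_1,a_3)$ for $m=a_2-a_1-1$. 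The standard construction of indiscernibles uses Ramsey only to pin down the Ehrenfeucht--Mostowski type and then compactness to realize the sequence in an elementary extension, not inside $S_0$. The repair is immediate: the result of \cite{baldben} underlying Theorem~\ref{thm:bbiso} (stated as Theorem~\ref{thm:bbreal} in the appendix) is already proved for an order-indiscernible set $I$ in some model of $T$, so take $S_0$ to be such a set from the outset; since equivalence of $L_V$ sentences over embedded finite models is determined by the complete theory $T$ rather than by a particular model, working in whichever model carries the indiscernible sequence loses nothing. With that substitution the rest of your argument goes through and coincides with the paper's.
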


\myparagraph{\fraisse  Limits}
We will make use of a well-known model-theoretic construction known as the
 \fraisse  limit. We  apply it to
 a  class $C$ of isomorphism types of finite structures in a finite relational signature
that satisfy some closure properties: the Joint Embedding Property and the 
Hereditary Property. 
 Fra\"iss\`e's theorem
states that for any such class $C$ there is a countably infinite model such that  $C$ is the class of (isomorphism types) of finite substructures of $M$, and which is \emph{homogeneous}: in our setting, this means that every isomorphism between finite substructures of the model extends to an isomorphism
of the whole model.
Furthermore, if $C$ satisfies another closure property 
-- the Amalgamation Property --  then there is a unique such
structure.  One canonical
example  is where $C$ is just the class of graphs. In this case, the  \fraisse limit is the 
\emph{random graph}, whose theory is axiomatized by the \emph{extension axioms} stating that for every finite graph $G$ and
every supergraph $H$ of $G$, any embedding of $G$ in the model  extends  to an embedding of $H$.
Another examples is where $C$ is the class of finite linear orders, where the  \fraisse  limit is simply a countable dense linear order without endpoints. We will not need the precise
definitions of the closure properties above, but
see, e.g. \cite{hodgesbook}.


\section{Higher-order collapse and high complexity evaluation} \label{sec:higher}
How can we obtain structures where evaluating a first-order logic formula over the model is more complicated than in $1$-RQC structures, but not completely wild as in integer arithmetic?

We will consider a first natural weakening of RQC by loosening the notion of restricted-quantifier formula, allowing higher-order quantification
over the active domain.
We  define higher-order sorts by starting with some base sort $B$ and closing
under tupling and power sets. The order of a sort is the maximum number of nested powersets:
e.g. a sort $\powerset(\powerset(B) \times B)$  has order $2$. The first
component is a set of elements of base sort and the second is an element of base sort.
Given a finite set $P_0$, and a variable  of sort $\tau$, the valid valuations of the variables are defined
in terms of the hierarchy over $P_0$: for a variable $X$ of sort  $\powerset(\powerset(B) \times B)$,
a valid valuation of $X$ assigns  to it sets of pairs, where the first
component is a set of elements of $P_0$ sort and the second is an element of $P_0$.

Given relational schema $V$ disjoint from
$L$, the \emph{higher-order restricted-quantifier formulas}  over $L \cup V$ are built up
from $L$ formulas, atomic $V$ formulas, and atoms $X(u_1 \ldots u_m)$ where $X$ and each
$u_i$ are higher-order variables. The inductive definition is formed
by closing under the (implicitly) restricted quantifications $\exists X$ and $\forall X$ for
$X$ a higher-order variable.   The semantics on embedded finite model $(M,I)$
is in terms of valuations  over $\adom(I)$. 
For $x$ of sort $B$ we write out $\exists x ~ \phi$ as $\exists x \in \adom ~ \phi$, in order to make the bounding
explicit in the syntax and to be consistent
with our definition of $1$-RQ formulas. Similarly for universal quantifiers restricted to the active domain.
 For $X$ of sort  $\powerset(B)$, we write 
$\exists X ~ \phi$ as $\exists X \subseteq \adom ~ \phi$, again to make the bounding explicit. And similarly for universal
quantification and at higher orders.
In the case where $V=\{P\}$, $P$ 
a unary predicate, we
can simplify the restricted existential second-order quantifier as
$\exists X \subseteq P ~ \phi$.
For a number $k$, a \emph{$k$-restricted-quantifier formula} or just $k$-RQ formula is one where the variables are of order
at most $k$.

\begin{example}
Let $L=G(x,y)$ and $V=\{P(x),Q(x)\}$.
If $X$ is a variable of sort $\powerset(B)$  then 
$\forall S \subseteq P  ~ \exists x\in Q ~ \forall y \in P ~(G(x,y) \leftrightarrow S(x))$ is a $2$-RQ formula. Informally, it says that using $G$ and elements
of $Q$, we can pick out any subset of $P$.
\end{example}

We are now ready to give our first weakening of $1$-RQC:
\begin{definition} \label{def:krqc}
A theory is \emph{$k$-RQC} if for every finite relational
signature $V$, every  $L_V$ sentence is equivalent
to a $k$-RQ one. 
\end{definition}
Higher-order RQC has not been explored in any depth. But
there is one simple observation in the literature:
(see \cite{libkinsurvey}).
\begin{proposition} The random graph is $2$-RQC but not $1$-RQC.
\end{proposition}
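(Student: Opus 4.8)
The plan is to establish the two halves separately. For the non-collapse half, note that the random graph $T$ is NIP-free: the edge relation $G(x,y)$ is an unstable formula, and in fact one can shatter arbitrarily large sets using the extension axioms. Concretely, given any finite set $S$ of would-be vertices and any subset $S_0 \subseteq S$, the extension axioms guarantee a vertex $y_0$ in the model adjacent to exactly the elements of $S_0$ and to none of $S \setminus S_0$; hence $G_{\vec y}$ shatters $S$. Since $S$ was arbitrary, there is no finite bound on the size of a shattered set, so $T$ has the independence property. By \propref{prop:rcqnip}, a $1$-RQC theory must be NIP, so the random graph is not $1$-RQC.

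For the positive half, I would show that the random graph is $2$-RQC by arguing that any $L_V$ formula can be evaluated using only second-order quantification over $\adom(I)$. The key observation is that, by the extension axioms, the existential type of any tuple over the active domain is fully determined by which edges it has to active-domain elements, i.e.\ by its \emph{trace} on $\adom(I)$. More precisely, an unrestricted quantifier $\exists z~\psi(z,\ldots)$ ranges over elements $z$ of the infinite model; each such $z$ induces a subset $S_z = \{a \in \adom(I) : G(z,a)\}$ of the active domain together with a finite amount of data recording edges among any auxiliary witnesses. Because the random graph is homogeneous and satisfies the extension axioms, \emph{every} subset $S \subseteq \adom(I)$ (in fact every consistent configuration outside the active domain) is actually realized by some vertex of the model. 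Thus the unrestricted quantifier $\exists z$ can be faithfully replaced by a second-order quantifier $\exists S \subseteq \adom$ ranging over the possible traces, with the truth of $\psi$ depending only on $S$ and on the restricted (first-order) quantification over $\adom(I)$.

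Carrying this out, I would proceed by induction on the number of unrestricted quantifiers in the $L_V$ formula. First I would put the formula in a normal form where each unrestricted quantifier is innermost relative to some block; then, using quantifier elimination for the random graph (the extension axioms give QE down to Boolean combinations of edge and equality atoms), I would express the $L$-part of the formula in terms of the edge relation among the relevant variables. The heart of the translation is the step that converts one unrestricted $\exists z$ into $\exists S \subseteq \adom$ together with first-order restricted quantification capturing the atomic relationships; the extension axioms are exactly what certify that the second-order quantifier ranges over precisely the realizable configurations, so no spurious witnesses are introduced and none are lost. Iterating, every unrestricted quantifier is eliminated in favor of a single layer of set quantification over the active domain, yielding a $2$-RQ formula.

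The main obstacle is handling \emph{nested} or \emph{interleaved} unrestricted quantifiers and auxiliary witnesses cleanly: when $\exists z$ appears under other unrestricted quantifiers, the relevant trace of $z$ is no longer just a subset of $\adom(I)$ but a subset of $\adom(I)$ together with the previously chosen external witnesses, and one must verify that the combined configuration is still realizable by a single extension step. I expect this bookkeeping---tracking the finite ``external'' part and confirming via the extension axioms that the full edge-pattern on $\adom(I) \cup \{\text{witnesses}\}$ is consistent and hence realized---to be where the care is required, though it stays at second order because each external witness contributes only a subset of the (finite) active domain. I would therefore organize the induction to absorb external witnesses one at a time, so that at each stage only a single subset-of-$\adom$ quantifier is introduced and the order never exceeds two.
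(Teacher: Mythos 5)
Your proof is correct and matches the intended argument: the paper does not prove this proposition itself (it is cited from the literature), and both of your halves are the standard ones. The negative direction via shattering from the extension axioms plus \propref{prop:rcqnip} is exactly right, and the positive direction---replacing each unrestricted quantifier by a restricted case plus a quantifier $\exists S\subseteq\adom$ over traces together with a finite disjunction over edge/equality patterns to previously introduced external witnesses, with quantifier elimination ensuring the type is determined by this data and the extension axioms ensuring every pattern is realized---is precisely the bookkeeping the cited result requires.
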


We can weaken collapse further by dropping the requirement of a uniform bound.
We say $T$ is $\omega$-RQC if for every $L_P$ formula  $\tau$ there is  a formula
$\theta$ that is  $k$-RQ for some $k$
such that $\theta$ is equivalent to $\tau$ over all embedded finite subsets.

It is easy to see that in considering $k$-RQC for any fixed
$k$, or $\omega$-RQC
for a theory, we can restrict attention to  any model of the  theory.
Although $\omega$-RQC has not been studied explicitly,
it is easy to see that certain examples that are not $1$-RQC are also 
not $\omega$-RQC:

\begin{proposition} The theory of full integer arithmetic $(N, +, \cdot)$ is
not $\omega$-RQC.
\end{proposition}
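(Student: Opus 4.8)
The plan is to exhibit a single $L_P$ sentence $\tau_A$ that fails to be equivalent to any $k$-RQ sentence, for all $k$ at once. The idea is to use the coding power of arithmetic to make $\tau_A$ express a cardinality condition ``$|P|\in A$'' for a well-chosen $A\subseteq\nats$, and then to argue that, on a suitable choice of parameters, every $k$-RQ sentence computes only a \emph{computable} (indeed elementary) function of $|P|$, whereas $A$ can be taken undecidable.

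First I would construct $\tau_A$. Using a definable coding of finite sets by single elements (Gödel's $\beta$-function, or prime-power coding), write an $L$-formula $\mathrm{InSet}(c,y)$ expressing ``$y$ belongs to the finite set coded by $c$'', and an $L$-formula $\widehat\alpha(c)$ expressing ``the cardinality of the set coded by $c$ lies in $A$'', where $A$ is any arithmetically definable but undecidable set (for concreteness, the halting set). Then set $\tau_A := \exists c\,\big[\,\forall y\,(\mathrm{InSet}(c,y)\leftrightarrow \exists z\in P~z=y)\,\wedge\,\widehat\alpha(c)\,\big]$, which is a genuine $L_P$ sentence (one unrestricted quantifier $\exists c$ over the ambient model, together with restricted quantification over $P$, the atom $y\in P$ being mimicked by $\exists z\in P~z=y$ as in Section~\ref{sec:def}). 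The inner biconditional forces the coded set to equal $P$, so its internal cardinality is exactly $|P|$; hence $\tau_A$ holds on an embedded finite subset $(M,P)$ iff $|P|\in A$. This reading survives in nonstandard $M\models\mathrm{Th}(\nats)$: although the witness $c$ may be nonstandard, $|P|$ is a standard integer $n$, and $M\models\widehat\alpha(c)$ reduces to the truth of $\alpha(\bar n)$, which agrees with $\nats$ since both model $\mathrm{Th}(\nats)$.

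The matching lower bound is where the real work lies. Since $k$-RQC may be tested in any model of the theory, I would pass to a model $M\models\mathrm{Th}(\nats)$ carrying an infinite order-indiscernible sequence $a_1,a_2,\dots$ (such a sequence exists by the standard theorem quoted in Section~\ref{sec:def}; it must live in a nonstandard model, as $\nats$ admits no infinite indiscernibles). Put $P_n=\{a_1,\dots,a_n\}$, so $|P_n|=n$. Fix any $k$-RQ sentence $\theta$, built from finitely many $L$-formulas $\phi_1,\dots,\phi_s$. By order-indiscernibility each $\phi_i$ takes a single fixed truth value on every increasing tuple drawn from the sequence; thus the induced ``colored'' structure on $P_n$ is completely determined by $n$ together with a fixed bit-vector $(t_1,\dots,t_s)$. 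Evaluating the fixed higher-order sentence $\theta$ over this explicit finite structure of size $n$ — whose hierarchy has size at most a $k$-fold iterated exponential in $n$ — is therefore a total function of $n$, computable from the finitely many constants $t_i$, hence computable outright. If $\theta$ were equivalent to $\tau_A$ over all embedded finite subsets, we would get $n\in A \iff \theta(M,P_n)$ for every $n$, making $A$ decidable and contradicting the choice of $A$. Hence $\tau_A$ is not $k$-RQ for any $k$, and $(\nats,+,\cdot)$ is not $\omega$-RQC.

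I expect the main obstacle to be precisely the interaction between the arbitrary base $L$-formulas — which in arithmetic encode undecidable facts — and the claim that a $k$-RQ sentence computes a \emph{computable} function of $|P|$. The device that defuses this is the reduction to an indiscernible sequence: there the unboundedly complex arithmetic of the base formulas collapses to a fixed finite vector of bits, so that even though those bits may themselves be non-computable, only finitely many of them feed into $\theta$, and any function defined from finitely many fixed parameters by an effective higher-order evaluation is computable. The remaining point to verify carefully is that the coding sentence $\tau_A$ continues to read off $|P|$ correctly in nonstandard models, which is the content of the second paragraph.
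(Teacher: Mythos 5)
Your argument is correct, and it is exactly the argument the paper has in mind: the paper states this proposition without proof (as "easy to see"), and its own machinery in Proposition~\ref{prop:elem} and \S\ref{subsec:separate} rests on the same two ingredients you use — arithmetic coding to make an $L_P$ sentence express ``$|P|\in A$'' for an undecidable $A$, and passage to an order-indiscernible sequence so that any fixed $k$-RQ sentence decides only a computable (indeed elementary) set of cardinalities. The nonstandard-model subtleties you flag (existence of a code $c$ for each finite $P$, and the internal cardinality of the coded set being the standard $n=|P|$) are handled correctly.
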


A key motivation for $k$-RQC comes from complexity.  In the context of logics over finite structures, there is a strong connection between definability in higher-order logics and complexity. For example existential second-order definability in a language with only a relational signature -- that is, no interpreted structure --  corresponds to the complexity class $\np$. While  full second-order logic, again in the setting with out interpeted structure,  corresponds to $\pspace$ \cite{immermanbook}. In the context of finite models, higher-order definability has a correspondence with the  exponential time hierarchy, see \cite{hellahigherorder}.
 In the presence of interpreted structure, we can see that if $L_V$ formulas can express arbitrary $k$-RQC sentences, then
 the complexity of evaluation is (at least) non-elementary -- just inheriting results from the case of finite structures. Such coarse lower bounds are independent of how the infinite structure is encoded: thus in this work, \emph{when talk about lower bounds on evaluating sentences,  they will be in this sense of isomorphism-invariant formulas}.
 
We cannot claim that $k$-RQC implies elementary upper bounds on the complexity of evaluation for arbitrary sentences. Complexity upper bounds depends
on how easy it is to evaluate quantifier-free sentences, which will depend on the specifics of the presentation of the model.
But we can see that:
\begin{proposition} \label{prop:higherorderelementary} If $T$ has $k$-RQC, then any isomorphism-invariant sentence $\phi$ of $L_V$ the set of isomorphism types of finite $V$ which satisfy $\phi$ can be recognized by a Turing Machine running in $k$-iterated exponential
time.
\end{proposition}
\begin{proof}
Let $I_\phi$ be the set of $V$ isomorphism types of $V$ structures that satisfy $\phi$ in some (equivalently) all embedded finite structure.  By $k$-RQC we can rewrite the sentence $\phi$ to $k$-RQ formula. There is a model of $T$ with a set of order-indiscernibles $I$, and
over $I$ the $k$-RQ formula can be rewritten to a $\phi'$ use only the ordering symbol. Let $\phi''$ be obtained
by existentially quantifying over the linear order. Given an isomorphism type, presented on a tape with an arbitrary ordering, $V_0 \in I_C$ if and only if $\phi'$ holds of it as an ordered structure. This can be checked in time a tower in $k$.
\end{proof}

A partial converse will be given in Proposition \ref{prop:elem}.

\subsection{Separating RQC levels and Higher-order Spectra} \label{subsec:separate}
We will show that for any $k$ there is a $T$  that
is $(k+2)$-RQC  but not $k$-RQC. We will also show that
there are decidable theories that are not even $\omega$-RQC.

As mentioned above, there is a strong connection between definability in higher order logics and complexity. For example existential second order definability in a language with only a relational signature -- that is, no interpreted structure, corresponds to the complexity class $\np$, while full second order logic corresponds to $\pspace$. Higher-order definability has a close connection to the the exponential hierarchy \cite{hellahigherorder}.

In the process of separating $(k+2)$-RQC and $k$-RQC,  \emph{we will obtain examples of decidable structures where the complexity of evaluating formulas can be
arbitrarily high.}

$k^{th}$-order logic is the logic built up using
variables  of order $1$ to $k$, with atomic formulas
equality as well as $X(x)$ where $X$ has order $j+1$ and $x$ order $j$ via Boolean connectives
and quantifiers $\exists S$. We will consider ``pure equality higher-order sentences'':
$k^{th}$ order logic sentences with no relational constants, only quantified
variables. We can interpret such sentences 
over a structure consisting only of a domain -- a pure set.
We call a set $J$ of numbers a \emph{pure $k^{th}$ order spectrum} if there is a sentence $\phi$ of the above form
such that  $J= \{n  ~ \mid  \{1 , \ldots n\} \models \phi\}$.

The following is a variation of results in
\cite{hellahigherorder}:
\begin{proposition}  \label{prop:spectrum}
For any $k$, there is a set $J_k$ of numbers that is a pure $k+2$ order spectrum 
but not  a pure $k$ order spectrum.
\end{proposition}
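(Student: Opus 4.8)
The plan is to leverage a counting/diagonalization argument based on the growth rate of the number of distinct definable properties at each order, combined with the fact that $k^{th}$-order logic over a pure set can define very large classes of cardinalities. The key quantitative fact is that the number of inequivalent pure $k^{th}$-order sentences with at most a bounded quantifier depth, when evaluated on a pure set of size $n$, can distinguish at most a number of cardinalities controlled by a tower-type function of height roughly $k$. More precisely, I would first establish that any pure $k^{th}$-order sentence $\phi$ has the property that whether $P \models \phi$ depends only on $|P|$ (since the structure is a pure set with no distinguished elements, the truth value is invariant under all permutations of $P$, so a spectrum is genuinely a set of \emph{cardinalities}). This reduces the problem to understanding which subsets of $\nats$ arise as $\{n : n \models \phi\}$ for $k^{th}$-order $\phi$.

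The heart of the argument is a counting bound: I would show that the number of pairwise-inequivalent pure $k^{th}$-order sentences, restricted to sets of size at most $N$, grows like an iterated exponential of height $k$ in $N$. Concretely, on a set of size $n$ there are $2^n$ order-$1$ objects (subsets), hence $2^{2^n}$ order-$2$ objects, and so on; the number of $k^{th}$-order types is bounded by a tower $\exp_k(\mathrm{poly}(n))$. A Fagin-style or Ehrenfeucht--Fra\"iss\'e argument then shows that on sets whose sizes lie in a sufficiently sparse range, two sizes $m, m'$ that are ``far apart'' relative to this tower function cannot be separated by any $k^{th}$-order sentence of bounded complexity. Turning this around: the family of $k^{th}$-order spectra, viewed as subsets of $\nats$, is constrained by a gap/periodicity phenomenon governed by $\exp_k$.

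With the counting bound in hand, I would construct $J_k$ explicitly as a spectrum realizable at order $k+2$ but provably unrealizable at order $k$. The natural candidate is a set of cardinalities defined using one extra ``level'' of the tower: for instance, $J_k = \{\, \exp_{k+1}(t) : t \in \nats \,\}$ or a similar set whose characteristic sequence has gaps growing faster than any $k^{th}$-order sentence can track, yet which a $(k{+}2)$-order sentence can define by internally building the tower (using $k+2$ orders of quantification to assert the existence of a ``ladder'' of nested sets of the appropriate iterated-exponential sizes). Showing membership in the $(k{+}2)$-order spectrum is a direct construction: I would write a sentence asserting the existence of a chain of objects $S_1 \subseteq \powerset$-levels witnessing $n = \exp_{k+1}(t)$, which is expressible once two extra orders are available. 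Showing non-membership at order $k$ follows from the counting/gap bound, since $J_k$ has gaps strictly larger than $\exp_k$ allows.

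The main obstacle I anticipate is making the counting bound tight enough to give the clean $k$ versus $k+2$ separation rather than a lossy $k$ versus $2k$ or $k$ versus $k+O(\log)$ gap. The delicate point is that $k^{th}$-order quantification can itself generate tower-height-$k$ many objects, so a naive Ehrenfeucht--Fra\"iss\'e bound may lose a constant (or worse, an additive term) in the tower height; getting exactly the two-level jump requires carefully matching the expressive power of order $k+2$ (which can \emph{count up to} $\exp_{k+1}$) against the separating power of order $k$ (which saturates at $\exp_k$), and verifying that the intermediate order $k+1$ is genuinely insufficient to define $J_k$ so that the construction at $k+2$ is not wasteful. This is precisely where I would invoke the refinement of \cite{hellahigherorder}, adapting their higher-order spectrum hierarchy argument to the pure-equality setting, where the absence of relational constants makes the bookkeeping of realizable cardinality sets cleaner but requires re-deriving their gap estimates in this simplified context.
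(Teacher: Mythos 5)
There is a genuine gap here, and it is in the lower-bound half of your argument. Your proposed ``gap/periodicity phenomenon'' --- that a set of cardinalities with gaps growing faster than $\exp_k$ cannot be a pure $k$-order spectrum --- is false for $k\geq 2$. For $k \geq 2$ one can existentially quantify a binary relation on the domain and assert that it is a linear order (and a successor relation), after which $k$-order quantification over the resulting ordered set simulates computation on the unary input $1^n$ within roughly a height-$(k-1)$ tower of exponentials. Consequently a set of cardinalities is excluded from being a $k$-order spectrum only when deciding membership of $n$ is \emph{computationally hard} as a function of $n$, not when the set is sparse. Your candidate $J_k=\{\exp_{k+1}(t): t\in\nats\}$ is extremely sparse but trivially decidable in time polynomial in $n$ (compute iterated logarithms and check exactness), so it is already a pure spectrum at order $2$; it cannot witness the separation. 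Relatedly, the Ehrenfeucht--Fra\"iss\'e step as you state it (``sizes far apart relative to the tower cannot be separated'') is not the right invariant: any two distinct cardinalities are separated by a fixed first-order sentence, and what must be controlled is which sets are \emph{uniformly} definable, which is a complexity question rather than a density question.

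The paper takes an entirely different route that sidesteps this: it invokes the fact (Mostowski, Ko\l{}odziejczyk) that a truth definition for pure $k$-order sentences is expressible at order $k+2$, and then diagonalizes. The diagonal set is $(k+2)$-order definable by construction and cannot be $k$-order definable by the usual Tarski-style argument; this is precisely where the jump of two levels (rather than one) comes from, since the truth predicate costs two orders. If you want to salvage a complexity-theoretic version of your approach, you would need to replace the sparsity argument by a time-hierarchy argument --- choose $J_k$ to be a set whose membership problem (on input $1^n$) is complete for a tower of height $k+1$ and hence outside the tower of height attainable at order $k$ --- but you would then have to re-derive the machine characterizations of pure spectra in the equality-only setting, which the paper explicitly notes is not covered by the published separation arguments it cites.
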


Note that for formulas with only equality, there is a one-to-one correspondence between models of the formula and integers, where the correspondence
maps a model to its size. Thus it is enough to separate the expressiveness of $k+2$ order logic from that of $k$ order logic, over a unary signature.
This last follows from results  in \cite{mostowskisemantics,leszek}, stating that a truth definition for a $k$ order logic  sentence can always
be expressed as $k+2$ order logic sentence. This holds for any fixed signature, hence also for a fixed signature.

Note that \cite{hellahigherorder,hullsuhigherorder} claim separation results for expressibility in higher-order logic,  following
from complexity characterizations of the definable sets in the logic: in \cite{hellahigherorder} the characterization is via non-deterministic iterated
exponential time with an oracle for a class in the polynomial hierarchy. Similar characterizations are provided in \cite{leszek}.
In \cite{hellahigherorder}, the argument that is sketched for separation does not apply to a signature with only equality
In \cite{hullsuhigherorder} the  argument for separation relies on unpublished results in a Phd thesis \cite{thesisspectrum}, so it is difficult to say the signatures to which it applies.
But it seems likely that the separation result does not apply to the case of a trivial signature.

We now show how to generate theories $T_J$ from a set of numbers $J$, such
that $T_J$ is  $k$-RQC if and only if $J$ is a pure $k$-order spectrum.
Thus separation of RQC levels corresponds to separation of 
higher-order definability in finite model theory.  
As a consequence we will have:

\begin{theorem} \label{thm:separate} For each $k$ there are decidable complete theories 
$T_k$ that are  $(k+2)$-RQC but not $k$-RQC.
\end{theorem}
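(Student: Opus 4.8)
The plan is to reduce Theorem~\ref{thm:separate} to Proposition~\ref{prop:spectrum} by constructing, for each set $J$ of natural numbers, a theory $T_J$ whose $k$-RQC status is exactly controlled by whether $J$ is a pure $k$-order spectrum, and then to instantiate $J = J_k$ from Proposition~\ref{prop:spectrum}. The natural choice for $T_J$ is a variant of the equivalence-relation theory of Example~\ref{ex:fcp}: take $L = \{E\}$ where $E$ is an equivalence relation, but now with infinitely many classes of \emph{each} finite size (so that every cardinality is realized by some class, and arbitrarily many classes of each size exist). The key idea is that an embedded finite subset $P$ can be analyzed by how it meets each $E$-class, and the multiset of intersection-cardinalities $\{|P \cap C| : C \text{ an } E\text{-class}\}$ is, up to the ambient $L$-structure, the only isomorphism invariant available. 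The active-domain quantifiers, together with $E$, let an RQ formula speak about the induced structure on $P$ partitioned into blocks by $E$, and higher-order active-domain quantifiers over $P$ correspond precisely to higher-order quantification over this finite partitioned set.

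First I would make the correspondence between RQ-expressibility over $T_J$ and pure higher-order spectra precise. The plan is to choose $T_J$ so that the only $L_P$ sentence of interest asserts that \emph{some $E$-class $C$ with $|C| \in$ (something) is fully contained in $P$}, or more flexibly, a sentence $\sigma_J$ asserting ``$P$ is a single full $E$-class whose size lies in $J$''. On the one hand, a $k$-RQ formula equivalent to $\sigma_J$, when evaluated on such a $P$, can only see $E$ restricted to $P$ (which on a single class is the complete relation) and the higher-order quantifiers ranging over $\adom(P) = P$; since $E\restriction P$ carries no information beyond $|P|$, the $k$-RQ formula collapses to a pure $k$-order sentence about the bare set $P$, whose truth depends only on $|P|$. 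This shows $T_J$ is $k$-RQC only if $J$ is a pure $k$-order spectrum. Conversely, given a pure $k$-order sentence $\phi$ defining $J$, I would translate it into a $k$-RQ formula by relativizing all its quantifiers to $\adom$ and prefixing a first-order $L_P$ guard (expressible using $E$ and a single active-domain element) checking that $P$ is exactly one $E$-class; this yields a $k$-RQ formula equivalent to $\sigma_J$, and I would argue that for $T_J$ built from $\sigma_J$ this suffices for full $k$-RQC, since all other $L_V$ sentences over this tame theory collapse by the NFCP/stability machinery or by a direct normal-form argument.

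The main obstacle is the converse collapse direction at full strength: showing that $T_J$ is genuinely $k$-RQC (every $L_V$ sentence collapses) and not merely that the single sentence $\sigma_J$ does. The theory with infinitely many classes of each finite size is \emph{not} NFCP (it has the finite cover property, as in Example~\ref{ex:fcp}), so Theorem~\ref{flumziegler} does not apply, and I must do the elimination by hand. The plan here is to prove a \emph{uniform reduction} lemma: over $T_J$, every $L_V$ formula is equivalent to one built by restricted quantification over $\adom$ together with a bounded amount of ``global'' information about the $E$-class structure outside $\adom$, and to show that this global information is itself recoverable at order $k$ once $J$ is a $k$-order spectrum. The delicate point is that quantifiers over the infinite model can reach into $E$-classes that partly overlap $\adom$ and partly escape it; I would control this by a back-and-forth / Ehrenfeucht--Fra\"iss\'e argument over $T_J$, using that the isomorphism type of a model is determined by the function counting classes of each size (here, infinite for every size), so that an unrestricted element's $L$-type over $\adom$ is determined by finitely much class-overlap data, all of which is computable by active-domain quantification of order at most the order needed to count, namely bounded by the $k$ arising from $J$. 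Once this reduction lemma is in place, decidability of $T_J$ follows because its models are classified by a single cardinality function and the theory is a mild expansion of a well-understood $\aleph_0$-categorical-in-spirit structure, and combining the two directions with the choice $J = J_k$ from Proposition~\ref{prop:spectrum} gives a theory that is $(k+2)$-RQC but not $k$-RQC, completing the proof.
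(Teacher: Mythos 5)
There is a genuine gap, and it is fatal to the construction. Your theory does not actually depend on $J$: you fix $L=\{E\}$ with infinitely many classes of each finite size and try to make $J$ enter only through the sentence $\sigma_J$ asserting ``$P$ is a full $E$-class whose size lies in $J$''. But $\sigma_J$ is not an $L_P$ sentence for general $J$. First-order logic over an equivalence-relation structure can only distinguish class cardinalities up to a finite threshold (a routine Ehrenfeucht--Fra\"iss\'e or compactness argument), so there is no first-order formula over the model whose truth on an embedded finite $P$ tracks membership of $|P|$ in an arbitrary decidable set $J$. Without such a sentence, the link between the RQC level of the theory and the spectral complexity of $J$ never gets off the ground: the lower-bound direction has no witness, and the theory's RQC status is the same for every $J$. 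This is precisely the problem the paper's construction is designed to solve: it uses Henson's family of pairwise non-embeddable indecomposable digraphs $B(n)$, obtained from linear orders by an interpretation $\delta$ (Theorem \ref{thm:henson}), and builds a Fra\"iss\'e limit in which ``no $B(n)$ with $n\in J$ embeds into $R_a$'' is a forbidden-substructure constraint. Membership of $|P|$ in $J$ then becomes first-order expressible in $L_P$: one uses a slice $Z_b$ to definably order $P$, applies $\delta$, and asks whether the resulting $B(|P|)$ embeds as an induced substructure of some slice $R_a$ -- a genuinely first-order question about the ambient model whose answer is controlled by the Fra\"iss\'e construction.

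There is a second, independent failure: your base theory cannot be $(k+2)$-RQC for any $k$. The paper shows (Example \ref{ex:fcp} and the corresponding appendix section) that over an equivalence relation with arbitrarily large finite classes, the sentence ``some $E$-class is contained in $P$'' is not finitely redeemable -- one maps a $P$ exhausting a class of size $m$ to a $P'$ of the same size sitting inside a larger class, preserving all formulas in few variables -- and hence is not equivalent to any $k$-RQ sentence for any $k$. Adding infinitely many classes of each size does not change this. So even if the spectral bookkeeping could be arranged, your $T_J$ would sit on the wrong side of $\omega$-RQC from the start, and no choice of $J$ could produce the $(k+2)$-RQC half of the separation. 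The upper-bound direction in the paper instead relies on specific features of the Henson--Fra\"iss\'e structure: an unrestricted existential over $x$ is traded for second-order guesses of the finitely many binary slices $R_{x,i,P}$ and $Z_{x,i,P}$ over $P$, and realizability of those guesses reduces (using indecomposability and the fact that no $B(n)$ lies in a cone) exactly to the $k$-order-definable condition that $G_1$ omits all $B(n)$ with $n\in J$.
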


The recipe for obtaining $T_J$ from $J$ can be used to reduce other separation
statements (e.g. about the variation of $k$-RQC where only Monadic higher order quantification is allowed) to 
separations in finite model theory.
We use a  construction due to Henson \cite{henson72}, first used to show that there are
uncountably many homogeneous directed graphs. See \cite{dugaldsurvey} for more background on these constructions.


 A graph is
\emph{decomposable} if there is a non-trivial
set $S$ of vertices with the property that for each pair
$x, y \in S$ and every other vertex $v$ in the graph, there is an
edge from $x$ to $v$ if and only there is an edge from $y$
to $v$  and similarly for edges from $v$ to $x$. That is, elements of $S$ behave
the same way with respect to elements outside of $S$. A graph is \emph{indecomposable} if
it is not decomposable.
What is achieved by the construction of Henson can be stated as follows:

\begin{theorem} \cite{henson72} \label{thm:henson}
For each natural number there is a finite 
indecomposable directed graph $B(n)$ such that for $n \neq n'$,  $B(n)$ does not embed as an induced substructure
of $B(n')$.  
 Furthermore,
there is a first-order interpretation $\delta$ whose input language is just a binary language $<$,
such that applying the interpretation to 
a linear order of size $n$ gives $B(n)$.  
\end{theorem}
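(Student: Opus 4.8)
The plan is to exhibit the graphs $B(n)$ directly as the output of a single fixed first-order interpretation $\delta$ applied to the $n$-element linear order, so that the uniformity of the formulas makes the ``Furthermore'' clause automatic and reduces everything to two combinatorial properties: indecomposability of each $B(n)$, and pairwise non-embeddability. The first point to isolate is that indecomposability together with first-order definability from a bare order is, by itself, cheap. For instance, the immediate-successor relation is first-order definable uniformly in $n$, and the directed path $P_n = (1 \to 2 \to \cdots \to n)$ it produces is already indecomposable: for any autonomous set $S$ with at least two elements one checks that the maximum of $S$ has an outside out-neighbour, or the minimum an outside in-neighbour, not shared by the others, so no nontrivial $S$ survives. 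Hence the genuine difficulty lies entirely in the antichain property, as $P_n$ embeds as an \emph{induced} subgraph into $P_{n'}$ for every $n \le n'$ (take any $n$ consecutive vertices), and so the directed path fails non-embeddability completely.

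To defeat induced embeddings I would build into $B(n)$ a rigid, non-repeating pattern whose size is pinned to $n$, using a tuple-valued domain formula $\delta_{\mathrm{dom}}$ together with the first-order definable distinguished points (minimum, maximum, successor) to make vertices locally distinguishable -- so that, unlike the homogeneous path, no long stretch of $B(n)$ can look like a shorter stretch of $B(n')$. Concretely I would encode $n$ as an isomorphism invariant, for example the length of a uniquely realizable directed ``spine'' recognizable by a fixed first-order adjacency profile, and arrange both that every spine occurring in $B(n')$ has length exactly $n'$ and that no collection of non-spine vertices of $B(n')$ can realize the spine profile. Since $|B(n)|$ is made strictly increasing in $n$, the case $n > n'$ is disposed of by cardinality, and the real content is $n < n'$: an induced copy of $B(n)$ inside $B(n')$ would have to carry a spine of length $n$, and the two constraints just listed force $n = n'$.

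It then remains to re-establish indecomposability for the enriched $B(n)$, which I expect to follow from the same local distinguishability that was introduced to block embeddings: to rule out a nontrivial autonomous set $S$ it suffices to produce, for any two $x, y \in S$, an external vertex $v \notin S$ adjacent to exactly one of them, and the tagging of vertices by their order-positions relative to the endpoints supplies such a witness for every pair. The step I expect to be the main obstacle is precisely the reconciliation of the three demands at once: definability from nothing but a linear order, together with the uniformity of $\delta$, pushes $B(n)$ toward homogeneity, and homogeneity is exactly what both creates modules and enables induced embeddings. Neither indecomposability nor the antichain property is hard alone; making one interpretation deliver both simultaneously is the heart of Henson's construction.
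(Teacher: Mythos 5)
This statement is imported verbatim from \cite{henson72}; the paper offers no proof of it, so there is nothing internal to compare against, and your attempt has to stand on its own as a reconstruction of Henson's argument. It does not: what you have written is a proof plan in which the actual mathematical content is left as a placeholder. Your warm-up observations are correct (the successor relation is uniformly definable, the directed path $P_n$ is prime for $n\geq 3$, and $P_n$ embeds induced into $P_{n'}$ for $n\leq n'$, so the path fails the antichain property), but they only locate the difficulty. The step that carries the entire theorem --- ``encode $n$ as the length of a uniquely realizable directed spine recognizable by a fixed first-order adjacency profile, and arrange that no non-spine vertices can realize the profile'' --- is never instantiated. You do not say what the adjacency profile is, what edges $B(n)$ has, why a spine in $B(n')$ must have length exactly $n'$, or why an induced copy of $B(n)$ in $B(n')$ must map spine to spine. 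You then acknowledge that reconciling definability, indecomposability, and the antichain property ``is the heart of Henson's construction,'' i.e., you defer precisely the part that needs proving. As it stands this is a gap, not a proof.

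For what it is worth, Henson's construction is far more concrete than the spine metaphor suggests, and once written down the verification is finite combinatorics. One takes $B(n)$ to be the tournament on $\{1,\dots,n\}$ (for $n$ above a small threshold) with $i \to j$ iff $j = i+1$, or $|i-j|\geq 2$ and $j < i$: a forward Hamiltonian successor path with all remaining edges pointing backward. This is visibly the image of a one-dimensional, parameter-free interpretation of a linear order, so the ``Furthermore'' clause is immediate; indecomposability follows from the kind of pairwise-separating-witness argument you sketch for the path; and the antichain property is proved by showing that in any induced copy of $B(n)$ inside $B(n')$ the unique Hamiltonian-path structure (the vertices of out-degree/in-degree pattern forced by the successor edges) must be preserved, which pins the copy to a set of consecutive vertices and forces $n = n'$. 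If you want to complete your write-up, you should replace the spine placeholder with an explicit edge relation of this kind and then actually carry out the two verifications.
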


A first-order interpretation of dimension $d$ \cite{hodgesbook} is a function from 
structures to structures, specified by  a formula $\phi_{\domain}(x_1 \ldots x_d)$ in the input
vocabulary describing the domain of the output, along with, for each $k$-ary relation
symbol in the output vocabulary, a $(k \cdot d)$-ary formula in the input vocabulary.
An interpretation allows a formula also for equality -- meaning that the domain of the output
is actually a quotient of the $k$-tuples  satisfying $\phi_{\domain}$. But such a quotient
is not needed in the interpretations of Theorem \ref{thm:henson}.

Let a \emph{cone} be a digraph of the form $\{d\} \times D$ or $D \times \{d\}$.
One can verify that the construction of \cite{henson72} has the property
that no   $B(n)$ is an induced substructure of a cone.

We consider a signature consisting of  ternary relations $R$
and  $Z$.
Given a structure $M$ for such a relation  and $a$ in the domain of $M$
we let $R_{a}$ be the binary relation such that $R_{a}(b,c)$ holds exactly when
$R(a, b,c)$ holds. Let $Z_a$ be a binary relation defined analogously.

For a set of numbers $J$, let $K_J$ be the set of finite structures for $R,Z$
such that $R(x,y,z)$ implies $x,y, z$ are all distinct and for no $a$ in the domain do we have:
 $B(n)$ for $n \in J$ embeds as an induced substructure of $R_a$.
This set of structures is easily seen to 
 have the hereditary, amalgamation, and joint embedding
properties. Thus there is a \fraisse limit (see Section \ref{sec:def})
which
we denote as  $M_J$. Let  $T_J$ be the complete theory of 
$M_J$.
Since there was no restriction on $Z$,
in $M_J$ the relation $Z$ will simply be a random ternary relation.
Whenever $J$ is a decidable set, the class of finite structures $K_J$
is effective, and thus the theory $T_J$ is computably axiomatizable via the usual ``extension axioms'',
stating that for every tuple $\vec t_1$ in the model
and isomorphism $h_1$ of a structure  $s_1$ in $K_J$ to the substructure of the model induced on $\vec t_1$,
for  every structure $s_2$ such that $s_1$ is an induced substructure of $s_2$, there is a tuple $\vec t_2$ 
extending $\vec t_1$ and isomorphism of $s_2$ to $\vec t_2$.
Since  $T_J$ is a complete theory with a computable axiomatization, it is decidable.

The key property of $T_J$ is that for any finite linear order  $L$ over
a domain $D_L$ and
any $a$ in a model $M$ of $T_J$, $\delta(L)$ can be embedded in $R_a$ if and only
if $|D_L| \not \in J$.

\begin{proposition}
For $k \geq 2$, if $T_J$ is  $k$-RQC
then $J$ is  a pure $k$ order
spectrum.
\end{proposition}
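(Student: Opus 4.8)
The plan is to manufacture, from the hypothesised $k$-RQ collapse, a pure $k$-th order sentence whose finite spectrum is exactly $J$. I work throughout inside the \fraisse limit $M_J$, which is legitimate since membership in a $k$-RQC class can be tested in a single model of the theory. The first step is to write down an $L_V$ sentence, for $V=\{<\}$ with $<$ binary, detecting whether the size of the active domain lies in $J$. Using that the vertices of $B(n)=\delta(L)$ are indexed by the $n$ ordered points, let $\phi_{\mathrm{edge}}(x,y)$ be the $<$-formula defining the edge relation of $\delta$, and put
\[
 \psi \;:=\; \exists a\;\forall u\in\adom\;\forall v\in\adom\;\bigl(\phi_{\mathrm{edge}}(u,v)\leftrightarrow R(a,u,v)\bigr).
\]
On any embedded finite model in which $<$ is a linear order of its active domain $D$, $\psi$ says that some pivot $a$ realises $R_a\!\restrict\! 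D$ exactly as the labelled digraph $\delta(D,<)=B(|D|)$. By genericity of $M_J$ such an $a$ outside $D$ exists precisely when $B(|D|)$ embeds in some $R_a$, i.e.\ (by the key property) precisely when $|D|\notin J$; and when $|D|\in J$ no pivot works, since $B(|D|)$ embeds in no $R_a$. Thus, over all such models, the detection sentence $\Psi:=\neg\psi$ holds iff $|\adom|\in J$.

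Applying the hypothesis that $T_J$ is $k$-RQC to $V=\{<\}$ yields a $k$-RQ sentence $\theta$ equivalent to $\Psi$ over every embedded finite model, and the heart of the argument is to compile $\theta$ into pure higher-order logic. I would evaluate $\theta$ only on models whose active domain $D$ is an $\{R,Z\}$-free subset of $M_J$ of each prescribed size $n$; such $D$ exist because the empty structure lies in $K_J$, and on them $\Psi$ still detects $n\in J$, since the pivot argument above used genericity rather than the induced structure on $D$. Now $\theta$ is built from $L$-formulas $\chi(\vec x)$, from $<$-atoms, and from restricted quantifiers of order at most $k$. Since $M_J$ is ultrahomogeneous, the truth value of $\chi(\vec a)$ for $\vec a$ from an $\{R,Z\}$-free $D$ depends only on the quantifier-free type of $\vec a$, which here is just its equality pattern; hence each $\chi$ may be replaced by a fixed Boolean combination $\beta_\chi(\vec x)$ of equalities, the $<$-atoms are retained, and the restricted quantifiers become pure quantifiers over the set $D$ of order $\le k$. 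This turns $\theta$ into a $k$-th order sentence $\theta^\ast(D;<)$ over the pure vocabulary of equality together with the single binary relation $<$, and by the preceding remark $\theta^\ast(D;<^I)=(n\in J)$ for every linear order $<^I$ on $D$.

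The last step eliminates the relation $<$. Define the pure sentence
\[
 \phi \;:=\; \exists {<}\;\bigl(\text{``}{<}\text{ is a linear order''}\wedge\theta^\ast\bigr),
\]
where $<$ is now an existentially quantified binary relation, that is, a variable of order $2$. Because $k\ge 2$, adjoining this order-$2$ quantifier to the order-$\le k$ sentence $\theta^\ast$ leaves the total order at most $k$; this is exactly where the hypothesis $k\ge2$ enters. For a set of size $n\ge1$ there is a linear order, on which $\theta^\ast$ returns the bit $n\in J$, so $\phi$ holds iff $n\in J$, while the single value $n=0$ can be hard-wired by a disjunct. Hence $J=\{\,|P|:P\models\phi\,\}$ exhibits $J$ as a pure $k$-order spectrum.

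The step I expect to be the main obstacle is the compilation on $\{R,Z\}$-free active domains: it is what allows the $L$-part of the collapsed formula to be absorbed into equality, and it rests both on the existence of arbitrarily large $\{R,Z\}$-free sets in $M_J$ and on ultrahomogeneity forcing $L$-formulas on such sets to be type-determined. A secondary technical point is the faithful first-order expression of ``$B(n)$ embeds in some $R_a$'': the clean pivot form above implicitly assumes the universe of $\delta(D,<)$ is $D$ itself, and for a higher-dimensional interpretation $\delta$ one must encode the embedding in an auxiliary $V$-relation while keeping the active-domain cardinality in step with $n$.
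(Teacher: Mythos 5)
Your proof is correct and follows the same skeleton as the paper's: write a detection sentence whose truth on an embedded finite model is governed by the Henson antichain property together with genericity of the \fraisse limit, apply the hypothesised $k$-RQC, and then kill the $L$-part of the collapsed sentence by evaluating on a totally indiscernible set (your $\{R,Z\}$-free sets, justified by ultrahomogeneity, are exactly such a set). The one genuine divergence is where the linear order lives. The paper keeps $V=\{P\}$ and extracts the order from the \emph{interpreted} random ternary relation $Z$, via the parameter $b$ in ``$Z_b$ induces a linear order on $P$''; since $Z$ is part of $L$, the order evaporates along with everything else when the formula is compiled over the indiscernible set, and the resulting pure sentence needs no further manipulation. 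You instead place $<$ in the uninterpreted vocabulary, so it survives compilation and must be existentially re-quantified in the pure sentence --- this is where your argument consumes the hypothesis $k\ge 2$, and it tacitly assumes that pure $k$-th order logic admits a binary relation variable at order $2$. That assumption is consistent with how the paper uses higher-order restricted quantifiers elsewhere (e.g.\ the $G_i\subseteq P^2$ in the converse direction), but note that the paper's official definition of pure $k$-th order logic only provides atoms $X(x)$ with $X$ one order above $x$; under a strictly monadic reading, encoding a linear order as a maximal chain of subsets would cost order $3$, so for $k=2$ your route is slightly more fragile than the paper's, which never has to re-introduce the order at all. The dimension-$1$ caveat about $\delta$ that you flag is shared with the paper's own detection sentence and is harmless. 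Your version buys a cleaner, more self-contained detection sentence at the price of one extra second-order existential; the paper's version buys uniformity over all $k\ge 2$ for free by exploiting the randomness of $Z$.
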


\begin{proof}
Consider the  sentence  $\psi(P)$ in $L_P$ 
stating that:

For some $a$, and some $b$  $Z_{b}$ induces a  linear order  $<_P$ on $P$,
and  $\delta$ applied to $<_P$ is an induced substructure of $R_a$.

The properties of the structure guarantee that such a sentence is true if and only if $|P| \not \in J$.

Suppose $T_J$ is $k$-RQC. The sentence $\psi$ is cardinality-invariant.
Consider embedded finite subsets living within  a totally indiscernible 
set $I$.
Then  there is a $k$-RQ sentence that defines $\psi$, and indiscernibility allows
the  references to the ambient structure to be eliminated from $\psi$.
From this we see  that $J$ is the spectrum of a $k^{th}$ order logic sentence.
\end{proof}

\begin{proposition} For $k \geq 2$ if $J$ is a pure $k^{th}$ order logic  spectrum, then
$T_J$ iss $k$-RQC.
\end{proposition}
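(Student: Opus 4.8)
The plan is to show that every $L_V$ sentence collapses to a $k$-RQ sentence over embedded finite models of $T_J$ by eliminating the unrestricted (over $M$) first-order quantifiers one at a time, from the innermost outward. Since $M_J$ is a \fraisse limit in the finite relational language $\{R,Z\}$, it is homogeneous and $T_J$ has quantifier elimination; so I may first rewrite every $L$-subformula as a Boolean combination of $\{R,Z\}$- and equality-atoms, leaving a first-order sentence over $\{R,Z\}\cup V$ whose only quantifiers range over $M$. Fix a pure $k^{th}$-order sentence $\phi_J$ witnessing that $J$ is a pure $k$-order spectrum, so that $|P|\in J$ iff $P\models\phi_J$ for every finite pure set $P$. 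The whole argument rests on a single-quantifier elimination lemma: if $\chi(x,\vec z)$ is $k$-RQ with $\vec z$ free first-order variables ranging over $M$ and $x$ a further such variable, then $\exists x\,\chi$ is equivalent over embedded finite models of $T_J$ to a $k$-RQ formula $\chi'(\vec z)$. Granting this, an induction on the number of unrestricted quantifiers yields the theorem.

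For the lemma, I split $\exists x\,\chi$ according to the location of $x$: the disjunct $\exists x\in\adom\,\chi$ is already restricted, and each disjunct $x=z_i$ is handled by substitution, so the content is the case where $x$ is generic, i.e. lies outside $\adom\cup\vec z$. Here I use that, by the extension axioms defining $T_J$, a one-point extension of the finite set $\adom\cup\vec z$ is realized in $M$ (indeed by infinitely many points, hence by one avoiding $\adom\cup\vec z$) exactly when the resulting finite $\{R,Z\}$-structure lies in $K_J$. Thus $\exists x\,\chi$ over generic $x$ holds iff there is an \emph{admissible one-point extension type} $q$ over $\adom\cup\vec z$ for which $\chi$ holds once the $\{R,Z\}$- and $V$-atoms about $x$ are read off from $q$ (note $V$-atoms mentioning a generic $x$ are false, and $L$-atoms about $x$ are now quantifier-free). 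Such a type $q$ records, for each argument position, the relations between $x$ and tuples from $\adom$ (together with a bounded amount of data involving the parameters $\vec z$); these are subsets of $\adom$ and of $\adom\times\adom$, i.e. objects of powerset-depth $1$. So I replace $\exists x\,\chi$ by an existential quantification over the order-$1$ object $q$, followed by the admissibility test and the substituted $\chi$; since $k\ge 2$, order-$1$ quantification is permitted in a $k$-RQ formula.

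It remains to express \emph{admissibility} of $q$, i.e. membership of the extended structure in $K_J$, as a $k$-RQ condition. Since $\adom\cup\vec z\subseteq M$ already lies in $K_J$, admissibility is the statement that adding $x$ creates no forbidden slice: for every $c$ in $\adom\cup\vec z\cup\{x\}$, the digraph $R_c$ contains no induced copy of $B(n)$ with $n\in J$. Using that $B(m)=\delta(L_m)$ and that all linear orders of a given size are isomorphic, ``$R_c$ restricted to a set $W$ is isomorphic to $B(|W|)$'' is equivalent to ``there is a linear order $<$ on $W$ with $\delta((W,<))\cong R_c\restrict W$'', and the latter is a first-order property of $(W,<,R_c)$ because $\delta$ is a fixed first-order interpretation. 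Hence the presence of a forbidden slice becomes: there exist $c$, a subset $W$, and a linear order $<$ on $W$ such that $\delta((W,<))\cong R_c\restrict W$ and $W\models\phi_J$; admissibility is the negation of this over all $c,W,<$. The quantifiers over $W$, $<$ and the comparison bijection are of powerset-depth $1$, while evaluating the spectrum sentence $\phi_J$ relativised to $W$ uses powerset-depth at most $k-1$; everything stays within depth $k-1$, so the test is $k$-RQ. Slices $R_c$ and sets $W$ that involve the virtual point $x$ are handled by reading the incident $R$-edges off $q$ rather than off $M$, which is routine bookkeeping.

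The main obstacle I anticipate is precisely this admissibility test: one must verify that the local, finite condition ``the one-point extension is in $K_J$'' is captured exactly by the $\delta$-based detection of the family $\{B(n)\}$ together with the spectrum sentence $\phi_J$, and manage the bookkeeping for the virtual point $x$ and the parameters $\vec z$ uniformly. The uniform interpretation $\delta$ of Theorem \ref{thm:henson} is what makes the family $\{B(n)\}_n$ detectable by a single low-order formula (via guessing a linear order), and the hypothesis that $J$ is a pure $k^{th}$-order spectrum is exactly what converts the side condition $|W|\in J$ into an admissible $k$-RQ subformula; the constraint $k\ge 2$ is used only to accommodate the order-$1$ guesses for $q$, $W$ and $<$.
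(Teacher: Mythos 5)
Your proposal is correct and follows essentially the same route as the paper: guess the order-$1$ trace of the new point $x$ on the active domain, use the extension axioms of the \fraisse limit to characterize realizability of that trace as membership of the extended structure in $K_J$, detect the forbidden slices $B(n)$ by guessing a subset $W$ and a linear order and applying the fixed interpretation $\delta$, and discharge the condition $|W|\in J$ via the pure $k^{th}$-order spectrum sentence, all within order $k$ since $k\ge 2$. The only cosmetic difference is that you test admissibility on every slice $R_c$, whereas the paper observes (via the fact that no $B(n)$ embeds in a cone, and that $Z$ is unconstrained) that only the slice $R_x$ itself can create a violation.
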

\begin{proof} 
Suppose $J$ is the spectrum of $\tau$.

For $G \subseteq P^2$, we write $R_{x,1,P}=G$ to abbreviate the formula:
$\forall y,z \in P ~ R(x,y,z) \leftrightarrow G(y,z)$.
We let $R_{x,2,P}=G$ abbreviate
$\forall y,z \in P ~ R(y,x,z) \leftrightarrow G(y,z)$.
and define $R_{x,3,P}=G$, $Z_{x,i,P}=G$ for $i=1,2,3$  analogously.

Inductively, it suffices to show that  a formula $\gamma(x, \vec y)$ of the form:
$\exists x ~ \rho(x, \vec y)$
with $\rho$ $k$-RQ, is equivalent to a $k$-RQ formula.

We can assume that $x$ occurs in $\rho$  only in $L$ atoms. Since $T_J$ enforces
that $R(x,y,z)$ implies $x,y,z$ distinct, we can also assume $x$ occurs at most once in each $R$
atom.
We then rewrite $\rho$ as
\begin{align*}
\exists ~ G_1, G_2, G_3 Z_1, Z_2, Z_3 \subset P^2 ~ R_{x,1,P}=G_1 \wedge \\
R_{x,2,P}=G_2 \wedge R_{x,3,P}=G_3 \wedge \ldots  \wedge \rho'
\end{align*}
Here $\ldots$ indicates additional equalities with $Z_{x,i,P}$.
And $\rho'$ is obtained from $\rho$ by replacing each $R$ atom containing $x$
with the appropriate $G_i$ atom and similarly for $Z$.
It will suffice to simplify  the formula
\[
\exists x ~ R_{x,1,P}=G_1 \wedge R_{x,2,P}=G_2 \wedge R_{x,3,P}=G_3 \wedge \ldots
\]
where $G_1, G_2, G_3$ are constrained to be in $P^2$.

We claim that for $G_1, G_2, G_3 \subseteq P^2$ and $P$ finite, there is $x$
such that  $R_{x,1,P}=G_1$, $R_{x,2,P}=G_2$ and $R_{x,3,P}=G_3$ and the $Z_{x,i,P}$
equalities hold exactly when
no $B(n)$ for  $n \in J$ embeds as an induced substructure of $G_1$.
This is easily verified: the fact that $G_2$ and $G_3$ play no role 
uses the assumption that no $B(n)$ is contained in a cone. The equalities
involving $Z$ can always be achieved.

Note that if $G_1$ contains some $B(n)$ then necessarily $n \leq |P|$,
since $G_1$ is contained in $P^2$.
Thus the condition that $G_1$ does not contain any $B(n)$ can be expressed as:

for no subset $P'$ of $P$, for  no linear order $<$ on $P'$
if $P'$ satisfies $\tau$, then $\delta(<)$ does not embed as an induced substructure of $G_1$.

Since $\tau$ is a $k$-RQ formula in $P$, 
this is a $k$-RQ formula on $G_1, P$.

\end{proof}

Thus we have reduced  separating levels of RQC to the corresponding
separations in finite model theory.
By Proposition \ref{prop:spectrum} we can
choose $J$ to be the spectrum of a $k+2$ sentence but not a 
$k$ order logic sentence, and thus get
theories that are $k+2$-RQC but not $k$-RQC.
By adjusting
the definability of the set of natural numbers $J$, we can also show:

\begin{corollary} There are decidable theories that are not  $\omega$-RQC.
\end{corollary}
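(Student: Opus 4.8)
The plan is to apply the dictionary established by the two preceding propositions: $T_J$ is $k$-RQC precisely when $J$ is a pure $k$-order spectrum, and $T_J$ is decidable whenever $J$ is a decidable set. It therefore suffices to produce a single decidable set of numbers $J$ that fails to be a pure $k$-order spectrum for every $k$. For such a $J$ the theory $T_J$ is decidable, and I claim it is not $\omega$-RQC.

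First I would reduce the failure of $\omega$-RQC to the one cardinality-invariant sentence used above. Recall the $L_P$ sentence $\psi(P)$ from the proof that $k$-RQC implies $J$ is a pure $k$-order spectrum; it is true on an embedded finite subset exactly when $|P| \not\in J$. If $T_J$ were $\omega$-RQC, then in particular $\psi$ would be equivalent over all embedded finite subsets to a $k$-RQ sentence for some $k$ (which we may take $\geq 2$). Re-running verbatim the argument of that proposition --- restrict to a totally indiscernible set and use indiscernibility to delete the references to the ambient model --- shows that $\nats \setminus J$ is the spectrum of a $k$-th order pure-set sentence. Since $k$-th order logic is closed under negation, the complement (within the finite cardinalities) of a pure $k$-order spectrum is again a pure $k$-order spectrum, so $J$ itself would be a pure $k$-order spectrum. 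Contrapositively, if $J$ is not a pure $k$-order spectrum for any $k$, then $\psi$ is not $k$-RQ for any $k$, and hence $T_J$ is not $\omega$-RQC.

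It remains to construct a decidable $J$ outside every pure $k$-order spectrum. The key point is a uniform complexity bound: for fixed $k$, deciding whether a pure $k$-th order sentence holds on a set of size $n$ reduces to Boolean evaluation over the iterated powerset hierarchy above an $n$-element domain, a structure of size at most $\exp_{k-1}(n)$ (a tower of exponentials of height $k-1$). Hence each pure $k$-order spectrum is decidable in time bounded by a fixed tower of exponentials in $n$, so the union over all $k$ of the pure $k$-order spectra is contained in the class of elementary-recursive sets. By the time hierarchy theorem there exist decidable sets of numbers that are not elementary-recursive; fixing any such set as $J$ yields, by the previous paragraph, a decidable theory $T_J$ that is not $\omega$-RQC.

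The main obstacle is the elementary upper bound on pure $k$-order spectra: one must check that higher-order model checking over a pure set of size $n$ can be carried out within a tower of exponentials whose height depends only on $k$, and that the reduction is uniform in $n$. Once this bound is in place the separation is automatic from the time hierarchy theorem, and the two auxiliary points --- closure of pure $k$-order spectra under complement, and the decidability transfer from $J$ to $T_J$ --- are routine.
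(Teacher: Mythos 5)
Your proposal is correct and takes essentially the same route as the paper, whose entire proof is ``Choose $J$ such that $J$ is decidable but is not a pure $k$ spectrum for any $k$.'' You have merely filled in the details the paper leaves implicit: the existence of such a $J$ via the elementary bound on evaluating pure $k$-th order sentences plus the time hierarchy theorem, and the observation that the single sentence $\psi$ witnesses the failure of $k$-RQ convertibility uniformly in $k$, which is exactly what is needed to refute $\omega$-RQC rather than just each fixed level.
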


\begin{proof} Choose $J$ that is decidable, but  not a pure $k$ spectrum for any $k$.
\end{proof}

By  varying the complexity of the set $J$ (while retaining decidability) we can show that the complexity of evaluating formulas can be arbitrarily high.

Recall from Section \ref{sec:def} that there is a strong connection between various classical model-theoretic  dividing
lines and $1$-RQC: for example,  $1$-RQC implies NIP.
The result above can be seen as a negative one about similar connections between higher-order collapse
and classical model-theoretic dividing lines:
Note that model-theoretically the different theories $T_J$ are quite similar.
But they can be adjusted to get any level of RQC or none at all, depending on $J$.

The construction does not give theories that are NIP. We can show that it is possible
to separate $2$-RQC from $1$-RQC with an NIP theory: see the appendix.  But the problem of separating
  $k+2$ from  $k$ with an NIP theory for $k>0$
 is open.

\myeat{
We have also shown:

\begin{corollary} There are theories that are finitely redeemable
but not $\omega$-RQC.
\end{corollary}
This is because, as noted previously, isomorphism-invariant formulas
cannot be NFR.
}

\subsection{Conversion of isomorphism-invariant sentences to higher-order restricted-quantifier form and the complexity of theories} \label{subsec:complexitytheory}
The previous family of examples was constructed so that there is a tight
connection between the level of RQC  of general $L_V$
formulas and the descriptive
complexity of the underlying set of integers $J$, which also controls 
the complexity of the theory.
Is there  a more general connection between higher-order RQC for  a theory $T$ and the complexity
of decision procedures for $T$?

For an arbitrary theory there may be no connection between the complexity of
the theory and the ability to convert $L_V$ sentences to RQ ones: this is the
case, for example in Example \ref{ex:fcp}.
But the following result shows that if we focus on isomorphism-invariant sentences, then
collapse can follow  from reasonably weak assumptions on the
complexity of a decision procedure for the theory.
We tailor the result for isomorphism-invariant $\omega$-RQC, but variations hold for
$k$-RQC.
\begin{proposition}\label{prop:elem}
Consider a complete theory $T$ in a language $L$ and let $C$ be an infinite set of 
constant symbols disjoint from $L$.
Assume that valid sentences of $L \cup C$ can be  efficiently enumerated and
also that for any number $v$, there is an elementary
time algorithm that decides whether a first-order sentence $\phi$  in $L \cup C$
with  at most $v$ bound variables is consistent with $T$.
Then every isomorphism-invariant sentence in $L_V$ 
is equivalent to a $k$-RQ
one for some $k$.
\end{proposition}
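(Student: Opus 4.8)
The plan is to show that under the hypotheses, isomorphism-invariance lets us reduce the problem to a fixed finite amount of combinatorial data, which we can then capture by a higher-order sentence. The key insight is that because the target sentence is isomorphism-invariant, its truth value on an embedded finite model $(M,I)$ depends only on the \emph{type} of the active domain over $M$, and by indiscernibility we may assume the active domain lives within an indiscernible set, collapsing the $L$-structure to a linear order. I would proceed as follows.

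First, fix the isomorphism-invariant sentence $\theta \in L_V$ whose free $V$-atoms involve relations of some maximum arity; let $r$ be that arity. By Theorem \ref{thm:bbiso} (every NIP theory is $\exists 1$-RQC) — or, more directly, by the standard argument that isomorphism-invariant truth on embedded finite models is determined by types — I would argue that $\theta$ restricted to models whose active domain lies in an indiscernible set $I$ depends only on the finite $L$-type of tuples from $I$. Since $I$ is (order-)indiscernible, this type is governed by the ordering alone, so the semantics of $\theta$ over such embedded finite subsets is captured by an order-invariant property $Q$ of finite $V$-structures equipped with the induced order $<$ on $I$. The content of the Corollary \ref{cor:nipordinv} discussion is precisely that this order-invariant reduction is available.

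Next comes the use of the complexity hypothesis, which I expect to be the main obstacle. The task is to produce, for this order-invariant property $Q$, a single $k$-RQ sentence (equivalently, by the indiscernible reduction, a higher-order sentence over the ordered $V$-structure) defining it. The idea is that the elementary-time consistency oracle, applied to sentences of $L \cup C$ with a bounded number $v$ of variables, lets one decide, for each finite isomorphism type $\sigma$ of a $V$-structure of size $n$ sitting inside the indiscernible set, whether $\theta$ holds. A $k$-RQ formula can quantify over the (higher-order) combinatorial data encoding such a type, and then the membership bit is a \emph{fixed} Boolean value determined by the theory. The crux is to bound the order $k$ uniformly: here I would leverage that an elementary-time algorithm in the number of bound variables translates, via a Fagin-style or Stockmeyer-style descriptive-complexity correspondence, into definability in $k^{\mathrm{th}}$-order logic over finite structures for some fixed $k$ depending only on the height of the elementary-time tower. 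The elementary (bounded tower of exponentials) running time is exactly what pins $k$ down to a constant, since each level of the exponential hierarchy corresponds to one additional order of quantification.

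Finally, I would assemble these pieces: the order-invariant property $Q$ is decidable in elementary time (from the consistency oracle plus the efficient enumeration of valid sentences, which together let us compute the truth value for each finite input), hence by the higher-order descriptive-complexity correspondence it is definable by a $k^{\mathrm{th}}$-order pure-set sentence augmented with the $V$-relations and $<$, for some $k$. Translating this back through the indiscernible embedding — replacing $<$-quantification by active-domain quantification and pure-set higher-order quantification by the corresponding $\adom$-restricted higher-order quantifiers — yields a $k$-RQ sentence equivalent to $\theta$ over all embedded finite models. The main technical obstacle, as noted, is establishing the uniform bound on $k$ from the elementary-time assumption; the rest is a careful but routine transfer between the embedded-finite-model semantics and finite model theory along the indiscernible set.
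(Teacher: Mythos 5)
Your proposal is correct and follows essentially the same route as the paper: reduce the isomorphism-invariant sentence to a decision problem on abstract finite $V$-structures via the bounded-variable consistency oracle, then simulate the resulting elementary-time computation by higher-order quantification over the active domain. The one step the paper makes explicit that you leave implicit is the construction of the consistency query itself --- naming the $n$ active-domain elements by fresh constants $c_1,\ldots,c_n$, asserting their distinctness, and replacing each $V$-atom by a disjunction of equalities with these constants, which keeps the number of bound variables equal to that of the original sentence and so keeps the query within the scope of the fixed-$v$ elementary-time hypothesis; your detour through indiscernibles and Theorem~\ref{thm:bbiso} (which presupposes NIP, not assumed here) is unnecessary, since isomorphism-invariance alone already reduces everything to the abstract $V$-structure.
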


\begin{proof} Consider the following algorithm for determining if $\phi$ holds on an embedded finite
model: given a finite interpretation $I$ of relations in $V$ whose active domain has $n$ elements, we form an object coding  the $L \cup C$ sentence
\[
\gamma_I=\bigwedge_{i,j \leq n}  c_i \neq c_j \wedge \phi_I
\]
where $\phi_I$ is formed from $\phi$ by replacing every atom $R(\vec x)$ by $\bigvee_{\vec c \in R^I}
\bigwedge_k x_k = c_k$.
$\gamma_I$ is of size polynomial in $I$, and $\phi_I$ can thus be coded by suitable sequences
over the domain of $I$.
By assumption, there is an elementary time algorithm deciding whether a 
$\phi_I$ is consistent with $T$. The run of such an algorithm on $\phi_I$ can be captured again by a suitable
higher order object.  Since $\phi$ is isomorphism-invariant, being satisfied on some $n$ element subset implies
being true on $P$.
\end{proof}

However, note that the model-theoretic sufficient conditions
given in Section \ref{sec:def} imply that RQC can hold for theories that are not even decidable.

\section{Complexity disconnections and weakening RQC by allowing Expansion of the signature} \label{sec:robust}
In Section \ref{sec:higher} we dealt with examples where there was some correlation between the expressiveness of formulas
and complexity of evaluation.   We provided examples which not only fail to have $k$-RQC, but examples where logical formulas are highly expressive, and thus complex to evaluate. 

On the other hand, the failure of RQC or $k$-RQC alone does not mean that evaluation of formulas is complex. It might be that a structure fails to have RQC not because it is too expressive, but because it is not expressive enough.
Like quantifier-elimination, RQC is sensitive to the signature.
It is easy to construct theories that are ``badly behaved'' -- not even $\omega$-RQC --
but which become $1$-RQC  when the theory is expanded.

Before we give an example of this, we give a simple tool for verifying failure of $\omega$-RQC:

\begin{definition} \label{def:nfr}
We say that an $L_V$ formula $\phi$  is
``not finitely redeemable'' (NFR) if:

for each $k$, there are $P_k,P'_k$ and a function $f_k$ taking $P_k$ to $P'_k$
such that: $(M,P_k) \models \phi$, $(M,P'_k) \models \neg \phi$,
but  $f_k$ preserves all formulas with at most $k$ variables. 
\end{definition}

NFR formalizes the idea that there are embedded subsets distinguished by $\phi$ that are increasingly similar locally.
It is easy to see:

\begin{proposition} \label{prop:nfr} An NFR formula
cannot be converted to any higher-order restricted-quantifier sentence. Thus if $T$ has an NFR formula it does not have $\omega$-RQC.
\end{proposition}
\begin{proof} Fix any higher order RQ sentence $\phi'$. Let $k$ be such that every $L$-subformula has at most $k$ variables.
Then $\phi'$ would be preserved by $f_k$ above.
\end{proof}

We now turn to the example.

\begin{example} \label{ex:transient}
Let us return to Example \ref{ex:fcp},
an equivalence relation with classes of each finite size. 
We first argue that the $L_P$ formula $\phi_{\contained}$  is 
not equivalent to any $k$-RQ formula.
Fix the standard model of this theory, call it $M$.

We proceed by showing that $\phi_{\contained}$ is NFR, hence not $\omega$-RQC.
We choose $P_k$ to be a set of $k$ elements that exhausts an equivalence class of size $k$, and $P'_k$
to be a set of $k$ elements inside an equivalence class of size larger than $k$. We can then choose $f_k$
to be an arbitrary bijection between $P_k$ and $P'_k$.

Let $T^+$ expand $T$ to $L^+=\{E, <(x,y)\}$, stating that $<$ is a linear order
for which each $E$ equivalence class is an interval with endpoints. Then $T^+$ can
be shown to be $1$-RQC. For example, the sentence $\phi_\contained$  now becomes equivalent to
a restricted-quantifier one: intuitively, we say that $P$ contains the interval between two equivalent elements $a$ and $b$ which form
the endpoints of a class.
The full argument is included in the appendix.
\end{example}

We thus investigate define another weakening of RQC, allowing expansion of the interpreted signature. We wish to see if adding this ability to expand gives us a closer correlation between the collapse notion and complexity.
\begin{definition}
Say that an $L_V$  sentence $\phi$ is \emph{potentially $k$-RQC} (for $T$) if there
is some expansion of $T$ where $\phi$ is equivalent to a $k$-RQ sentence.
We say that a theory is \emph{potentially $k$-RQC} if there is an expansion of $T$ that
is $k$-RQC, and define potentially  $\omega$-RQC analogously.
A theory that is not potentially $\omega$-RQC  is said to be \emph{persistently
unrestricted}. 
\end{definition}
Thus the persistently unrestricted theories are ones in which this weakening does not help us obtain RQC.
Notice that the ``potentially RQ'' sentence $\phi_{\contained}$ in Example \ref{ex:fcp} is not isomorphism-invariant.
This is not a coincidence:
\begin{restatable}{proposition}{transient} \label{prop:transient}
If theory $T$ has an isomorphism-invariant $L_V$  sentence $\phi$  not
equivalent to a $k$-RQ  sentence, then the same is true in any consistent expansion
of $T$.
\end{restatable}

The argument is given in the appendix.

The proposition tells us that weakening RQC in this manner can not allow us to use unrestricted quantification  to do  any new ``pure relational computation''.
In particular, really badly-behaved theories like full arithmetic cannot become $k$-RQC via augmenting the signature.

An open question is:
\emph{Is every NIP theory potentially $\omega$-RQC?}
One cannot use isomorphism-invariant sentences to get a counterexample, since by Theorem \ref{thm:bbiso}
from \cite{baldben}, isomorphism-invariant formulas in NIP theories are always equivalent
to $1$-RQ ones that use an additional order, and thus in particular are equivalent to $2$-RQ ones.

Although we do not have an NIP theory that is persistently unrestricted, the main result of this section is that there
are  persistently unrestricted theories that are reasonably well-behaved.
We show that even when isomorphism-invariant formulas are well-behaved, and 
the complexity of deciding the theory is elementary (see \cite{kozenbooleanalg}), 
the theory may be persistently unrestricted, Thus even with this weakening, failure of the (weakened) collapse does not imply high complexity of evaluation:

\begin{theorem} \label{thm:abapersistentlyunbounded}
In the theory of atomless Boolean Algebras, each $L_V$ formulas that is isomorphiism-invariant are equivalent to ``
an $2$-RQ formula. 

In fact,  there is an infinite
set such that every formula is equivalent to a $2$-RQ one for embedded finite models coming from this set. Thus
every isomorphism-invariant $L_V$ sentence is equivalent to a $2$-RQ one.
In addition, isomorphism-invariant
$L_P$ sentences are equivalent to $1$-RQ ones. But the theory  is persistently unrestricted.
\end{theorem}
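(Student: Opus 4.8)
The plan is to treat the three assertions separately, establishing the two positive ($\exists$-collapse) statements first and then persistent unrestrictedness. For the positive part I would fix the unique countable atomless Boolean algebra $M$, which is $\omega$-categorical and admits quantifier elimination in the Boolean language $\{0,1,\wedge,\vee,\neg,\leq\}$, and take $S_0=\{e_i:i\in\omega\}$ to be a sequence of pairwise disjoint nonzero elements with $\bigvee_i e_i\neq 1$, chosen generically so that $S_0$ is a totally indiscernible set and the algebra below each $e_i$ remains atomless and free. The key lemma to establish is that, for embedded finite models whose active domain lies in $S_0$, a single first-order quantifier $\exists z$ over $M$ can only see the \emph{trace} of $z$ on the finite disjoint set $P=\{e_{i_1},\dots,e_{i_n}\}$: by quantifier elimination the relevant content of $z$ is, for each $e_{i_j}$, which of the finitely many Boolean regions cut out by the already-quantified elements are met by $z$, and by genericity every such trace is realised while the part of $z$ outside $\bigvee P$ contributes nothing definable. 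Since a trace is a coloring of $P$ by a finite palette, each model quantifier becomes a bounded block of monadic quantifiers over $\adom$, and iterating over the bounded number of quantifiers of a fixed formula keeps us inside monadic second-order logic. Thus every $L_V$ formula is equivalent over this class to an MSO formula over the $V$-structure, i.e.\ to a $2$-RQ formula, giving $\exists\,2$-RQC. When $V=\{P\}$ the induced structure is a \emph{pure set}, and MSO over a pure unordered set expresses only boolean combinations of thresholds $|P|\geq k$, which are already first-order over $P$; hence over $S_0$ every $L_P$ formula collapses to a $1$-RQ formula, which is $\exists\,1$-Monadic RQC. The two isomorphism-invariance corollaries then follow by the standard deduction used for Theorem \ref{thm:bbiso}: the resulting $2$-RQ (resp.\ $1$-RQ) formula mentions only the active domain, so its truth value depends only on the $V$-isomorphism type, and since any finite $V$-structure can be realised on disjoint elements of $S_0$, this value agrees with the isomorphism-invariant sentence on every embedded finite model.

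The harder direction is persistent unrestrictedness, whose witness must be a non-isomorphism-invariant sentence: Proposition \ref{prop:transient} forbids using an isomorphism-invariant one, and indeed we have just shown these collapse. The mechanism I would exploit is that atomless Boolean algebras carry the independence property of every arity. The base formula $\chi(x;y_1,\dots,y_n):=(x\wedge y_1\wedge\cdots\wedge y_n\neq 0)$ shatters an $n$-dimensional grid of disjoint cells $e_{\vec\imath}=a^1_{i_1}\wedge\cdots\wedge a^n_{i_n}$, since for any pattern $U$ the element $\bigvee_{\vec\imath\in U}e_{\vec\imath}$ realises exactly $U$. Consequently a single unrestricted quantifier over $M$ simulates quantification over an $n$-ary relation on the grid factors, and by stacking grids of grids this lets first-order model quantification emulate $k$-th order quantification over the active domain for unbounded $k$. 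I would use this emulation to port the spectrum separation of Proposition \ref{prop:spectrum}: encoding, on suitable grid-shaped embedded finite models, a decidable set $J$ that is a pure $k$-order spectrum for no $k$ (exactly as in the Corollary following Proposition \ref{prop:spectrum}), one obtains a sentence $\phi$ that is $k$-RQ for no $k$ in ABA itself.

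The main obstacle, and the reason for the appeal to \cite{definabilitypatterns}, is robustness under expansion: an arbitrary $L^+\supseteq L$ expansion could in principle supply the missing high-order power through its new relations, so failure of $\omega$-RQC in ABA does not by itself yield persistence. To neutralise this I would pass to the Ramsey (definability-pattern) expansion of ABA and argue that, on the generic grid configurations used above, the interpretation of any new relation is already determined by the canonical order/type data. A purported $k$-RQ equivalent of $\phi$ in an expansion $T^+$, once restricted to these configurations, then uses no more than the canonical structure and is genuinely a $k$-order formula over the finite $V$-structure, contradicting the choice of $J$. Making the reduction to the Ramsey expansion precise, and verifying that the grids can be taken generic simultaneously for the pattern structure and for an \emph{arbitrary} expansion, is where the real work lies.
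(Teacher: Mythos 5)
Your handling of the two positive claims is essentially the paper's argument: you work over an infinite antichain of pairwise disjoint elements and reduce an unrestricted quantifier to its trace on the finite set. This is the Feferman--Vaught decomposition of the algebra as (subalgebra generated by $P$) $\times$ (complementary factor) in disguise; the paper routes it through Claim \ref{clm:fv} (every $L_V$ sentence is $1$-RQ over embedded finite models whose active domain is a subalgebra) and then replaces quantification over the generated subalgebra by monadic quantification over its atoms. One caution: the part of a quantified element lying outside $\bigvee P$ \emph{does} affect truth values (through its interaction with other quantified elements); it is harmless only because the complementary factor is again a model of the same complete theory, so the sentences it contributes are decided uniformly. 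Your phrase ``contributes nothing definable'' is exactly the point that Feferman--Vaught makes precise, and your sketch elides it.

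The persistence argument has a genuine gap, and the mechanism you chose cannot be repaired. The ``grids of grids'' emulation of unbounded-order quantification does not exist in a Boolean algebra: a single element codes a subset of an antichain (its trace), and via an $n$-dimensional grid it codes an $n$-ary relation on the grid factors, but the hierarchy stops there --- cells are meets of active-domain elements and meets of meets collapse, joins of joins collapse, so there is no analogue of the field construction $X_{i+1}=\{1/(t_i-y):y\in Y_i\}$ that drives the iterated-powerset interpretation for pseudo-finite fields. A fixed $L_V$ sentence, having a fixed number of quantifiers, only ever accesses bounded-order objects over the cell decomposition of the active domain; what it \emph{can} access that a $k$-RQ sentence cannot is $L$-formulas of unbounded arity. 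Consequently no single sentence of $\alba$ defines a non-elementary spectrum (indeed, by the first part of the theorem every isomorphism-invariant sentence is already $2$-RQ), and your proposed witness $\phi$ does not exist. The failure of $\omega$-RQC here is an \emph{arity} phenomenon, not an \emph{order} phenomenon, and the witness must be of the corresponding type.

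The paper's route is accordingly different in kind. It fixes one concrete expansion $\alba^<$ (a distinguished partition of $1$ ordered discretely, with the generated algebra ordered lexicographically), and exhibits for the simple non-isomorphism-invariant sentence $\phi_{union}$ (``some element of $P$ is the union of the others'') an explicit not-finitely-redeemable family: pairs $U_k, U'_k$ with a bijection $f_k$ preserving all atomic formulas in at most $k$ variables of the \emph{expanded} language but disagreeing on $\phi_{union}$. Such a family defeats every $k$-RQ sentence at once, since a $k$-RQ sentence touches the model only through bounded-arity $L$-formulas on active-domain tuples. The transfer to an \emph{arbitrary} expansion $M^*$ --- the step you correctly identify as the crux but leave open --- is then handled by citing the result of \cite{definabilitypatterns} that $\alba^<$ is an everywhere Ramsey theory: inside an elementary extension of $M^*$ (further expanded to a model of $\alba^<$, and taken countably saturated) one finds a copy of the $\alba^<$-model on which every new relation is $L(\alba^<)$-definable, so the NFR family transplants verbatim. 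The robustness under expansion comes entirely from this everywhere-Ramsey property of one specific expansion together with an NFR witness stated against that expansion's full language; no genericity argument about grid configurations is needed or available.
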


We will first sketch an argument that there is an infinite  set $A$ such that, for embedded finite structures with
active domain of the $V$ relations in $A$, every $L_V$ sentence is equivalent to a
Monadic Second Order Logic RQ-formula. In fact, we will describe such a set where the elements are totally indiscernible: the $L$ formulas $\phi(\vec a)$ satisfied
by tuples $\vec a$ of distinct elements are independent of the choice of $\vec a$.
Hence for $P$ unary, isomorphism-invariant $L_P$ sentences are equivalent to first-order sentences quantifying
over $P$.

Fix an $L_V$  $\phi$ that depends only on the isomorphism type of the $V$ structure. Let $\mathcal C$ be the class of embedded finite models
where the active domain is a subalgebra of the underlying substructure.
We observe:

\begin{claim} \label{clm:fv}
Every $L_V$ $\phi$ is equivalent, for embedded finite
models in $\mathcal C$, to a $1$-RQ formula. 
\end{claim}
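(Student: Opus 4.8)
The plan is to prove the claim by induction on the structure of $\phi$, reducing to the elimination of a single unrestricted existential quantifier, and then to exploit the defining hypothesis of $\mathcal C$: over $\mathcal C$ the active domain $A=\adom(I)$ is a finite subalgebra of the atomless Boolean algebra $M$. First I would invoke the classical quantifier elimination for atomless Boolean algebras to replace every $L$-subformula by a quantifier-free one, so that all $L$-atoms are Boolean equations $t=0$ for Boolean terms $t$. Since Boolean connectives and active-domain quantifiers preserve $1$-RQ form, and an unrestricted $\forall$ is dual to $\exists$, it suffices to convert a formula $\exists x~\psi(x,\vec u)$, with $\psi$ already $1$-RQ, into a $1$-RQ formula. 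Splitting on whether $x$ lies in $\adom$, the subformula $\exists x\in\adom~\psi$ is already $1$-RQ, so the real work is the case $x\notin\adom$; there every $V$-atom containing $x$ is false, so I delete those atoms and obtain $\psi_0$ in which $x$ occurs only inside quantifier-free $L$-atoms (the remaining $V$-atoms and active-domain quantifiers involve only $\vec u$ and the bound variables $z\in A$).

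The heart of the argument is the standard description of types over a finite subalgebra. For a finite subalgebra $B$ of an atomless Boolean algebra, the quantifier-free type of an element $x$ over $B$ is determined by the labeling of the atoms of $B$ recording, for each atom $a$, whether $a\le x$, $a\wedge x=0$, or $0<a\wedge x<a$; moreover, by atomlessness \emph{every} such labeling is realized. Taking $B=\langle A,\vec u\rangle$, the truth of $\psi_0(x,\vec u)$ depends on $x$ only through this labeling: expanding the active-domain quantifiers over the finite set $A$, $\psi_0$ becomes an $L$-formula whose only $x$-free parameters lie in $B$, so writing each $L$-atom via the Boole expansion $t=(x\wedge p)\vee(\neg x\wedge q)$ with $p,q\in B$, the atom $t=0$ is equivalent to $p\le y_0\wedge q\le y_1$, where $y_1,y_0\in B$ are the joins of the atoms of $B$ labeled $1$ and $0$. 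Thus $\exists x\notin\adom~\psi_0$ holds iff there is a labeling, with the partial part $y_*=\neg(y_0\vee y_1)$ nonzero, for which the resulting active-domain conditions make $\psi_0$ true.

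It remains to express ``there is such a labeling'' as a $1$-RQ formula, and this is where the subalgebra hypothesis is essential. Because $A$ is a subalgebra, I can code the partition $(y_1,y_0,y_*)$ of $1$ by active-domain data: refining over the finitely many atoms $c(\vec u)$ of $\langle\vec u\rangle$ — a fixed finite set of Boolean terms in the free variables — a labeling is given by triples $(y_1^c,y_0^c,y_*^c)\in A^3$ partitioning $1$ in $A$, from which $y_i=\bigvee_c\big(c(\vec u)\wedge y_i^c\big)$. I would therefore quantify over this code with a block of active-domain quantifiers $\exists y_i^c\in\adom$, append the quantifier-free conditions that the $y_i^c$ partition $A$ and that $y_*\neq 0$ (which, by atomlessness, guarantees a realizing witness $x\notin A$), and replace every $L$-atom involving $x$ by its reduced form $p\le y_0\wedge q\le y_1$. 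The outcome is built from $L$-atoms and $V$-atoms using only active-domain quantifiers, hence $1$-RQ.

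The main obstacle is precisely the bookkeeping around the free parameters $\vec u$: if one labels the atoms of $A$ alone, the coefficients $p,q$ returned by the Boole expansion need not lie in $A$, so the reduced conditions would fail to be active-domain. Passing to $B=\langle A,\vec u\rangle$ and coding the labeling cell-by-cell over $\langle\vec u\rangle$ repairs this while keeping every quantified coefficient inside the active domain, and it is here — rather than in the type analysis — that the assumption defining $\mathcal C$ does the real work. For sentences the parameters $\vec u$ disappear and the code collapses to a single triple $(y_1,y_0,y_*)\in A^3$.
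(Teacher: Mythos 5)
Your overall strategy is sound and is essentially the paper's own ``alternative'' proof from the appendix: eliminate one unrestricted existential at a time by guessing, with active-domain data, the quantifier-free type of the witness over the subalgebra generated by the active domain and the parameters, using atomlessness for realizability. (The paper's primary proof is different and shorter: it factors the ambient algebra as $P_0\times B$ for $P_0$ the active-domain subalgebra and invokes Feferman--Vaught, so that the $B$-components of the decomposition are constants by completeness. Your route avoids Feferman--Vaught at the cost of the explicit type bookkeeping, and your cell-by-cell coding over the atoms of $\langle\vec u\rangle$ is exactly the right way to keep the quantified coefficients inside $\adom$; the paper's appendix version does the analogous thing with its ``$\sigma_2$-representations''.)

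There is, however, one step that fails as written: the claimed equivalence ``$\exists x\notin\adom~\psi_0$ holds iff there is a labeling with $y_*\neq 0$ making $\psi_0$ true.'' The ``if'' direction is fine ($y_*\neq 0$ forces any realizer to split an atom of $B=\langle A,\vec u\rangle$, hence to lie outside $B\supseteq A$), but the ``only if'' direction is false: a witness $x\notin A$ may still lie in $B$, in which case its labeling has $y_*=0$. Concretely, take $\psi_0(x,u)$ to be $x=u$ with $u\notin A$: the left side is true, but $x=u$ reduces to $\neg u\le y_0\wedge u\le y_1$, which forces $y_0\vee y_1=1$ and hence $y_*=0$, so no labeling with $y_*\neq 0$ works and your formula wrongly evaluates to false. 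This matters in the induction because $\vec u$ may be bound by \emph{unrestricted} outer quantifiers, so $\langle A,\vec u\rangle$ genuinely exceeds $A$. The repair is local: replace the conjunct $y_*\neq 0$ by the disjunction of $y_*\neq 0$ with ``$y_*=0$ and $y_1\notin A$,'' where the latter is the $1$-RQ condition $\neg\exists a\in\adom~\bigl(a=\bigvee_c (c(\vec u)\wedge y_1^c)\bigr)$ (when $y_*=0$ the unique realizer is $y_1$ itself, so this exactly captures the missing witnesses). With that correction, and after flattening $V$-atoms so that their arguments are variables (so that ``every $V$-atom containing $x$ is false when $x\notin\adom$'' is literally justified), your argument goes through.
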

\begin{proof}
Consider a finite subalgebra $P_0$. Note that the full Boolean Algebra factors as $P_0 \times B$, where
$B$ is just a copy of the full algebra. Thus by the Fefferman-Vaught theorem for products
\cite{hodgesbook}, a first-order sentence over $P$ can be decomposed into a Boolean combination of sentences
over $P_0$ and over $B$, uniformly in $P_0$. The sentences over $P_0$ are just $1$-RQ sentences,
while  the sentences over $B$ are independent of $P_0$, hence they
can be replaced with true or false.
\end{proof}
An alternative inductive transformation proving the claim is given in the  appendix.


Now take $A$ to be an infinite antichain  in a model. We argue that $A$ has the required property.
The antichain is definable within the subalgebra that it generates by a $1$-RQ formula: it is just  the atoms
of that algebra. We transform $\phi$ to $\phi'$
as follows.
We add to the $V$ signature a unary predicate $P$, and add a conjunct $\phi_{\kw{sub}}$ saying that
the domain of $P$ is a subalgebra, and that its atoms are exactly the active domain of $V$.
We replace references to a $V$ relation symbol $U$ in $\phi$ with references to the 
restriction of $U$ to elements that are atoms of the active domain
of $P$. For example, if $\phi = \exists x y ~ \neg U(x,y)$, then $\phi'$ would be
$\phi_{\kw{sub}} \wedge  \exists x y ~ \neg  (\mbox{ $x$ $y$ are atoms of the  domain of $P$ that satisfy $U$ })$.
It is easy to verify that  $\phi'$ holds on an interpretation of relational
vocabulary $U \cup \{P\}$ 
where $P$ is the subalgebra generated by the active domain of $U$,
exactly when $\phi$ is true on the original interpretations.
$\phi'$ can be converted to a $1$-RQ formula $\phi''$ over $\mathcal C$.
Then quantifications in $\phi''$, where quantifications  range  over the subalgebra generated
by the active domain, can be replaced by references to subsets of the atoms, giving
a Monadic Second Order restricted-quantifier formula.

Summarizing, we have shown each claim in Theorem \ref{thm:abapersistentlyunbounded} except the last one.
We now sketch the argument that the theory is persistently unrestricted.

We start by considering one particular extension $\alba^<$, which we describe
as the complete theory of a particular model. In this model  we have a distinguished
partition of $1$ into countably many elements ordered by a discrete linear order $<$.
 We look at the Boolean Algebra generated by these sets, which we can identify
as infinite bit strings over $\nats$; we order two elements via the lexicographic
ordering on   the corresponding strings.
 We next consider the $L_P$ formula $\phi_{\isunion}$ stating that $P$ has an element $x_P$ that is the union
of all other elements of $P$. 
We claim  that $\phi_{\isunion}$ is NFR within $\alba^<$. Let us recall the definition of an $L_V$ formula
being NFR, Definition \ref{def:nfr}.
This means that there is a family of pairs of embedded finite subsets $P_n$ and  $P'_n$, 
along with  a mapping $f_n$ from $P_n$ to  $P'_n$
such that $f_n$ preserves all $L$-formulas with at most $n$ free variables, but each $(M,P_n)$ disagrees with $(M,P'_n)$ on $\phi$.
It is easy to see that $\alba^<$ admits such a family: we let $P_n$ be a set of $n$ disjoint Boolean
Algebra elements in the linear ordered set, along with an additional element
representing their union.  While $P'_n$ consists of all but  the $<$-last of the $n$ elements, along
with the union of all the $n$ elements.  Here ``all $L$ formulas'' can
be replaced by ``all atomic $L$ formulas'', using quantifier elimination in $\alba$.
We  note that NFR wtnesses
the failure of $\omega$-RQC, as mentioned in  Proposition \ref{prop:nfr}.

We next make use of a result from \cite{definabilitypatterns}.
For a tuple $\vec x_0$ in a $L'$ structure $M'$ and $L$ a subset of $L$,
the $L$-type of $\vec x_0$ is the set of formulas $\phi(\vec x)$ with vocabulary in $L$ that
are satisfied by $\vec x_0$.
A theory $T$ is \emph{everywhere Ramsey theory} if
for every  model $M$ of $T$, for every expansion $M_e$ of $M$,
there is an elementary extension $M^*_e$ of $M_e$ and a copy
$M'$ of $M$ such that for every $k$, the $L(M)$-type of a $k$-tuple in $M'$ determines
its type in $M^*_e$. 
The result we use is  Example 5.7 from \cite{definabilitypatterns}s:
\begin{proposition}
$\alba^<$ is an everywhere Ramsey theory.
\end{proposition}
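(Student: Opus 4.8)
The plan is to deduce the statement from the general criterion of \cite{definabilitypatterns}, which reduces \emph{everywhere Ramsey} to a structural Ramsey property of a suitably ordered reduct together with a compactness-based extraction argument. Concretely, I would proceed in three stages: first reformulate everywhere Ramsey as a statement about finite configurations of the canonical model; then isolate and verify the underlying combinatorial Ramsey property of its finite substructures; and finally run the extraction to produce the copy $M'$.

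For the first stage I would unwind the definition. Given $M \models \alba^<$ and an arbitrary expansion $M_e$ in a language $L_e \supseteq L$, the goal is a copy $M'$ of the $L$-reduct of $M$, living in an elementary extension $M_e^*$ of $M_e$, on which the full $L_e$-type of every tuple is a function of its $L$-type. By a standard compactness argument this follows once one knows that for every $k$, every $L_e$-formula, and every finite $L$-configuration, the colour it induces can be made constant on arbitrarily large finite sub-copies: this is precisely the modeling property attached to a Ramsey class. So the problem reduces to showing that the class of finite substructures of the canonical model of $\alba^<$, suitably expanded by a linear order, is a Ramsey class, equivalently (by the Kechris--Pestov--Todorcevic correspondence) that the automorphism group of an ordered homogeneous companion is extremely amenable. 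Since $\alba$ has quantifier elimination, at this stage one may work throughout with quantifier-free $L$-types.

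For the second stage I would identify the combinatorial core. The $L$-reduct is the atomless Boolean algebra, which is the \fraisse limit of the finite Boolean algebras, and the Ramsey property for finite Boolean algebras under their natural linear order is exactly the content of the dual Ramsey theorem of Graham and Rothschild (the Ramsey theorem for finite parameter sets). The extra data in $\alba^<$ --- the distinguished partition of $1$, its discrete order, and the induced lexicographic order on bit strings --- must then be absorbed into the ordering expansion. The key point is that the lexicographic order is not a generic order but a canonical, definable order on the generated algebra that is determined by the discrete order on the atoms; thus a finite substructure is a finite Boolean algebra carrying exactly the canonical order appearing in the dual Ramsey theorem. A product-Ramsey argument, combining the classical Ramsey theorem for the finite linearly ordered index set of atoms with the dual Ramsey theorem for the Boolean layer, should then yield the Ramsey property for the combined class.

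The third stage is the extraction: colouring finite sub-copies of $M$ inside $M_e$ by their $L_e$-quantifier-free type and applying the Ramsey property gives, for each finite size and each $L_e$-formula, a sub-copy on which the colour depends only on the $L$-type; passing to the limit over all sizes and formulas by compactness yields $M_e^*$ and the desired copy $M'$. I expect the main obstacle to be the second stage: verifying that the discrete order and the induced lexicographic order can be folded into a single Ramsey expansion of finite Boolean algebras without destroying the Ramsey property, since the order here is constrained by the Boolean and partition structure rather than freely chosen. Checking that this constrained ordered class still amalgamates and is Ramsey --- rather than merely citing the unordered dual Ramsey theorem --- is the delicate point, and is where the specific results of \cite{definabilitypatterns} on Ramsey expansions would be invoked to close the argument.
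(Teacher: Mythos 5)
The first thing to say is that the paper does not prove this proposition at all: it is imported from \cite{definabilitypatterns} (the appendix points to Section 5.6 of that paper, where $\alba^<$ is shown to satisfy the everywhere-Ramsey criterion there denoted $(\dagger)$). So there is no in-paper argument to compare yours against; the relevant comparison is with the cited source. Your three-stage plan --- reformulate everywhere Ramsey as a finitary colouring statement, establish the structural Ramsey property of the ordered finite substructures, and extract the copy $M'$ by compactness --- is the standard KPT-style route and is plausibly the shape of the argument in \cite{definabilitypatterns}. Stages one and three are essentially routine, though note that the definition here quantifies over \emph{every} model $M$ and \emph{every} expansion $M_e$ in an arbitrary language, so the extraction must colour by one $L_e$-formula at a time and then use compactness/saturation to realize a full copy of a possibly uncountable $M$; you gesture at this correctly.

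The genuine gap is exactly where you locate it, and it is not a small one: you do not actually prove that the class of finite substructures of the canonical model of $\alba^<$ is a Ramsey class. The Graham--Rothschild dual Ramsey theorem gives the Ramsey property for finite Boolean algebras equipped with the \emph{canonical (antilexicographic) orderings} induced by orderings of their atoms, whereas the order in $\alba^<$ is a lexicographic order on bit strings determined by a discrete order on a distinguished partition, and the finite substructures also carry the trace of that partition as extra structure. Whether this constrained class amalgamates and is Ramsey is precisely the content of the proposition, and your proposal resolves it by ``invoking the specific results of \cite{definabilitypatterns}'' --- that is, by citing the very source the paper itself cites for the whole statement. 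As a blind proof the argument is therefore circular at its core step: you have correctly reduced the proposition to the right combinatorial claim, but a reduction followed by a citation of the target is not a proof. To close it you would need either to verify directly that the ordered, partition-decorated finite Boolean algebras form a Ramsey class (for instance by a product or partite construction layered over Graham--Rothschild), or to accept that the proposition is, as in the paper, a black-box import.
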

Further discussion on Ramsey properties of $\alba^<$, and the corresponding class of finite lex-ordered subalgebras, can be found in \cite{kpt}.

Now consider an arbitrary expansion $M^*$ of a model of $\alba$.
We can further expand $M^*$ to a model of $\alba^<$ by choosing the ordering
appropriately. Let $M, P_n, P'_n, f_n$  be the witness finite models
for  $\alba^<$ above.
Since $\alba^<$ is everywhere Ramsey, 
we  find a copy of $M$ inside an elementary extension of $M^*$. The copies
of $P_n, P'_n, f_n$ will witness that $M^*$ is also not  $\omega$-RQC.


\section{Failure of weaker forms of collapse for finite fields and connections to complexity of decision procedures} \label{sec:pseudo}

Section \ref{subsec:separate}
provided us with examples of theories 
that have different levels of higher-order collapse, and also varying complexity of evaluation. It is natural to ask about decidable theories
arising naturally, both in terms of collapse and complexity. In  Section \ref{sec:robust} we saw examples of decidable structures that behave badly with respect to collapse -- not even $\omega$ RQC --  but this did not reflect something fundamental about complexity. This was true in the case
of Example \ref{ex:transient}, since collapse could be fixed by expansion.

What about other well-studied decidable theories.
In previous work it was noted that many decidable examples (e.g. real-closed fields) 
satisfy broad model-theoretic properties like o-minimality,
known to imply $1$-RQC. 
 We have seen that random graph is $2$-RQC.
Another well-known  source of decidable theory is the theory of \emph{finite fields}: the set of sentences that hold in all finite structures. This was shown decidable by Ax \cite{ax}.
This is an incomplete theory, but there are decidable completions, the theories of \emph{pseudo-finite fields}: infinite
fields that satisfy all the sentences
holding in every finite field. There are several alternative definitions.
For example,  they are the ultraproducts of finite fields \cite{ax,fieldarith}: a pseudo-finite
field of characteristic $p$ is an ultraproduct of finite fields of characteristic $p$, while a pseudo-field of characteristic
$0$ is an ultraproduct of finite fields with unbounded characteristic.
Pseudo-finiteness can be axiomatized, giving another decidable incomplete theory \cite{ax}: the assertions we make below will apply to all pseudo-finite
fields  -- equivalently, all completions of the axiomatization.

Some of our results will focus on
 positive characteristic. We let $F_p$ be the theory of finite fields in characteristic $p$ and  $PF_p$ be the  theory of pseudo-finite fields in characteristic
$p$.  
 We begin by showing that these
   are not $k$-RQC for any $k$, even  for isomorphism-invariant formulas.
This result  shows that natural decidable theories can fail to have even weak forms of RQC.  It is arguably surprising
given that pseudo-finite fields have some commonalities with
the $2$-RQC random graph. They are both canonical examples
of  ``simple but unstable'' (hence not NIP) theories \cite{wagnersimple}.

In fact,  we will show that in pseudo-finite fields, and in sufficiently large finite fields, we
can interpret the $n^{th}$ iterated powerset over
$P$, for all $n$, uniformly in the field.
From there it is easy to show, since we can code $k$-order logic for any $k$,
that such fields cannot be $k$-RQC for any $k$.
We will use this result to derive new lower bounds on decision procedures for pseudo-finite fields and for finite fields.

 By a $k^{th}$-order finite set theory structure over a predicate $P$, we mean a structure 
 with language $(P_1,\ldots,P_k,E_1,\ldots,E_k)$, with $P_1=P$, $P_i$ a unary predicate,
 $E_i \subset P_i \times P_{i+1}$ a binary predicate for a membership relation, such that
 $E_i$ satisfies extensionality and $P_{i+1}$ can be viewed as a set of subsets of $P_i$ closed under flipping membership of one element, i.e. such that:
\begin{align*}
\forall x \in P_i ~ \forall y \in P_{i+1} ~ \exists y' \in P_{i+1}\\
(x E_i y' \leftrightarrow \neg x E_i y)
 \wedge  [\forall z \in P_i ~ z \neq x \rightarrow (z E_i y' \leftrightarrow   x E_i y) ]
\end{align*}  In case the interpretation of $P$
 is finite, this implies that $(P_i,P_{i+1},E_i)$ is  isomorphic to the membership relation on $P_i$ and the set of subsets of $P_i$. Interpreting such a structure is equivalent to interpreting
the $k^{th}$ iterated powerset of $P$.
      
 \begin{theorem}  \label{thm:pseudofiniteposexhaustive} For any $k$, there  exist 
$L_P$ formulas 
\[
\phi_{P_1},\cdots,\phi_{P_k}, \phi_{E_1},\ldots,\phi_{E_k}
\]
with the following properties.
\begin{compactitem}
\item each $\phi_{P_i}$ has one distinguished
free variable, and each $\phi_{E_i}$ has two distinguished free variables.
Each formula may  optionally have
additional ``parameter variables''.
\item  for any $M \models PF_p$,  
and any interpretation of $P$ as a finite subset of $M$, 
for some elements $(c_1,\ldots,c_j)$ of $M$  for the parameter variables
 the $\phi_{P_i},\phi_{E_i}$  define the full $k^{th}$ order set theory structure over $P$. 
\end{compactitem}
Further, there is an  $L_P$  formula $good$ such that the parameters can
be taken to satisfy $good$, and every parameters satisfying $good$ have
the properties above.
\end{theorem}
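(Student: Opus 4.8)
The plan is to prove this by a single transfer principle combined with an explicit coding that is iterated $k$ times. The backbone is the observation that, for each \emph{fixed} $n$, the assertion ``for every $n$ distinct elements $a_1,\dots,a_n$ and every one of the $2^n$ possible membership patterns there is a parameter realizing it'' is a \emph{single} first-order sentence $\sigma_n$ in the language of fields. By Weil's character-sum estimates (and, for the higher levels, the Lang--Weil point-counting estimates), $\sigma_n$ holds in every finite field $\mathbb{F}_q$ with $q$ larger than a bound depending only on $n$. Since every model of $PF_p$ is elementarily equivalent to a nonprincipal ultraproduct of finite fields of characteristic $p$, {\L}o\'s's theorem gives $\sigma_n\in PF_p$ for every $n$. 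Thus the combinatorial richness of large finite fields becomes available, through a fixed formula with varying parameters, in every model of $PF_p$; this is the only point at which pseudo-finiteness is used, and it is exactly what forces the construction to be phrased with a fixed formula and parameters absorbing the dependence on $|P|$, rather than with a formula depending on $|P|$.

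For a single level I would first secure \emph{nonemptiness, surjectivity, and flip-closure}, and only then upgrade to \emph{extensionality}. Surjectivity and flip-closure come from the quadratic (Paley) incidence $\epsilon(x,y)\equiv\exists z\,(z\neq 0\wedge z^2=x-y)$ (with a suitable additive variant in characteristic $2$): by the sentences $\sigma_n$, for every finite $P$ and every $S\subseteq P$ there is $c$ with $\{x\in P:\epsilon(x,c)\}=S$, which immediately yields the displayed closure axiom and nonemptiness. The map $c\mapsto\{x\in P:\epsilon(x,c)\}$ is, however, wildly non-injective, so $\epsilon$ alone is not extensional. To obtain canonical single-element codes I would use a product code: fix a parameter $t$ and send $S\subseteq P$ to $c_S=\prod_{p\in S}(t-p)$. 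For $t$ outside a proper subvariety (distinct $S$ give distinct monic polynomials $\prod_{p\in S}(X-p)$, so they agree only at finitely many values of $X$) the assignment $S\mapsto c_S$ is injective, the image has exactly $2^{|P|}$ elements, and membership is recovered by the first-order clause $E_1(x,c)\equiv\exists c'\in P_{2}\,(c=(t-x)\,c')$; hence $(P_1,P_2,E_1)$ is extensional by construction.

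\textbf{Main obstacle.} The heart of the matter is to define the canonical code set $P_{i+1}$ by a \emph{fixed} formula (independent of $|P|$) while retaining extensionality: the product image is injective but its naive description has degree growing with $|P|$, whereas the quadratic image is cut out by a fixed formula but is not extensional. I expect this to be the delicate step, and I would resolve it by combining the two ingredients: use $\epsilon$ to pin down, by a fixed formula, which subset of $P$ a candidate codes, and then select one representative per $\epsilon$-fiber by a fixed formula, appealing to the definability (uniform canonical-parameter / elimination-of-imaginaries) properties of definable families over pseudo-finite fields, with the needed generic parameter certified by the $\sigma_n$. The formula $good$ then collects the bounded-complexity genericity demands on the parameters (that the $t-p$ are nonzero and the relevant bounded-degree products and residue patterns are pairwise distinct and decode correctly). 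For each fixed $n$ the statement ``$\exists$ good parameters'' is again true in all sufficiently large finite fields by Lang--Weil, hence lies in $PF_p$, while ``good parameters imply that the $\phi_{P_i},\phi_{E_i}$ define the tower'' is a uniform first-order verification valid in every model; reducing this selection step to a point-counting estimate over finite fields is where the real work lies.

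Finally I would iterate. Having realized $\mathcal{P}(P_i)$ as a finite definable set $P_{i+1}$ of single elements, repeat the construction with a fresh parameter $t_{i+1}$ generic over everything built so far. Since each $P_i$ is finite (a tower of exponentials in $|P|$, but finite for fixed $|P|$ and $k$), every genericity condition remains a finite union of proper subvarieties, so a jointly good tuple $(t_1,\dots,t_k)$ exists by the same Lang--Weil-plus-{\L}o\'s argument, and the number of parameter variables is bounded in terms of $k$ alone. This produces the fixed formulas $\phi_{P_1},\dots,\phi_{P_k},\phi_{E_1},\dots,\phi_{E_k}$ and the single formula $good$ interpreting the full $k^{th}$-order set-theory structure over every finite $P$ in every model of $PF_p$, as required.
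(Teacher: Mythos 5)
Your proposal correctly identifies the crux of the theorem, but it does not close it, so there is a genuine gap. The Paley incidence $\epsilon(x,c)$ gives, via Weil's bound and {\L}o\'s, that every subset of $P$ occurs as a trace $\{x\in P:\epsilon(x,c)\}$; this yields second-order quantification \emph{up to the definable equivalence} ``same trace on $P$,'' but not an extensional set $P_2$ of single-element codes. Your product code $c_S=\prod_{p\in S}(t-p)$ is injective, but its image is not cut out by any fixed formula, and your decoding clause $\exists c'\in P_2\,(c=(t-x)c')$ is circular in a field, where division is always possible: it just says $c/(t-x)\in P_2$, so $P_2$ must already be defined. The proposed repair --- select one representative per $\epsilon$-fiber by appealing to elimination of imaginaries / canonical parameters --- does not go through as stated: the equivalence relation ``same $\epsilon$-trace on $P$'' is a fixed $L_P$-formula, but as an $L$-formula it has $|P|$ parameters, and the canonical parameters supplied by EI are tuples whose length depends on $|P|$; there is no evident single formula, uniform in $|P|$, that produces one \emph{single-element} representative per class. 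You flag this step yourself as ``where the real work lies,'' but the theorem \emph{is} precisely this uniformity claim, so the proof is incomplete without it. (Passing to a quotient interpretation, as the paper remarks is possible, would handle one level but obstructs the iteration, since each $P_{i+1}$ must be a genuine subset of the model on which the coding can be repeated.)

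The paper sidesteps the obstacle by switching from multiplicative to additive coding. Using the definable non-degenerate trace form, the linear span of any finite set is defined by one fixed formula (the double orthogonal complement). A subset $J$ of a \emph{linearly independent} finite set $X$ is coded injectively by its sum $e_J$; membership of $x$ in $J$ is the fixed condition ``$e_J$ does not lie in the span of $X\setminus\{x\}$''; and the set of all codes is exactly the span of $X$, hence uniformly definable --- this is what simultaneously delivers extensionality and a fixed-formula description of $P_{i+1}$. The only remaining task is to convert the span $Y_i$ back into a linearly independent set for the next level, which is done by $y\mapsto 1/(t_i-y)$ with $t_i$ transcendental over everything constructed so far; the formula $good$ then just asserts that the resulting interpretation satisfies the finitely axiomatizable theory of the iterated powerset. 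To rescue your route you would need an analogous device making the set of injective codes uniformly definable; the quadratic-residue and product-code ingredients alone do not provide one.
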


\begin{proof}
  Consider the additive group $A$ as an $\Ff_p$-vector space. We make use of the
fact that there is a definable
  non-degenerate bilinear  form $\bilin$ into $\Ff_p$. For example,
one can use the trace product (see Lemma 3.9 of \cite{dugaldcharlie}).    Now 
consider $X_0$ a finite subset of $A$, of size $n_0$.    Define $X_i,Y_i \subset A$ inductively.  
We will use  as parameters elements $t_1,t_2,\cdots$, taken to be 
   algebraically independent over the field generated by $X_0$.  We will always have
  $Y_i$ be the subspace generated by $X_i$:
 \[ 
y \in Y_i \iff (\forall x)[(\forall u \in X_i ~ \bilin(x,u) = 0) \to \bilin(x,y)=0] 
\]

Inductively we set
 \[X_{i+1} = \{ \frac{1}{t_i-y}: y \in Y_i \} \]

Then $X_{i+1}$ is linearly independent, and we can generate $Y_{i+1}$ as
above.
Note that for a non-degenerate biliinear form, the linear span of a finite
set $P$ is just the orthogonal complement of the orthogonal complement: the
set of elements $x$ such that $\bilin(x,a)=0$ whenever $\bilin(a,p)=0$ for each $p \in P$.
Thus there is an $L_P$ open formula that defines the linear span of a finite
set $P$.
Given a finite subset $J$ of $X_{i+1}$
with its sum $e_J$, we can recover $J$ from $e_J$ as the set of $j \in J$
such that $e_J$ is not in the span of $J-\{j\}$: this definition is correct
because $J$ was linearly independent.
Composing with the inverse of   the map $ \frac{1}{t_i-y}$, we
can see that $X_{i+1}$ codes subsets of $X_i$.
 Thus $i^{th}$ order monadic logic over $X_0$  is interpretable in $X_i$, and
by consider $k'$ sufficiently higher than $k$ we can interpret full $k^{th}$ order logic.

We now discuss how to capture the parameters by a formula $good$ as
required in the last part of the theorem.
Let $T_i(P)$ be the theory of $i^{th}$ order set theory over a  predicate
$P$ interpreted by a finite set. $T_i(P)$ is finitely axiomatizable. For example for $i=1$ we have a predicate $\epsilon(x,y)$
and we need to say that there is $y$ with no elements, and for each $y$ and each $p \in P$
there is a $y'$ such that $y'$ agrees with $y$ on $P \setminus \{p\}$ and disagrees with
$y$ on $p$.
Let $good_i(\vec t)$ be an $L_P$ formula
expressing that the  formulas $\phi_1(\vec t) \ldots $ with
parameters $\vec t$ defined above (for $i$) give an interpretation
satisfying $T_i(P)$. Since $T_i(P)$ is finitely axiomatizable, 
this is a first-order sentence in $L_P$. The result above shows
that  the theory of pseudo-finite fields entails $\exists \vec t ~ good_i(\vec t)$, for each $i$.
\end{proof}
Note that above we talked about an interpretation with equality interpreted
in the standard way, but where the formulas in the interpretation can include parameters. The standard notion of interpretation \cite{hodgesbook}
allows the output model to be defined as a quotient by a definable  equivalence 
relation. 
That is, the interpretation can interpret equality. We can remove the notion
of ``good parameter'' by introducing such an equivalence relation, thus
talking about an interpretation without parameters.
In either case, the consequence is:

\begin{corollary} \label{cor:expressho}
Any $k$-RQC  sentence $\gamma$ over $P$ can be expressed
using an $L_P$ sentence:  by
existentially quantifying over $\vec t$ satisfying $good_k$ and checking $\gamma$ in the resulting interpretation.
\end{corollary}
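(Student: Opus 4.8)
The plan is to use the interpretation furnished by Theorem~\ref{thm:pseudofiniteposexhaustive} to translate the higher-order quantifiers of $\gamma$ into ordinary first-order quantifiers ranging over the interpreted powerset hierarchy, and then to existentially bind the parameters $\vec t$ of that interpretation, guarded by $good_k$.

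First I would fix the $k$-RQ sentence $\gamma$. Its bound variables range over sorts built from the base sort $B$ by tupling and powerset, with order at most $k$. Using the standard set-theoretic coding of ordered tuples (Kuratowski pairs $(a,b)=\{\{a\},\{a,b\}\}$), I would rewrite $\gamma$ into an equivalent sentence $\gamma^\flat$ whose variables all live in pure iterated powersets $\powerset^i(B)$ with $i \le k'$, where $k'$ is a bound depending only on $\gamma$ (each pairing raises the order by a constant, and $\gamma$ has finitely many sorts of bounded arity). Application atoms $X(u_1,\ldots,u_m)$ thereby become membership statements relating the coded tuple to $X$. This step is routine but is where the one real piece of bookkeeping lives: one must fix $k'$ large enough at the outset and check that the coding preserves the intended semantics of every higher-order atom of $\gamma$.

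Next, I would apply Theorem~\ref{thm:pseudofiniteposexhaustive} with this $k'$ to obtain $L_P$ formulas $\phi_{P_1},\ldots,\phi_{P_{k'}},\phi_{E_1},\ldots,\phi_{E_{k'}}$ together with $good_{k'}(\vec t)$. Treating $\vec t$ as parameters, I would define $\gamma^\flat[\vec t]$ from $\gamma^\flat$ by relativizing each quantifier over $\powerset^i(B)$ to the interpreted sort $\phi_{P_{i+1}}$ (so base-sort quantifiers are relativized to $P=\phi_{P_1}$) and replacing each membership atom between an order-$(i-1)$ and an order-$i$ object by $\phi_{E_i}$. The target $L_P$ sentence is then
\[
\gamma^* \;=\; \exists \vec t~\bigl(\, good_{k'}(\vec t)\,\wedge\,\gamma^\flat[\vec t]\,\bigr).
\]

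Finally I would verify that $\gamma$ and $\gamma^*$ agree on every embedded finite subset $(M,P)$ with $M\models PF_p$. Theorem~\ref{thm:pseudofiniteposexhaustive} guarantees both that good parameters exist ($PF_p\models\exists\vec t~good_{k'}(\vec t)$) and that \emph{every} good $\vec t$ makes the $\phi_{P_i},\phi_{E_i}$ define the genuine $k'$-th order set-theory structure on the finite set $P$, i.e.\ a structure isomorphic to the true iterated powerset hierarchy $P,\powerset(P),\ldots,\powerset^{k'}(P)$. Hence under any good $\vec t$ the relativized sentence $\gamma^\flat[\vec t]$ takes exactly the truth value of $\gamma^\flat$ in the real hierarchy, which by the coding step equals the truth value of $\gamma$; since good parameters always exist, both directions of the equivalence $\gamma\leftrightarrow\gamma^*$ follow at once. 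The only nontrivial ingredient is already supplied by the theorem, so the main remaining work is the tuple-coding bound of the first step, which I expect to be the sole genuine obstacle and which is dispatched by fixing $k'$ appropriately as a function of $k$.
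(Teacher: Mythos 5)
Your proposal is correct and follows essentially the same route as the paper, which states the corollary as immediate from Theorem~\ref{thm:pseudofiniteposexhaustive} with exactly this construction (existentially binding parameters guarded by $good$, relativizing quantifiers to the interpreted hierarchy), and whose proof of the theorem already flags the need to take $k'$ sufficiently larger than $k$ to absorb the tuple-coding. Your explicit treatment of the Kuratowski-pair bookkeeping simply fills in a step the paper leaves implicit.
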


\myeat{
Note that the existence of a non-degenerate bilinear form
and the construction above 
already
suffices to prove that pseudo-finite fields of positive
characteristics have IP, and hence are unstable, a result
established via alternative means by 
Duret \cite{duret}.
}

 
\begin{corollary}  For every number $k$, any completion of the
theory of pseudo-finite fields in positive characteristic cannot be $k$-RQC (even for
isomorphic-invariant sentences).
\end{corollary}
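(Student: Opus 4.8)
The plan is to use the uniform interpretation of high-order set theory over $P$ from \thmref{thm:pseudofiniteposexhaustive} to realize a single pure higher-order spectrum as an isomorphism-invariant $L_P$ sentence, and then to derive a contradiction from the non-collapse of spectra in finite model theory, \propref{prop:spectrum}. Fix $k$ and suppose, for contradiction, that some completion of $PF_p$ is $k$-RQC.

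First I would apply \propref{prop:spectrum} to fix a set $J$ of natural numbers that is a pure $(k+2)$-order spectrum but not a pure $k$-order spectrum, witnessed by a pure $(k+2)$-order sentence $\sigma$. Running \thmref{thm:pseudofiniteposexhaustive} with order parameter $k+2$ supplies $L_P$ formulas defining the full $(k+2)$-order set-theory structure over $P$, together with the formula $good_{k+2}$. I then set
\[
\Psi \;=\; \exists \vec t ~\bigl(good_{k+2}(\vec t)\wedge \sigma^{*}(\vec t)\bigr),
\]
where $\sigma^{*}$ is the translation of $\sigma$ through the interpretation. Since good parameters always exist and every choice of good parameters yields exactly the iterated powerset hierarchy over $P$, the sentence $\Psi$ holds on an embedded finite subset $(M,P)$ iff $\sigma$ holds over the pure set $P$, i.e. iff $|P|\in J$. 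Thus $\Psi$ is cardinality-invariant, and in particular it is an isomorphism-invariant $L_P$ sentence.

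Under $k$-RQC, $\Psi$ is equivalent over $PF_p$ to a $k$-RQ sentence $\theta$, and the heart of the proof is to read off from $\theta$ a pure $k$-order sentence with spectrum $J$. To eliminate the field structure, I would fix an infinite order-indiscernible sequence $I=(a_i)_{i\in\nats}$ in a model $M$ of the completion (one exists in any infinite model) and look only at embedded finite subsets $P\subseteq I$. By order-indiscernibility, the truth value of every $L$-subformula of $\theta$ whose base-sort arguments lie in $P$ is determined by the order type of those arguments; replacing each such subformula by the corresponding condition in the order $<$ induced by $I$ converts $\theta$ into a $k$-order sentence $\theta'({<})$ over the pure set $P$ in which the only residual non-logical symbol is $<$. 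Finally, because on a pure finite set all linear orders are isomorphic, the sentence $\exists {<}\,\bigl(\text{``}{<}\text{ is a linear order''}\wedge\theta'({<})\bigr)$ has spectrum exactly $J$; as $<$ ranges over binary relations, an order-one object, this added quantifier stays inside $k$-order logic. Hence $J$ is a pure $k$-order spectrum, contradicting its choice, and the contradiction holds for every $k$.

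I expect the main obstacle to be precisely this last reduction, turning a $k$-RQ sentence over the field into a genuinely pure $k$-order sentence without spending an extra level of the hierarchy. The analogous step for the theories $T_J$ passed to a \emph{totally} indiscernible set so that all $L$-atoms became constant, but pseudo-finite fields need not contain one: for instance, when $-1$ is a nonsquare the asymmetric formula ``$x-y$ is a square'' cannot be trivialized on generic elements, so algebraically independent tuples fail to be totally indiscernible. The observation that makes the weaker, always-available order-indiscernibility suffice is that on a pure finite set every linear order is isomorphic to the standard one, so the induced order may be existentially quantified at order one; this lands the reduction in pure $k$-order logic rather than $(k+1)$-order logic, keeping the level bookkeeping tight enough to exploit \propref{prop:spectrum}.
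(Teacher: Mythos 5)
Your proof is correct and follows the paper's own route: realize a set $J$ that is a pure $(k+2)$-order spectrum but not a pure $k$-order spectrum as an isomorphism-invariant $L_P$ sentence via the iterated-powerset interpretation of Theorem~\ref{thm:pseudofiniteposexhaustive}, then show that a $k$-RQ equivalent would make $J$ a pure $k$-order spectrum, contradicting Proposition~\ref{prop:spectrum}. Your handling of the last step is in fact more careful than the paper's one-line claim that ``over an indiscernible set the $L$ atoms could be eliminated'': you rightly note that pseudo-finite fields need not contain a \emph{totally} indiscernible set, and your patch --- pass to an order-indiscernible sequence, replace $L$-subformulas by order conditions, and then existentially quantify the order, which is harmless because all finite linear orders of a given size are isomorphic --- closes that gap for $k \geq 2$, with $k=1$ following either because failure of $2$-RQC implies failure of $1$-RQC or directly from Theorem~\ref{thm:forall1boundednip}.
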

\begin{proof}  
By   Proposition \ref{prop:spectrum} for each $k$ there is a set of integers 
that is the spectrum of  a $k+2$ order logic sentence $\phi_k$ over the empty vocabulary,
which is not the spectrum of a $k$ order logic sentence over the empty vocabulary.
By the result above, 
  $\phi_k$ can be described in $L_P$. But clearly $\phi_k$ it cannot
be described by a $k$-RQ sentence of $L_P$, since over an indiscernible 
set the $L$ atoms could be eliminated.
\end{proof}


\subsection{Embedded finite model theory and lower bounds}
Recall that Proposition \ref{prop:elem} indicated that
if $\omega$-RQC fails for an isomorphism-invariant sentence, then the complexity of deciding the theory must be very high. 
We show that efficiently interpreting iterated powersets can also suffice to give lower bounds on the complexity of the theory, illustrating this
in the case of pseudo-finite fields, and also for the incomplete theory of fnite fields. 
We doubt that one can get an isomorphism-invariant counterexample to $\omega$-RQC
for pseudo-finite field: see discussion in the next subsection.

A primitive recursive decision procedure for the theory
is obtained in Fried-Sacerdote
\cite{FriedSacerdote}.
 But to our knowledge it has not been improved by concrete bounds.   This contrasts with theories such as real closed fields,
 where doubly exponential bounds are known \cite{bpr}.

From our construction we immediately obtain the following  lower bound:
\begin{proposition} There is no decision procedure for $PF_p$ in positive characteristic
$p$ that works in elementary complexity.
\end{proposition}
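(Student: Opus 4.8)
The plan is to reduce the evaluation of higher-order logic over finite pure sets to the decision problem for $PF_p$, and then to invoke the non-elementary complexity of the former. Fix $p$. Given a pure $k^{th}$-order sentence $\psi$ over the empty vocabulary and a number $n$ presented in unary, I would build an $L$-sentence $\Theta_{\psi,n}$ as follows: using $n$ existentially quantified, pairwise distinct field elements to play the role of $P$, assert (i) the existence of parameters $\vec t$ satisfying the formula $good_k$ of Theorem~\ref{thm:pseudofiniteposexhaustive}, and (ii) that $\psi$ holds in the $k^{th}$-order set-theory structure over $P$ defined by the $\phi_{P_i}, \phi_{E_i}$ with these parameters, where each order-$i$ quantifier of $\psi$ is relativized to $\phi_{P_i}$ and each membership atom is replaced by the appropriate $\phi_{E_i}$. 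This is exactly the construction behind Corollary~\ref{cor:expressho}, specialized by additionally fixing $|P| = n$.

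Two facts make this a correct many-one reduction. First, by Theorem~\ref{thm:pseudofiniteposexhaustive} every tuple $\vec t$ satisfying $good_k$ yields the \emph{full} $k^{th}$-order iterated powerset over $P$, and such a tuple exists in every model of $PF_p$ (the field being infinite, a set $P$ of size $n$ is always available). Hence, since $\psi$ is isomorphism-invariant and the full $k^{th}$-order structure over an $n$-element set is unique up to isomorphism, $\Theta_{\psi,n}$ is true in every model of $PF_p$ when $\psi$ holds on the pure set of size $n$, and false in every model otherwise; in either case $\Theta_{\psi,n}$ is decided by the theory $PF_p$. Second, the reduction is cheap: the fixed formulas $\phi_{P_i}, \phi_{E_i}$ and $good_k$ have size polynomial in $k$, the cardinality constraint $|P| = n$ costs $O(n)$ quantifiers, and relativizing $\psi$ blows up its size only polynomially, so $|\Theta_{\psi,n}|$ is polynomial in $|\psi| + n + k$.

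It then remains to contradict a hypothetical elementary decision procedure. Suppose $PF_p$ were decidable in time $\exp_c$, a tower of exponentials of some fixed height $c$, in the length of the input. Composing with the polynomial-size reduction above would decide, in time elementary in $|\psi| + n + k$ \emph{with a fixed tower height depending only on $c$}, whether an arbitrary higher-order sentence $\psi$ holds on the pure set of size $n$. But the combined evaluation problem for higher-order logic is non-elementary: a $k^{th}$-order sentence of size polynomial in $k$ can define a linear order and arithmetic up to $\exp_k(n)$ on the $n$-element domain, and thereby encode the accepting computations of a Turing machine running in $\exp_k(n)$ time; letting $k$ grow past $c$ and appealing to the time-hierarchy theorem yields computations that cannot be captured within any fixed elementary bound. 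This contradiction shows $PF_p$ admits no elementary decision procedure.

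The main obstacle is this last step: one must pin down the precise non-elementary lower bound for higher-order evaluation and verify that the tower height forced by the encoding genuinely exceeds the height $c$ of the assumed procedure. This in turn relies on checking that the interpretation formulas $\phi_{P_i}, \phi_{E_i}, good_k$ grow only polynomially in $k$, so that no hidden blow-up in the reduction masks the separation; the arithmetic of the bilinear-form construction in Theorem~\ref{thm:pseudofiniteposexhaustive} should make this routine, but it is the place where care is required.
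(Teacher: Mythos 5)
Your proposal is correct and follows essentially the same route as the paper: use the uniform interpretation of the $k^{th}$ iterated powerset from Theorem~\ref{thm:pseudofiniteposexhaustive} (via the $good_k$ parameters) to give a polynomial many-one reduction from higher-order model checking on pure finite sets to validity in $PF_p$, then invoke the known non-elementary (tower of height $O(k)$) lower bound for that model-checking problem. The only difference is that you sketch the hardness of higher-order evaluation directly, whereas the paper simply cites it.
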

\begin{proof}
We have shown that the theory can interpret the $k^{th}$ iterated powerset for any $k$.
This
allows us to polynomially many-one reduce the model-checking problem for 
$k^{th}$ order logic
to satisfiability of sentences in the theory, uniformly in $k$.
Since the model checking
problem for $k^{th}$ order logic is known to be hard for an $O(k)$ exponential
tower in $k$ (see, e.g.  \cite{hellahigherorder}) we can conclude.
\end{proof}

We now draw the promised conclusion for the \emph{theory of finite fields}. Recall that this is the incomplete theory consisting of all sentences that hold in
every finite field.
\begin{proposition} There is no decision procedure for the theory of finite fields in positive characteristic
$p$ that works in elementary complexity, and thus there can be no such procedure for theory of finite fields (without restricting the characterisitc).
\end{proposition}

\begin{proof} Fix our interpretation $\iota_k$ of the $k^{th}$ order powerset of $P$ from above. Not every
finite field of characteristic $p$ will interpret $\iota_k(P_0)$ as the powerset, but by the compactness theorem, cofinitely many of them will.
Furthermore we can efficiently write a sentence $\kw{verify}_k$ that checks whether a hierarchical structure is indeed the $k$-iterated powerset of $P_0$: $verify_k$
checks that the empty set is represented, and that if a set is represented, so is any set formed by changing one membership.
We are now ready to reduce model checking a $k$-higher-order logic sentence $\psi$ to entailment in finite fields: we generate
the sentence $\kw{verify}_k(\iota_k(P_0) \rightarrow \psi(P)$.
Roughly, `` if the $k^{th}$ power set  exists then $\psi$ holds in it''.
This sentence will be true in all finite fields iff $\psi$ actually holds in the $k^{th}$ iterated power set of $P$.
\end{proof}

\subsection{Lower bounds for quantifier-elimination}
We can also make a conclusion about quantifier elimination.
We will state the results for pseudo-finite fields, but they could be restated in terms of finite fields.

A \emph{parameterized algebraically-constrained formula} is a formula 
\[
\exists y_1 \ldots y_k ~ \bigwedge_i F_i(\vec x, y_1 \ldots y_k,\vec p)=0
\]
where 
$F_i$ are sums of monomial terms in $\vec p, \vec y, \vec x$, such that, in any finite field, for each $\vec x$ and $\vec y$, the number of witnesses $\vec y$ that satisfy the formula is at most polynomial in $k$ and the maximum degree of the $F_i$.

\begin{restatable}{theorem}{pseudofiniteqe} \label{thm:pseudofiniteqe} For every formula $\phi(\vec x)$ in the language of  fields there is a disjunction of paramaterized algebraically-constrained formulas $\phi'(\vec x, \vec p)$ and a formula $\theta(\vec p)$ such that
in every pseudo-finite field $M$ there is some $\vec p_0$ satisfying $\theta$, and for any such $\vec p$
$\phi(\vec x)$ is equivalent to $\phi'(\vec x, \vec p)$ in $M$.
\end{restatable}

We provide a proof in the appendix, but we stress that similar quantifier elimination results
are known from the literature. See, for example,
Section 2 of \cite{zoelouangus} and \cite{ax}.

 \begin{theorem} \label{thm:pseudofinposlowerbound} The complexity of quantifier-elimination turning
a  $PF_p$ formula into a positive Boolean combination of basic formulas cannot be bounded by a stack
 of exponentials of bounded height; this is already true for $PF_p$ for any fixed prime $p$.
 In fact for any $k$ there exist  formulas $\phi_n$ with a bounded number of  variables -- and
hence with a bounded quantifier rank, independent of $n$ --  of length $O(n)$,
 such that any basic formula $\psi_n$  which is $PF_p$ equivalent to $\phi_n$, must 
 have size at least  $p^{p^{\cdot^{\cdot ^{\cdot ^{p^n}}}}}$ (exponential tower of height $O(k)$).  \end{theorem}

Before giving the proof of the proposition, we present the main tool that we use:

We will argue the following:
\begin{theorem} \label{thm:bezout} There is an elementary function $F$ such that in any pseudo-finite field
if $\theta(\vec x, \vec p)$ is an algebraically-constrained formula
and for a given $\vec p$ the set $\theta_{\vec p}$ of $\vec x$ such that $\theta(\vec x, \vec p)$ holds is finite,
then  $\theta_{\vec p}$ has size at most $F(size(\theta))$. Here size of $\theta$ takes into account
the degrees of the polynomials
in addition to the number of terms.
\end{theorem}

Theorem \ref{thm:bezout} can be thought of as an analog of Bezout's theorem bounding the number
of joint solutions of polynomial equalities. The subtlety is  that while Bezout's theorem works
over algebraically closed fields, here we are working in pseudo-finite fields, which will never be  algebraically
closed. The above result
follows from
 \cite{emm}, Prop. 2.2.1. We now give the details.

Recall that an algebraically-constrained formula is an existential quantification of a conjunction of polynomial equalities,
The result can be restated in terms of \emph{finite fields}:
if we we have a basic formula $\theta$ where the size of solutions is bounded as we range over all finite fields 
and parameters $\vec p$, then the size must be bounded by an elementary function.
We first argue for a similar statement that removes the existential quantification:

\begin{theorem} \label{thm:bezoutatomic}
There is an elementary function $F$ such that in any pseudo-finite field 
if $\theta(\vec x, \vec p)$ is a conjunction of polynomial equalities
and for a given $\vec p$ the set $\theta_{\vec p}$ of $\vec x$ such that $\theta(\vec x, \vec p)$ holds is finite,
then $\theta_{\vec p}$ has size $F(size(\theta))$. As in Theorem \ref{thm:bezout}, size takes account the degrees of the polynomials
in addition to the number of terms.
\end{theorem}

We need some auxiliary definitions. 
The theory of  Algebraically Closed Fields with an Automorphism 
(ACFA) is in the language of an algebraically closed field $F$ with a distinguished automorphism $\sigma$ satisfying
certain axioms, one of which is that it is a \emph{difference-closed difference field}.

The axioms will not concern us. The only thing that is important to us is:

\begin{proposition} \label{prop:acfa}
Every pseudo-finite field is the fixed field of $(K, \sigma)$ that is a model of ACFA.
\end{proposition}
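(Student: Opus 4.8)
The plan is to realize $F$ concretely through the nonstandard Frobenius, reducing the ACFA condition to a counting fact about finite fields. First I would normalize the field. By Ax's theorem \cite{ax}, every pseudo-finite field is elementarily equivalent to an ultraproduct of finite fields, and the uses we make of this proposition (the solution bounds of Theorem~\ref{thm:bezout} and Theorem~\ref{thm:bezoutatomic}) are first-order statements, hence invariant under elementary equivalence. So it suffices to prove the result for $F$ presented as such an ultraproduct, $F=\prod_{\mathcal U}\Ff_{q_i}$; in the positive-characteristic case all $q_i=p^{n_i}$ share the fixed prime $p$ and $n_i\to\infty$ along $\mathcal U$ (the characteristic $0$ case is identical with $q_i=p_i^{n_i}$ and $p_i$ unbounded).

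Next I would equip the algebraic closure with its Frobenius difference structure. Inside $\overline{\Ff_p}$ the subfield $\Ff_{q_i}=\Ff_{p^{n_i}}$ is exactly the fixed field of the iterated Frobenius $\varphi_i\colon x\mapsto x^{p^{n_i}}$, so $(\overline{\Ff_p},\varphi_i)$ is an algebraically closed difference field whose fixed field is $\Ff_{q_i}$. Form the ultraproduct $(K,\sigma)=\prod_{\mathcal U}(\overline{\Ff_p},\varphi_i)$. Then $K=\prod_{\mathcal U}\overline{\Ff_p}$ is algebraically closed, being an ultraproduct of algebraically closed fields, and $\sigma$ is a field automorphism of $K$. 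By the theorem of \L o\'s, an element of $K$ is fixed by $\sigma$ precisely when almost all of its coordinates are fixed by $\varphi_i$, that is, lie in $\Ff_{q_i}$; hence $Fix(\sigma)=\prod_{\mathcal U}\Ff_{q_i}=F$. This gives the fixed-field identity on the nose for $F$ presented as an ultraproduct, and for an arbitrary pseudo-finite field it gives $Fix(\sigma)\equiv F$, which is all the applications require.

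The one substantial step, which I expect to be the main obstacle and which I would cite rather than reprove, is that $(K,\sigma)$ is a model of ACFA. This is precisely Hrushovski's theorem that the nonstandard Frobenius models ACFA (equivalently, that ACFA is the theory of the generic automorphism realized as an ultraproduct of Frobenii), in the line of Macintyre and Chatzidakis--Hrushovski. Concretely, the geometric axiom scheme of ACFA requires that for every irreducible variety $V$ and every irreducible $U\subseteq V\times V^{\sigma}$ projecting dominantly onto each factor there is a point $a$ with $(a,\sigma(a))\in U$; this is verified in each finite field $\Ff_{q_i}$ by the Lang--Weil estimate, which guarantees a rational point on the relevant twisted graph once $q_i$ is large, and the conclusion then transfers through $\mathcal U$ to $(K,\sigma)$. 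Thus the entire difficulty is packaged into this cited input, while the remainder is the routine \L o\'s computation of the fixed field above, which is what lets us import the difference-field techniques of \cite{emm} into the study of $PF_p$.
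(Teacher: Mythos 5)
The paper offers no proof of this proposition at all --- it is dispatched with a pointer to \cite{garcia} --- so any genuine argument is ipso facto a different route. The standard argument behind that citation runs as follows: a pseudo-finite field $F$ is perfect with absolute Galois group $\widehat{\Zz}$, topologically generated by some $\sigma$, so $(F^{alg},\sigma)$ is an algebraically closed difference field with $Fix(\sigma)=F$ exactly; one then extends it to an existentially closed difference field, using the PAC property of $F$ to arrange that no new fixed points are created. That argument works for \emph{every} pseudo-finite field on the nose and needs only that ACFA is the model companion of difference fields. Your route instead realizes $(K,\sigma)$ as an ultraproduct of Frobenius difference fields $(\overline{\Ff_p},x\mapsto x^{q_i})$; the \L{}o\'s computation of the fixed field is correct, and what it buys is concreteness, at the price of importing Hrushovski's theorem that the nonstandard Frobenius models ACFA --- a far deeper input than the standard proof requires.

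Two caveats. First, your argument establishes the proposition as literally stated only for pseudo-finite fields that \emph{are} ultraproducts of finite fields; a countable pseudo-finite field is not one (an infinite ultraproduct of finite structures has cardinality at least $2^{\aleph_0}$), so for general $F$ you only get $Fix(\sigma)\equiv F$. Your observation that the downstream uses (the cardinality bounds of Theorem~\ref{thm:bezout}) are conjunctions of first-order sentences and hence transfer is correct and rescues the application, but it does weaken the proposition as stated, which is invoked in the appendix with $Fix(\sigma)=F$ on the nose. Second, your parenthetical that the ACFA axiom scheme ``is verified in each finite field by the Lang--Weil estimate'' misstates the mechanism: finding a point $a$ with $(a,\varphi(a))\in U$ asks for points on the intersection of a correspondence with the graph of Frobenius, which is the \emph{twisted} Lang--Weil estimate --- the hard core of Hrushovski's theorem --- not a consequence of the classical one. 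Since you cite the theorem rather than reprove it this does not invalidate the argument, but the gloss should not suggest the verification is routine.
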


See, for example, \cite{garcia}, comments after Proposition 4.3.6.
The proposition above allows us to reduce reasoning about polynomial equalities within a pseudo-finite
field, which is not algebraically closed, to reasoning within the fixed field of an algebraically
closed field with a suitable automorphism. We can thus think of our solution set as the conjunction
of the polynomial equalities along with the ``difference equation'' $\sigma(x)=x$.

An \emph{affine variety} over a field $K$ is a set of elements $K$ defined by 
polynomial equations with coefficients in the field.  So we are interested in the intersection
of an affine variety with a simple difference equation.

We now state Proposition 2.2.1 of \cite{emm}:

\begin{proposition} \label{prop:emm} Let $(K;  \sigma)$ be 
a difference-closed difference field; and let $S$ be a
 subvariety of $\mathcal{P}^l$
 defined over $K$. Let $Z$ be the Zariski closure of 

\[
\{x \in \mathcal{P}:(x,  \sigma(x) \ldots \sigma^{l-1}(x)) \in S\}
\]
Then $deg(Z) \leq deg(S)^{dim(S)}$. 
In particular; $Z$ has at most $deg(S)^{dim(S)}$
irreducible components.
\end{proposition}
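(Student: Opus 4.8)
The plan is to carry out the argument inside a model $(K,\sigma)$ of ACFA, which is legitimate by Proposition \ref{prop:acfa}, and to measure $\deg$ via intersection with generic linear subspaces, so that the classical B\'ezout theorem is available whenever the intersections in question are genuinely algebraic. The overall strategy is an induction on the difference order $l$: one peels off a single coordinate of $S$ and transports the resulting lower-order problem back along the automorphism $\sigma$. Concretely, let $S_1 = \overline{\pi(S)} \subseteq \mathcal{P}^{l-1}$ be the Zariski closure of the image of $S$ under the projection $\pi$ that forgets the first coordinate. If $(x,\sigma x,\dots,\sigma^{l-1}x)\in S$ then, setting $y=\sigma x$, we have $(y,\sigma y,\dots,\sigma^{l-2}y)\in S_1$, so $y$ lies in the closure $Z_{l-1}$ of the order-$(l-1)$ locus attached to $S_1$, i.e. $x \in \sigma^{-1}(Z_{l-1})$. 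Since $\sigma$ acts on the coefficients of the defining equations, it preserves both dimension and degree, so $\sigma^{-1}(Z_{l-1})$ has degree controlled by the inductive hypothesis.

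The subtlety is that this containment alone cannot bound $\deg Z$, because degree is not monotone under inclusion of varieties. One must therefore reinstate the constraint on the first coordinate coming from $S$ itself, and bound the degree of the \emph{intersection} rather than of an ambient container. The mechanism I would use is the ``shift-and-intersect'' construction: a point of the twisted diagonal on $S$ also forces $\sigma(p)$ to lie on $\sigma(S)$, where the last $l-1$ coordinates of $p=(x,\sigma x,\dots,\sigma^{l-1}x)$ coincide with the first $l-1$ coordinates of $\sigma(p)$. Gluing $S$ to its transform $\sigma(S)$ along these shared coordinates and projecting produces the successive approximations to $Z$. Each gluing is now an honest algebraic intersection of two varieties of degree at most $\deg S$ (as $\sigma$ is degree-preserving), so B\'ezout bounds the degree of each relevant component by a factor of $\deg S$. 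The descending chain of dimensions produced by these intersections stabilizes at the Zariski closure of the difference variety, and a careful accounting of which steps are genuinely degree-increasing arranges the total to be exactly $\deg(S)^{\dim S}$; the bound on the number of irreducible components of $Z$ follows since each component contributes at least $1$ to the degree.

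The hard part, and the reason this is a nontrivial result rather than a direct B\'ezout estimate, is that $\sigma$ is \emph{not} an algebraic morphism, so one may not simply intersect with a graph of $\sigma$ and invoke B\'ezout. The resolution is precisely to replace every application of $\sigma$ by an honestly algebraic operation on the transforms $\sigma^{i}(S)$, all of which share the degree and dimension of $S$, and to lean on the genericity of $\sigma$ guaranteed by ACFA to ensure that the eliminations behave like genuine projections (so that the iterated transforms are ``as independent as the dimension allows''). The second delicate point is the dimension bookkeeping that yields the exponent $\dim S$ rather than $l$: one must verify that the chain of shift-intersections can drop the dimension only $\dim S$ times before stabilizing, and control the degree of the possibly improper intersections at each stage. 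These two issues --- eliminating the non-algebraic $\sigma$ in favor of algebraic transforms, and pinning down the exact exponent through the dimension-drop count --- are where the real work lies; the inductive skeleton and the B\'ezout inputs are otherwise routine.
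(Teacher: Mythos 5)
The paper does not prove this statement at all: it is quoted verbatim as Proposition 2.2.1 of \cite{emm} and used as a black box, so there is no in-paper argument to compare yours against. Judged on its own terms, your proposal is a plausible high-level plan --- pull the difference condition back to an iterated intersection of $S$ with its transforms $\sigma^i(S)$ along shifted coordinates, apply refined B\'ezout at each stage, and argue stabilization via dimension drop --- and this is indeed the general shape in which such bounds are obtained. But it is not a proof, because the two steps that carry all the content are explicitly deferred rather than carried out.

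Concretely: (i) you never establish that the Zariski closure $Z$ of the difference locus coincides with (a union of irreducible components of) the projection of the stabilized shift-intersection. What you actually obtain is only a containment of the difference locus in each successive approximation, and, as you yourself observe when abandoning the first induction, degree is not monotone under inclusion, so a containment alone is worthless for an upper bound on $\deg(Z)$. One needs each component of $Z$ to be an irreducible component of an object whose degree is controlled, and that identification is exactly where the ACFA existence axioms must enter; your appeal to ``the genericity of $\sigma$'' is a placeholder for this argument, not the argument. (ii) The exponent. Each B\'ezout step multiplies the running degree by at most $\deg S$, so a chain of $\dim S$ productive intersections starting from an object of degree $\deg S$ naively yields $\deg(S)^{\dim S+1}$ rather than $\deg(S)^{\dim S}$; one must also rule out degree growth at the non-productive (stabilizing) steps and under the final projection, and must decide whether $S$ is assumed irreducible (for reducible zero-dimensional $S$ the stated bound $\deg(S)^{0}=1$ fails, so the bookkeeping is sensitive to such conventions). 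You flag both issues as ``where the real work lies'' and then do not do the work. A sketch that correctly identifies its own gaps is still a sketch.
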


We explain the route from Proposition \ref{prop:emm} to Theorem \ref{thm:bezoutatomic}, first introducing
what we need to know about the relevant terminology from Proposition \ref{prop:emm}.
Proposition \ref{prop:emm} is stated for a subvariety of a projective variety: this is
$\mathcal{P}^l$. Again, we will not
need the definition of projective varieties, but note that an  affine variety 
can be embedded
in a projective variety in a way that preserves number of solutions (by adding additional variables and
equations). Thus the same proposition also holds for an affine variety $S$ sitting within
another affine variety.

Proposition \ref{prop:emm} also refers to the Zariski closure, degree, and dimension.  But
if a variety is finite, its Zariski closure is just itself, the   degree is the number of points, and
the dimension is upper bounded by the number of variables in the defining polynomials.

With this in mind, we begin the proof of Theorem \ref{thm:bezoutatomic}:
\begin{proof}
We  apply Proposition \ref{prop:acfa},
to fix an algebraically closed field $K$ and an automorphism  $\sigma$ with $(K, \sigma) \models ACFA$,
 such that our pseudo-finite field is the  fixed field
of $K$ under $\sigma$.
Thus the set we are interested in is 
\[
\{x | \{(x, \sigma(x) \ldots \sigma^{l-1}(x)) \in S\}
\]
 where $S$
is the subvariety of the product of our  variety with itself, adding
the additional equation saying that the components are all equal.
Proposition \ref{prop:emm} tells us that the degree of this subvariety is  bounded by
 $deg(S)^{dim(S)}$. By the comments above about degree and dimension for finite set,
this is an exponential in the size, so we are done.
\end{proof}

We are now ready to begin the proof of Theorem \ref{thm:pseudofinposlowerbound}:
 \begin{proof}    
Fix $k$. 
As in Theorem \ref{thm:pseudofiniteposexhaustive}, 
 we can write a formula $\phi^k(x;P)$ of $L_P$ of size $O(k)$ such that whenever $P$ is 
 interpreted by a set of size $n$  elements, then $\phi^{k}(x;A)$ defines a set whose  size 
  is on the order of
 $n^{n^{\cdot^{\cdot ^{\cdot ^n}}}}$, where the ellipses denote an exponential tower of height 
$k'$.
 For $q$ a power of $p$, let 
   $\phi_{k,q}(x) = \phi_{k}(x; x^{q}=x)$ be the formula obtained from $\phi_k$ by replacing $P$
 by $x^q=x$. Note that $x^q=x$ has precisely $q$ solutions, and so $\phi_{k,q}$ has  on the order of
$q^{q^{\cdot^{\cdot ^{\cdot ^{q}}}}}$
 solutions, where again the ellipses denote an exponential tower of height $k'$.
Note that  these formulas have bounded quantifier rank.
 
Consider a disjunction of algebraically-constrained formulas  $\psi_q$ produced by $\phi_{k,q}$.
It suffices to show: the number of solutions of such a formula is at
most elementary in its size. For this, it suffices to show this for a  single algebraically-constrained formula.

   The projection $(x,y) \mapsto x$
maps  the solutions of the atomic formulas
$\theta(x, y)$ in an algebraically-constrained formula equivalent to  $\psi$
onto those of $\psi$. By the definition of algebraically-constrained, this map is at most $\ell$ to $1$, where $\ell$ is the maximum degree of polynomials in the representation.
So it suffices to bound the number of $x$ in the quantifier-free formulas of the form $\theta$.
We obtain this via Theorem \ref{thm:bezoutatomic}.

\myeat{
the proof of 2.2.1 can also use the maximal degree of these variables.  
Alternatively Bezout's Theorem (see, e.g.,  Lemma 2.1.2 of \cite{emm}) can 
be used to bound the degree of the variety;   this will cost one exponential in the degree, 
leading to $(2^d)^{2^n} = 2^{d 2^n} $.
}

 \end{proof}

        \subsection{$\omega$-RQC and Pseudo-finite fields}
   We continue  with the theory $T_p$ of pseudo-finite fields of  positive characteristic $p>0$.
\myeat{
 The induced language from a field $F$ on a  subset $P$ is generated,
   by definition, by the $n$-ary $0$-definable subsets of $F$, restricted to $P$.  Though 
  the language of rings is finitely generated, the induced language on finite subsets $P$ is not 
  finitely generated uniformly in $P$.     
}
Our next goal is:
\begin{theorem} \label{thm:psfnotomegabounded}
The theory of a pseudo-finite field of positive characteristic $p$ is not $\omega$-RQC.
\end{theorem}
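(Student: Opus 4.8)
The plan is to produce a single $L_P$ formula $\phi^*$ that interprets, uniformly over all embedded finite subsets, higher-order logic on $P$ up to an order that \emph{grows with} $|P|$ --- so that although $\phi^*$ has bounded quantifier rank, its behaviour on large $P$ requires higher and higher order and hence matches no fixed $k$-RQ sentence. Concretely I would: (i) upgrade the interpretation of Theorem~\ref{thm:pseudofiniteposexhaustive} from ``$k$-th order for each fixed $k$, using $k$ parameters'' to ``order $\sim |P|$, using a single bounded tuple of parameters''; (ii) inside this interpreted order-$|P|$ logic, write down the analogue of the spectrum sentences used in the Corollary following Theorem~\ref{thm:pseudofiniteposexhaustive}; and (iii) observe that any $k$-RQ sentence defines only an order-at-most-$k$ (hence elementary-in-$|P|$) property of the induced structure, whereas $\phi^*$ defines a non-elementary one, so $\phi^*$ is equivalent to no $k$-RQ sentence.

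For step (i) the engine is again the definable non-degenerate bilinear form $\bilin$ and the M\"obius step $g_{t}\colon y \mapsto 1/(t-y)$ that turns the span $Y_i$ into a linearly independent set $X_{i+1}$ coding subsets of $X_i$. The point is to make this uniform in the level. Since the maps $g_t$ are fractional-linear, their composites are governed by products of $2\times 2$ matrices over the field, equivalently by linearized (twisted) polynomial arithmetic; and membership of a field element in the subspace $Y_i$ is tested by a single linearized annihilator polynomial $L_{Y_i}$. Exploiting this algebra, I would define by fixed $L_P$ formulas a universe $W$ of codes together with a membership relation $\mathsf{Mem}$ and a level predicate, drawing the successive parameters from a uniformly definable sequence (for instance powers of one element algebraically independent over $P$) rather than one fresh parameter per level. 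The target is a single interpretation realizing the $|P|$-th order set-theory structure of Theorem~\ref{thm:pseudofiniteposexhaustive} all at once.

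With such a $\phi^*$ in hand, step (iii) is a diagonalisation exactly parallel to the Corollary after Theorem~\ref{thm:pseudofiniteposexhaustive}, but now with the order unbounded along a single formula. If $\phi^*$ were $k$-RQ, then evaluating it on $P=\Ff_q$ (definable by $x^q=x$) would amount to order-at-most-$k$ model checking on a structure of size $q$, which is elementary in $q$; but the interpreted order-$q$ logic lets $\phi^*$ define, on $\Ff_q$, a property whose complexity is a tower of height $\sim q$, which is non-elementary. This is consistent with the primitive-recursive decision procedure of Fried--Sacerdote \cite{FriedSacerdote} and with the lower bounds of Proposition~\ref{prop:pseudofinposlowerbound}: order $\sim|P|$ sits precisely in the gap between elementary and primitive recursive. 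Equivalently, one can phrase the conclusion through the ``not finitely redeemable'' criterion of Example~\ref{ex:transient}, taking $P_k,P'_k$ to be subsets whose interpreted towers agree through level $k$ but diverge higher up.

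The main obstacle is step (i): first-order logic cannot iterate $g_t$ a number of times depending on $|P|$, so the real content is to flatten the $|P|$-fold composition into fixed formulas. The delicate verification is that the linearized-polynomial / fractional-linear description of the composite is itself first-order definable uniformly in the level, and that the resulting $\mathsf{Mem}$ is extensional and faithfully recovers the iterated powerset; this is exactly what upgrades ``$k$-RQC fails for each $k$'' (already known) to ``$\omega$-RQC fails''. A secondary point, needed for the diagonalisation, is to check that the induced structure on $\Ff_q$ genuinely supports the order-$q$ computation --- which is where the failure of uniform finite generation of the induced language on finite subsets is used.
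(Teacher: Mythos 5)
Your route is genuinely different from the paper's, and it has a gap at exactly the point you flag as ``the main obstacle'': step (i) is not carried out, and there is good reason to believe it cannot be. The tower construction behind Theorem~\ref{thm:pseudofiniteposexhaustive} alternates two operations: taking the linear span $Y_i$ of $X_i$ (a closure operation defined by a universally quantified condition involving the whole set $X_i$) and applying the M\"obius step $y \mapsto 1/(t_i - y)$. Flattening $|P|$-many alternations of these into a single fixed first-order formula amounts to an inductive definition of unbounded depth; first-order logic has no fixed-point or transitive-closure operator, and the fractional-linear / linearized-polynomial algebra you invoke only controls the M\"obius steps, not the interleaved span operations (which are not pointwise maps at all). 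Nothing in the proposal explains how the level-$i$ membership relation is to be defined uniformly in $i$ by formulas whose size and quantifier rank do not grow with $i$; that uniformity is precisely what Theorem~\ref{thm:pseudofiniteposexhaustive} does \emph{not} provide, since its formulas and parameter tuples grow with $k$.

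Moreover, the sentence $\phi^*$ you aim for would be isomorphism-invariant (its truth would depend only on $|P|$, via a spectrum-style diagonalisation) and would have a bounded number of variables. By Proposition~\ref{prop:elem}, the existence of such a counterexample to $\omega$-RQC would force every completion of $PF_p$ to be non-elementary even for sentences with a fixed number of bound variables. The paper explicitly conjectures the opposite (``fixed-variable elementary'') and states that it doubts an isomorphism-invariant counterexample exists; your strategy, if completed, would refute that conjecture, so the burden of closing step (i) is substantial. The paper's actual proof sidesteps all of this by using a \emph{non}-isomorphism-invariant sentence --- ``the sum of the elements of $P$ is a square'' --- together with an algebraic argument: Lemma~\ref{lem:ram0} shows via a ramification computation that $\sqrt{\sum a_i}$ is not determined by the compositum of the algebraic closures of the $(n-1)$-element subtuples, and Lemma~\ref{lem:differonsquare} (using Ax's results on relatively algebraically closed subfields of pseudo-finite fields) converts this into two embedded finite subsets agreeing on all formulas of arity at most $n-1$ but disagreeing on $\psi$. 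That ``not finitely redeemable'' argument defeats every $k$-RQ sentence at once without interpreting unbounded-order logic. To salvage your approach you would need either to actually construct the uniform interpretation (and thereby disprove the paper's conjecture) or to switch to a non-invariant sentence as the paper does.
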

We show that  there is a $L_P$-sentence $\psi$ such that 
for arbitrary large $n$, there are embedded finite subsets $(M,P_n)$, $(M,P'_n)$ with
$M$ a pseudo-finite field and a function $h_n$ from $P_n$ to $P'_n$ that preserves all atomic
formulas among $n$ elements.
Rephrased, we show that we can find {\em isomorphic} 
 pseudo-finite substructures of some   pseudofinite field of characteristic $p$ such that $\psi$ holds in one but not the other.  
 
 We will assume $p>2$; but an analogous construction works if $p=2$, replacing the use of square roots in the argument
below by  Artin-Schreier roots (solutions  $y$ of $y^2+y=x$).

   \begin{lemma}  \label{lem:ram0}  Let $k$ be a field of char. $p\neq 2$.  Let $K=k(x_1,\ldots,x_n)$ be the rational function field over $k$ in $n$ variables.
We can consider each $x_i$ as a member of the field, and
 let $x=\sum_{i=1}^n x_i$.  Let $K^{alg}$ be an algebraic closure of $K$. Within $K^{alg}$, let $K_i$ be the 
 algebraic closure of $k_i:=k(x_1,\ldots,x_{i-1},x_{i+1},\ldots,x_n)$.    Then $x$ is not a square
in $K_1 K_2 \cdots K_n$. 
   \end{lemma}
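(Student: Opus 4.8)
The plan is to produce a valuation $v$ of $L := K_1\cdots K_n$ for which $v(x)$ is odd, in the sense that $v(x)\notin 2\,v(L^\ast)$; this suffices, since $x=z^2$ would force $v(x)=2v(z)\in 2\,v(L^\ast)$. First I normalise the setup. Enlarging $k$ to $\overline{k}$ changes neither the fields $K_i=\overline{k_i}$ (because $\overline{k}\subseteq\overline{k_i}$ already), nor $L$, nor $x$, so I may assume $k$ is algebraically closed. Then I change coordinates on $K=k(x_1,\dots,x_n)$, putting $u=x=\sum_i x_i$ and keeping $x_2,\dots,x_n$, so that $K=k(u,x_2,\dots,x_n)$ and $x_1=u-c$ with $c=x_2+\cdots+x_n$. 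Let $v$ be the $u$-adic valuation of $K$ (order of vanishing along the hyperplane $x=0$): then $v(x)=v(u)=1$, the value group is $\mathbb{Z}$, and the residue field is $\kappa:=k(x_2,\dots,x_n)$. The entire difficulty is to extend $v$ to $L$ without enlarging its value group.

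The key observation is that $v$ restricts to the \emph{trivial} valuation on each $k_i$: the projection of the hyperplane $x=0$ onto the coordinates $x_j$ ($j\neq i$) is an isomorphism, so $k_i$ maps isomorphically into the residue field and $v|_{k_i}$ is trivial. A trivial valuation stays trivial on any algebraic extension, hence $v|_{K_i}$ is trivial for every $i$, and adjoining $K_i$ only contributes elements that are units for $v$. In particular $k_1=\kappa$, whence $K_1=\overline{\kappa}$ and $K':=KK_1=\overline{\kappa}(u)$; the $u$-adic place of $K'$ extends $v$ with ramification index $1$ (a constant-field extension is unramified everywhere), residue field $\overline{\kappa}$ and value group $\mathbb{Z}$. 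It thus suffices to show that $L/K'$ is unramified at the place $u=0$.

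The crux is to bound the ramification of each $K'K_i/K'$ ($i\geq 2$) at $u=0$. Inside $K'=\overline{\kappa}(u)$ the field $k_i$ is generated by $x_1=u-c$ together with the constants $x_j\in\overline{\kappa}$ ($2\le j\le n$, $j\neq i$), so its only $u$-dependence is through $x_1=u-c$. Writing $s:=u-c$ and $k_i^{\circ}:=k(\{x_j:2\le j\le n,\ j\neq i\})$, every element of $K_i$ is algebraic over $k_i^{\circ}(s)$, and hence $K'K_i/K'$ can ramify only over values of $s$ that are algebraic over $k_i^{\circ}$ (zeros and poles of discriminants and leading coefficients; in characteristic $p$ the inseparable pieces such as $s^{1/p}$ also ramify only at $s=0$ or $s=\infty$). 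But $u=0$ corresponds to $s=-c=-(x_2+\cdots+x_n)$, which is transcendental over $k_i^{\circ}$ precisely because $x_i$ is transcendental over $k(\{x_j\}_{j\neq i})$. Hence $u=0$ is not a branch point of $K'K_i/K'$, so this extension is unramified there. Since unramified extensions are closed under composita, $L=K'K_2\cdots K_n$ is unramified over $K'$ at $u=0$, and its value group there remains $\mathbb{Z}$. Therefore the extended valuation satisfies $v(x)=v(u)=1\notin 2\mathbb{Z}=2\,v(L^\ast)$, and $x$ is not a square in $L$.

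The main obstacle is exactly this ramification bookkeeping for an \emph{infinite} algebraic compositum: one must ensure that adjoining the full algebraic closures $K_i$ — including inseparable elements in characteristic $p$ — introduces no $2$-divisibility into the value group at $u=0$. The clean way to organise it is to pass to the completion $\overline{\kappa}(\!(u)\!)$ and observe that each $K_i$, being trivial for $v$ on the subfield $k_i$, embeds into the maximal extension unramified at $u=0$ (whose value group is $\mathbb{Z}$); the triviality of $v|_{k_i}$ is what forces the relevant point $s=-c$ to avoid the finite, algebraic-over-$k_i^{\circ}$ branch locus. I expect the compositum step and the characteristic-$p$ inseparable case to be the only genuinely delicate points, the remainder being a coordinate change and standard valuation theory.
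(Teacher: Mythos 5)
Your setup---the $x$-adic valuation $v$, the reduction to $k$ algebraically closed, the observation that $v$ is trivial on each $k_i$ and hence on each $K_i$---matches the paper's proof. But the step where you conclude that $L=K_1\cdots K_n$ is unramified over $K'$ at $u=0$ ``with value group still $\mathbb{Z}$'' is false, and the proof breaks exactly there. The culprit is the purely inseparable part of $K_i=\overline{k_i}$, which you flag as delicate and then dismiss with the claim that inseparable pieces such as $s^{1/p}$ ramify only at $s=0$ or $s=\infty$. That claim fails after base change to $\overline{\kappa}$: the cover $\overline{\kappa}(s^{1/p})/\overline{\kappa}(s)$ is ramified with index $p$ at \emph{every} $\overline{\kappa}$-point (the local parameter $s^{1/p}-a^{1/p}$ satisfies $(s^{1/p}-a^{1/p})^p=s-a$), in particular at $s=-c$. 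Concretely, in characteristic $p$ one has
\[
x^{1/p} \;=\; x_1^{1/p} + (x_2+\cdots+x_n)^{1/p} \;\in\; K_2\cdot K_1 \subseteq L,
\]
so $v(x^{1/p})=1/p$ lies in $v(L^{\ast})$ and the value group is at least $\Zz[1/p]$, not $\Zz$. (Likewise $K_i\not\subseteq\overline{\kappa}(\!(u)\!)$, since $u-c$ is a unit that is not a $p$-th power there.) A telltale symptom is that your ramification analysis never invokes $p\neq 2$: if the value group really stayed $\Zz$, the identical argument would prove the lemma in characteristic $2$, where it is false because $\sqrt{x}=\sqrt{x_1}+\sqrt{x_2+\cdots+x_n}\in K_1K_2$.

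The conclusion $v(x)\notin 2\,v(L^{\ast})$ is salvageable, but only by showing that the ramification of $L/K$ is purely wild, i.e.\ $v(L^{\ast})\subseteq\Zz[1/p]$ (no tame ramification of even order); since $p\neq 2$ this excludes $1/2$. That is essentially what the paper does, in a different dress: it replaces value groups by the inertia group $\mathrm{Aut}(L/K,\res(L))$. Purely inseparable extensions contribute no automorphisms at all, so they are harmless in that formulation; an automorphism of $L/K$ fixing the residue field restricts on each $K_i$ to an automorphism over the trivially valued $k_i$ fixing the residue field, hence to the identity, so the inertia group of $L/K$ is trivial. Meanwhile $K(\sqrt{x})/K$ carries a nontrivial inertia element $\tau$ (the computation with $v(a+b\sqrt{x})$), and lifting $\tau$ to the normal extension $L$ contradicts that triviality. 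Your separable-part analysis (branch loci defined over $\overline{k_i^{\circ}}$, with $s=-c$ transcendental) is fine; it is only the inseparable part that forces either the switch to inertia groups or an explicit argument that $(v(L^{\ast}):\Zz)$ is a power of $p$.
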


 \prf  For any irreducible $f \in k[X_1,\ldots,X_n]$, we have a valuation $v_f: K \to \Zz$, defined by:
 \[ v_f(g) = j \iff g=   f^j \cdot h \] 
 where $h$ is a polynomial relatively prime to $f$.  The existence and uniqueness of $j$ follows from unique factorization in $k[X_1,\ldots,X_n]$, and the fact that $f$ is irreducible.   It is clear that $v_f(uv)=v_f(u)+v_f(v)$, $v_f$ is $0$ on $k$,
 $v_f(u+v) \geq \min(v_f(u),v_f(v))$.    
 
 Extend $v_{x_1+\cdots+x_n} $ to a valuation on $K^{alg}$,  denoted by $v$. The key property is
that $v$ is $0$ on $K$ and $1$ on $x$.
 
An automorphism $\si$ of a Galois extension $L$ of $K$ over $K$ is   said to   {\em fix  the residue field}  if whenever $v(u)=0$, we have $v(u-\si(u))>0$.   
I.e. $\si$ induces the identity on the residue field.  Let $Aut(L/K,\res(L))$ denote
the group of automorphisms of $L/K$ fixing the residue field.  The set of elements fixed by every member
of
 $Aut(L/K,\res(L))$ is called the {\em maximal
unramified subextension of $L/K$.}  If it equals $L$, we say $L/K$ is unramified or $L$ is unramified over $K$, and otherwise we say
that $L$  is ramified over $K$.   
\myeat{If $L$ is unramified over $K$, then 
it follows from basic valuation theory that every   (Galois) subextension is unramified.   }

 Now $v$ restricts to the trivial valuation of each field $k_i$: every non-zero element maps to $0$.    
Using basic valuation theory (applying the definition of a valuation to minimal polynomials), $v$ is trivial on each $K_i$.  
Hence
 the only automorphism of $K_i$ over $k_i$ that fixes the residue field is the identity, and thus
 the same statement holds for  automorphisms
 of $K_1\cdots K_n$ fixing the residue field.
  So  $K_1 \cdots K_n$ is unramified over $K$.  To prove that
 $\sqrt{x} \notin K_1 \cdots K_n$, it suffices to show that $K(\sqrt{x})$ is ramified over $K$, which we argue next.
 
 Since $x$ is not a square in $K$ (again using unique factorization), $K(\sqrt{x})$ is a Galois extension of $K$
 of degree $2$, and admits the automorphism $\tau$ fixing $K$ and with $\tau(\sqrt{x})=- \sqrt{x}$.  Since $p \neq 2$,
 $\tau$ is not the identity. We will show that $\tau$ fixes the residue field, thus $\sqrt{x}$ witnesses
that $K(\sqrt{x})$ is ramified over $K$.

 An element $c$ of $K(\sqrt{x})$ has the form $c=a+b \sqrt{x}$ with $a,b \in K$. 
If $b=0$ then $v(c-\tau(c))=v(0)=\infty$ and in particular $v(c-\tau(c)) \neq 0$.
  Thus for the purposes of showing that $\tau$ fixes the residue field, we assume $b \neq 0$.
We have $v(a) \in \Zz$,
 and $v(  b \sqrt{x}) = 1/2 + v(b) \notin \Zz$. So $v(a) \neq v(b \sqrt{x})$.
Thus $v(c) = \min \{v(a), v(b)+1/2\}$.
Towards arguing that $\tau$ fixes the residue field, we are interested in $c$ with  $v(c) =0$.
$v(b)+1/2$ cannot be $0$, so if $v(c)=0$, we know
that  $v(a)=0$ and $v(b) \geq -1/2$. Since $v(b)$ is an integer, we conclude that $v(b) \geq 0$.
We have $v(c-\tau(c))=v(a+b\sqrt{x}- (a-b \sqrt{x}))=v(2b \sqrt{x})\geq v(b)+1/2$. Since $v(b) \geq 0$, we deduce
that $v(c-\tau(c))>0$ as required.

This shows that $\tau$ fixes
the residue field, and $K(\sqrt{x})$ is ramified over $K$, finishing the proof of the lemma.
 \eprf  
   
   \begin{lemma}   \label{lem:differonsquare} Let $F$ be a pseudo-finite field of char. >2, and $a_1,\ldots,a_n \in F$ algebraically independent elements.
   Then there exists a pseudo-finite field $F'$ and $a'_1,\ldots,a'_n \in F'$ such that  
   $(F,a_1,\ldots,a_{i-1},a_{i+1},\ldots,a_n) \equiv (F',a'_1,\ldots,a'_{i-1},a'_{i+1},\ldots,a'_n)$ for each $i$, while also
   $F \models (\exists y)(y^2 =\sum a_i)$ iff $F' \models \neg (\exists y)(y^2 =\sum a_i')$.
   \end{lemma}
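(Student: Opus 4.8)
The plan is to translate everything into the absolute Galois group of $K=k(a_1,\dots,a_n)$, where $k=\overline{\Ff_p}\cap F$ is the core of $F$, and then to feed in \lemref{lem:ram0}. Since the $a_i$ are algebraically independent, $K\cong k(x_1,\dots,x_n)$ with $x:=\sum_i a_i$ corresponding to $\sum_i x_i$; writing $k_i=k(a_1,\dots,a_{i-1},a_{i+1},\dots,a_n)$ and $K_i=\overline{k_i}$ inside a fixed algebraic closure $\overline K$, \lemref{lem:ram0} gives $\sqrt x\notin K_1\cdots K_n$. First I would recall the model theory of pseudo-finite fields (Ax \cite{ax}, refined by Chatzidakis--van den Dries--Macintyre): a pseudo-finite field $E\supseteq K$ is determined, up to elementary equivalence over $K$, by the characteristic together with its \emph{Frobenius} $\phi_E\in\mathrm{Gal}(\overline K/K)$ -- the image of a topological generator of $\mathrm{Gal}(\overline E/E)\cong\hat{\mathbb Z}$ -- for which $E\cap\overline K=\overline K^{\,\overline{\langle\phi_E\rangle}}$.

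Two consequences drive the argument. (i) An element $c\in\overline K$ lies in $E$ iff $\phi_E(c)=c$; in particular $x$ is a square in $E$ iff $\phi_E$ fixes $\sqrt x$. (ii) The reduct $(E,a^{(i)})$, where $a^{(i)}=(a_1,\dots,a_{i-1},a_{i+1},\dots,a_n)$, depends only on $\phi_E\restrict K_i$, because the relative algebraic closure of $\Ff_p(a^{(i)})$ in $E$ is $E\cap K_i=K_i^{\,\overline{\langle\phi_E\restrict K_i\rangle}}$. Hence matching all leave-one-out theories of $F$ reduces to producing a pseudo-finite $F'\supseteq K$ whose Frobenius $\phi'$ agrees with $\phi:=\phi_F$ on every $K_i$ -- equivalently on the compositum $N:=K_1\cdots K_n$ -- while acting on $\sqrt x$ oppositely to $\phi$.

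Next I would build such a $\phi'$. Each $K_i=\overline{k_i}$ is setwise $G$-invariant for $G=\mathrm{Gal}(\overline K/K)$ (as $G$ fixes $k_i\subseteq K$ and permutes the roots of $k_i$-polynomials), so $N$ and the quadratic extension $K(\sqrt x)$ are $G$-invariant as well. By \lemref{lem:ram0}, $\sqrt x\notin N$, so $N(\sqrt x)/N$ has degree two; let $\sigma_0$ be its nontrivial automorphism and extend it to $\sigma\in G$, so that $\sigma\restrict N=\mathrm{id}$ and $\sigma(\sqrt x)=-\sqrt x$. Put $\phi':=\sigma\circ\phi$. For $y\in N$ we have $\phi(y)\in N$ by invariance, hence $\phi'(y)=\sigma(\phi(y))=\phi(y)$, so $\phi'\restrict N=\phi\restrict N$; in particular $\phi'$ agrees with $\phi$ on every $K_i$ and on $\overline{\Ff_p}\subseteq N$ (so $F'$ inherits the core of $F$). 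Writing $\phi(\sqrt x)=\varepsilon\sqrt x$ with $\varepsilon=\pm1$, we get $\phi'(\sqrt x)=\sigma(\varepsilon\sqrt x)=-\varepsilon\sqrt x$, so $\phi'$ fixes $\sqrt x$ iff $\phi$ does not.

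Finally I would realize $\phi'$: by the existence half of the CvdDM characterization (and Jarden's theory of Frobenius fields) there is a pseudo-finite $F'\supseteq K$ whose induced Frobenius on $\overline K$ is $\phi'$; setting $a'_i:=a_i\in K\subseteq F'$, so that the reduct ${a'}^{(i)}$ equals $a^{(i)}$, consequence (ii) gives $(F',a^{(i)})\equiv(F,a^{(i)})$ for every $i$, while (i) and the sign flip give that $\sum a'_i=x$ is a square in exactly one of $F,F'$. I expect the realization step to be the main obstacle: one must ensure that $\phi'$, although prescribed on the \emph{infinite} extension $N/K$, is genuinely the Frobenius of a pseudo-finite field with the same core as $F$ (its fixed field must be PAC with absolute Galois group $\hat{\mathbb Z}$). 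A naive Haar-measure argument does not apply here, since the coset $\{\tau:\tau\restrict N=\phi\restrict N\}$ has measure zero because $[N:K]=\infty$; the correct tool is the precise type-realization theorem for pseudo-finite fields, together with a careful verification that equality of Frobenius on $K_i$ really is equivalent to elementary equivalence of the reducts with the parameters $a^{(i)}$.
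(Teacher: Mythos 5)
Your proposal is correct and follows essentially the same route as the paper's proof: both use Lemma \ref{lem:ram0} to produce an automorphism fixing the compositum $K_1\cdots K_n$ while negating $\sqrt{x}$, twist the generator of the absolute Galois group (the paper's $\sigma'=\tau\sigma$ is your $\phi'=\sigma\circ\phi$), and then invoke Ax's realization and elementary-equivalence results to obtain $F'$ and match the leave-one-out theories. The realization concern you flag at the end is handled in the paper exactly as you suggest, via the extension of Ax's Proposition 7 prescribing the full relative algebraic closure rather than any measure argument.
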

   
\begin{proof}
Since $F$ is pseudo-finite, it has one extension of each fixed degree, and thus the Galois
group for any fixed degree is cyclic: see, for example \cite{zoenotes}. We can thus create an automorphism of
each fixed degree extension where the set of elements fixed is exactly
 $F$, and since the full algebraic closure is the union
of these extensions,
$F=Fix(\si)$ for some automorphism $\si$ of the algebraic closure $F^{alg}$.

 Let $k$ be the algebraic closure of the prime field, $K=k(a_1,\ldots,a_n)$, $k_i = k(a_1,\ldots,a_{i-1},a_{i+1},\ldots,a_n)$, $K_i = k_i^{alg}$.  By Lemma \ref{lem:ram0}, there exists an automorphism $\tau$ of $F^{alg}$
 fixing $K_1,\cdots,K_n$ but with $\tau(a) = -a$ where $a=\sum_{i=1}^n a_i$.  Let $\si' = \tau \si$.
Let $K$ be $Fix(\si')$. Thus $K$ is a subfield of $F^{alg}$. Further
$K$ has exactly one extension of every degree.
We now use the following result:

\medskip

For $F$ a pseudo-finite  field,
$K$ a subfield of $F^{alg}$  having at most one extension of each degree, there is a pseudo-finite field $K'$ such
that the algebraic numbers in $K'$ are isomorphic to $K$. By taking an isomorphic copy, 
we get that $K$ is a subfield of $K'$ and $K$ is relatively algebraically closed in $K'$.

\medskip

The claim above is an extension of
Proposition 7 in \cite{ax},  which states that for every subfield $K$ of  the algebraic numbers
such that $K$ has at most one extension of each degree, there is a pseudo-finite field $K'$ such
that the algebraic numbers in $K'$ are isomorphic to $K$. 

 Thus there exists a pseudo-finite field $F'$ containing $Fix(\si')$ and such that
 $Fix(\si')$ is relatively algebraically closed in $F'$.  For each $i$ set $a_i' = a_i$.  Then
for any $x$ we have
 $F' \models   (\exists y)(y^2 =x)$ iff $Fix(\si') \models   (\exists y)(y^2 =x)$.
In the case $x=a$, we have this is true
 iff $\tau \si (\sqrt{a} ) = \sqrt{a}$
 iff $- \si (\sqrt{a}) = \sqrt{a}$ iff $\sqrt{a} \notin F$ iff $F \models \neg (\exists y)(y^2 =a)$.
To see the right to left direction, note that if $F \models \neg (\exists y)(y^2 =a)$, then $\si$ cannot fix $\sqrt{a}$,
and hence it must map $\sqrt{a}$ to $-\sqrt{a}$, the only other root of $a$ in the algebraic closure.
$\tau$ would need to send $-\sqrt{a}$ to $\sqrt{a}$, since otherwise it must fix $\sqrt{a}$, which means
it would fix $a$ as well, contrary to the assumption on $\tau$.
The left-to-right direction is argued similarly.
 
 Again by results of Ax, the theory of $(F, a_1,\ldots,a_{n-1})$ is determined by 
 \[ k(a_1,\ldots,a_{n-1})^{alg} \meet F
 \]
 This is identical to the same expression for $F'$, so the theories are equal; and likewise for any other $n-1$-tuple.
This completes the proof of Lemma \ref{lem:differonsquare}.
\end{proof}

Note that above we have dealt with embedded finite subsets over distinct pseudo-finite fields. We can get the same
result for a class over the same  field:
 $F,F'$ are elementarily equivalent; so one can find $F''$ and elementary embeddings
 of $F$ and $F'$ into $F''$; of course the images of the $a_i$ will be two different 
$n$-tuples in $F''$, with the same properties as above.
   
We now prove Theorem \ref{thm:psfnotomegabounded}.
   Now recall that, due to the presence of a nondegenerate bilinear form, we have  $L_P$ formulas
  that assert:

\begin{compactitem}
\item   $P$ is linearly independent
\item   
$y$ is the sum of the elements of $P$: in particular,
    $y$ depends on $P$ and for $u \in P$ $y-u$     depends on $P \m \{u\}$.
\end{compactitem}
The  bilinear form is used to state that $y$ is the sum of elements of $P$.
Let $\phi(y)$ be this formula.
We also have an $L_P$ sentence $\psi$
    asserting that the sum of elements of $P$ is a square.  
By  Lemma \ref{lem:differonsquare},
 for any sublanguage $L'$ of $L$
   generated by formulas of arity $n-1$ or less, there exist two $L'$-isomorphic choices of $P$, one satisfying
      $\psi$  and the other not. From this we can easily conclude that
$\psi$ cannot be equivalent to a $k$-RQ formula for any $k$. Therefore the theory is
not $\omega$-RQC.

Notice that our example contradicting $\omega$-RQC is not isomorphism-invariant.
We believe this is not a coincidence.
By Proposition \ref{prop:elem} if there were an isomorphism-invariant $L_V$ sentence
that is not equivalent to a $k$-RQ  one for any $k$, then there could be no elementary time algorithm
for deciding sentences with a fixed number of variables, for any completion of theory of pseudo-finite fields.
We conjecture that there are completions that are ``fixed-variable elementary'', which would disprove
the existence of such a sentence.

\section{Embedded  models vs embedded subsets} \label{sec:arity}

We consider on more weakening of RQC, where we  restrict based on the signature for
 uninterpreted
relations, focusing on sentences
dealing with embedded finite subsets rather than  general embedded finite models.
\begin{definition}
We say a theory is \emph{$1$, Monadic-RQC} if every $L_P$ formula is equivalent to a $1$-RQ formula. 
\end{definition}

The motivation is to say whether this provides us with new collapse phenomena -- we will see that it does not.

We begin by extending Proposition \ref{prop:rcqnip}.
\begin{theorem} \label{thm:forall1boundednip} If $M$ has $1$, Monadic-RQC,  then $M$ is NIP.
\end{theorem}

\begin{proof}
If $T$ is not NIP, then by \cite{simontwovarsip},
we can find  an $L$ formula $R(x,y, \vec p)$ such
that in every model there is $\vec c$ such that there are arbitrarily large sets shattered
by $R_{y,\vec c}=\{x ~ | ~ R(x; y, \vec c)\}$. For  short ``$R$ has IP''.
Thus, we can find an infinite order indiscernible sequence
such that every subset of the sequence is $\{x | R(x, b, \vec c)\}$
for  some $b$.

We  find a sequence
$\langle a_i, b_i\rangle_i \in \Nn$ with each   $a_i \neq b_i$,  $\langle a_i, b_i\rangle$ indiscernible
over $\vec c$, and for each $i, j \in \Nn$,
$R(a_i, b_j, \vec c)$ if and only if $i<j$.

We can do this by starting with $\langle a_i\rangle $ indiscernible over $\vec c$, 
then build $b_j$ inductively: the induction step uses that every subset of the sequence is $\{x| R(x, b, \vec c)\}$ for some
$b$.
Then we take a subsequence of 
$\langle a_i, b_i\rangle$ that is indiscernible over $\vec c$.

We now consider several cases.

Case 1: There exists some $b^*$ such that $R(a_i,b^*, \vec c)$
 for all $i$, and
for infinitely many $i$ $\neg R(b_i, b^*, \vec c)$.
Then, by refining, we can arrange that $\neg R(b_i,b^*, \vec c)$ for all $i$ and
$\langle a_i, b_i\rangle$ is indiscernible over $\vec c \cup b^*$.

We can arrange that $\vec c \cup b^*$ has cardinality $0$ mod $4$
via padding.
Now we consider the class $\mathcal C$ of embedded finite subsets where $P$ is interpreted
by
$P_i=A_i \cup B_i \cup \vec c \cup b^*$

We define the $L_P$ sentence $\phi$ to hold if
there exists $\vec c, b^*$ consisting of distinct values,
such that,  letting:

\begin{flalign*}
& A_{\vec c,b*} = \{x \in P ~ | \bigwedge_i x \neq c_i  \wedge x \neq b^* \wedge
R(x, b^*, \vec c)\} &\\
& B_{\vec c, b^*} = P \setminus A_{\vec c, b^*} \setminus b^* \setminus \vec c &\\
& <^A_{\vec c, b^*} = \{ \langle x,y \rangle ~  | x \neq y \wedge x,y  \in A_{\vec c, b^*}
\wedge &\\
& \forall z \in B_{\vec c, b^*} (R(x,z, \vec c) \rightarrow R(y,z, \vec c)\} &\\
& <^B_{\vec c, b^*}  \mbox{ defined similarly as above but with } &\\
& B_{\vec c, b^*} \mbox{ and } A_{\vec c, b^*} \mbox{ swapped } &\\
& Biject_{\vec c,b^*} = \{ \langle x, y \rangle ~  | ~ x \in A_{\vec c, b^*} \wedge &\\
& y  \in B_{\vec c, b^*} \wedge y \mbox{ is } 
 <^B_{\vec c, b^*}\mbox{- maximal such that } R(x,y, \vec c) \} &
\end{flalign*}

then:

\begin{compactitem}
\item $<^A_{\vec c, b^*}$ is a linear order on $A_{\vec c, b^*}$ and similarly
for $<^B_{\vec c, b^*}$.

\item $Biject_{\vec c,b^*}$ is a bijection from  
$A_{\vec c, b*}$ to $B_{\vec c, b*}$

\item for some $b$, $\{x \in A_{\vec c, b^*} ~ | R(x,b, \vec c) \}$ includes exactly
one of any element of $A_{\vec c, b^*}$ and its $<^A_{\vec c,b^*}$ successor, 
while
including the first element and excluding the last element according to 
$<^A_{\vec c, b^*}$.
\end{compactitem}

The important points about $\phi$ are:
\begin{compactitem}
\item $\phi$ is in $L_P$
\item $\phi$   implies, over any embedded finite subset,  that the cardinality of $P$ is $0$ mod $4$. This follows
directly from the definitions and the assumption that $\vec c  \cup \vec b^*$ has cardinality a multiple
of $4$. In particular this implication holds for subsets in $\mathcal C$. 
\item Conversely, for any $P_i \in \mathcal C$, if $P_i$ has cardinality $0$ mod 
$4$ then $|A_i|$ and $|B_i|$ are even, and by choosing $\vec c$ and $b^*$
 correctly --- that is, as above --- we see  that $\phi$ holds.
\end{compactitem}

But $\phi$ cannot be definable by a $1$-RQC formula for embedded finite subsets
in
class $\mathcal C$. Over this class, every $L$-formula is equivalent
(modulo expansion)
to a formula  using $Biject_{\vec c,b^*}$, $<^A_{\vec c, b^*}$,
$<^B_{\vec c, b^*}$
for the correct $\vec c, b^*$, where these are formulas in $L$ extended
with constants. So we have a formula in this language
over a structure consisting of two linear ordered sets with a bijective
correspondence. By a standard Ehrenfeucht-{\fraisse}
argument it is clear that the cardinality of the universe
modulo $4$ cannot be defined in such a family.
This completes the argument for Case 1.

Case 2:  there exists some $b^*$ such that $\neg R(a_i,b^*, \vec c)$
holds for all $i$, and for infinitely many $i$ $R(b_i, b^*, \vec c)$.

This is argued symmetrically with Case 1 above.

Case 3. None of the above.
If $R(b_i,b_j, \vec c)$ and $\neg R(b_j,b_i, \vec c)$ holds for all
$i<j$ (or dually) then $R(x,y, \vec c)$
defines an ordering on the $b_i$.
We let $\mathcal C$ be defined as above but without the $a_i$ or $b^*$, and conclude analogously to Case 1.

Otherwise, by indiscernibility of $b_i$ over $\vec c$,
either   $R(b_i,b_j, \vec c)$ holds for all 
$i \neq j$, or  $\neg R(b_i,b_j, \vec c)$  holds for all $i \neq j$.  
Say the latter.  
Extend the indiscernible sequence $\langle a_i,b_i \rangle_{i \in \Nn}$ by 
adding one more element $(a_\omega,b_\omega)$ maximal in the ordering.
So $R(a_i,b_\omega, \vec c)$ holds for all $i \in \Nn$, 
using indiscernibility, since $b_\omega$ is above $b_i$ for all standard $i$.
But $\neg R(b_i,b_\omega, \vec c)$ holds for all $i$,  again by 
indiscernibility
and $b_\omega$ being above $b_i \in \Nn$.
Hence we are in Case 1, a contradiction.
\end{proof}

We now show that for $1$-RQC, there is no difference between looking at embedded
finite subsets and embedded finite models. That is, our weakening does not make any difference at the level of theories:

\begin{theorem} \label{thm:rqc1bounded} A theory $T$ is $1$, Monadic-RQC iff
$T$ is $1$-RQC.
\end{theorem}

\begin{proof} The interesting direction is assuming $1$, Monadic-RQC
and proving   $1$-RQC.
Inductively it suffices to convert a formula $\phi$ of the form:
\[
\exists x  ~ Q_1(u_1) \ldots Q_n(u_n) ~ \Gamma(x, \vec u, \vec y)
\]
to an active domain formula, where the $Q_i$ are active domain quantifiers and
$\Gamma$ is a Boolean combination of $\sigma$ formulas and $L$ formulas.
We can assume that $x$ only occurs in $L$ formulas.
Let $\Gamma_L(x, \vec u, \vec y)$ be the vector of 
$L$ subformulas of $\Gamma$.
By Theorem \ref{thm:forall1boundednip}, $T$ is NIP.
By \cite{simon-chernikov-two} ``local types are uniformly definable over finite
sets'': 

\medskip

for every $L$ formula $\phi(\vec x; \vec y)$, there is $L$ formula $\delta(\vec y; \vec p)$ such that:
for any finite set $S$, for any $\vec x_0$, there is a tuple $\vec p_0$ such that
$\forall \vec y \in S ~ \phi(\vec x_0, \vec y) \leftrightarrow \delta(\vec y, \vec p_0)$.

\medskip

 This means, in particular, that for $\gamma_i \in \Gamma_L$
there is $\delta_i(\vec u, \vec p)$ such that for every 
finite set $P$  for every $x_0, \vec y_0$ in a model of $T$
there is $\vec p^i$ in $P$ such that
\[
\forall \vec u \in P ~ \gamma_i(x_0, \vec y_0, \vec u) \leftrightarrow \delta_i(\vec u, \vec p^i)
\]
This holds over all finite $P$, so in particular for the active domain of an embedded
 finite model.

We can thus replace $\phi$ with
\begin{align*}
\exists \vec p^1 \in \adom \ldots   \exists \vec p^n \in \adom \\
\vec p_1 \ldots \vec p_n \mbox{ are definers for some } x \mbox{ for } \vec y
\wedge \\
~ Q_1(u_1) \ldots Q_n(u_n) ~ \Gamma'(\vec u, \vec y, \vec p^1 \ldots \vec p^n)
\end{align*}
Here $\Gamma'$ is obtained by $\Gamma$ via replacing $\gamma_i$ with the corresponding
$\delta_i$. The first conjunct inside the existential
has the obvious meaning, that the iff above holds for some $x$ in the role of  $x_0$, with
$\vec y$ in the role of $\vec y_0$.

The first conjunct does not mention the additional relational signature $\sigma$,
and thus can be transformed into a $1$-RQ formula via $1$, Monadic-RQC.

\end{proof}

\myparagraph{Embedded subsets vs embedded models for isomorphism-invariant sentences}
We contrast Theorem \ref{thm:rqc1bounded}
with the situation when we restrict to \emph{isomorphism-invariant} sentences. Recall that this denotes sentences of $L_V$ whose truth value in an embedded finite model for theory $T$ depends only on the isomorphism type of the interpretations of the $V$ predicates. Isomorphism-invariant $L_V$ sentences are ones that represent ``pure relational computation''. Consider the random graph. 
The theory is not NIP, so by Theorem \ref{thm:forall1boundednip}
it is not $1$-RQC, and indeed not even $1$, Monadic-RQC. And  we have mentioned
before that it is $2$-RQC.
In fact, every $L_V$ sentence can be converted to one
that uses only Monadic Second Order quantification over the active
domain of the $V$ predicates. Thus, informally, first-order quantification over the model gives
you the power of Monadic Second Order  active domain quantification over  the relational signature $V$.

We can also see that there are isomorphism-invariant
$L \cup \{G(x,y)\}$ sentences $\gamma$
that are not equivalent to $1$-RQ ones: by using unrestricted
existential quantification to quantify over subsets of the nodes
of $G$, we can express that $G$ has a non-trivial connected component.

But consider an isomorphism-invariant sentence  in $L_P$.
By the observation above, applicable to arbitrary $L_P$  sentences,
 these can all be expressed in restricted-quantifier Monadic Second Order logic $\phi'$ quantifying
only over  unary predicate $P$. By isomorphism-invariance, such a sentence is determined by its truth value
on $P$ lying within an order-indiscernible set, and thus we can rewrite such a sentence to $\phi''$ that
does not mention the graph predicate of $L$ at all:  only equality atoms in it. Such a sentence can only define
the set of finite $P$'s
whose cardinality is in a finite or co-finite set. And such a set of finite $P$'s
can be also be defined in first-order logic over $P$.
Thus there can be a difference between considering unary signatures
and higher-arity signatures for isomorphism-invariant sentences.

\section{Discussion and open issues} \label{sec:conc}

The main goal of this work was to revisit the evaluation of formulas that refer to a vocabulary $L_V$ mixing the signature of an infinite
structure with a finite uninterpreted relational vocabulary. When this setting was studied in the 90's and the early
200's, the emphasis was on the ``very sunny day'' scenario of  restricted quantifier collapse, where the expressiveness
and evaluation complexity is not so different  -- from the point of view of ``data complexity'' -- from that of evaluation of first-order formulas in a relational vocabulary over
finite models.

In this paper we show a much richer landscape, including:
\begin{itemize}
\item Structures/theories where first-order logic over $L_V$ captures higher-order logic, or beyond. Thus both the expressiveness
and complexity is high.
\item Structures/theories where first-order logic over $L_V$ does not correspond to any reasonable logic over the finite vocabulary, where this can be ``fixed'' by expanding the signature, and cases where it cannot be fixed.
\end{itemize}
We also locate the theory of finite and pseudo-finite fields lives within this landscape, showing that it is extremely expressive.

In each of the approaches to exploring evaluation on wider classes of infinite structures, we have left open many basic questions.

In terms of higher-order collapse,
although we have separated $k+2$  from  $k$-RQC for every $k$,  we
do not have the corresponding separation for $k+1$ and $k$. Thus it is possible that, for example, two consecutive levels of the hierarchy collapse, but the remainder are separated. Our construction from
Subsection \ref{subsec:separate} shows that this would follow from  the corresponding separation
of pure spectra in finite model theory. 
We believe one can construct examples $T_k$ where
the $L_P$ theory is \emph{bi-interpretable} with $k$ higher order logic over $P$, which would provide an equivalence
between separations of RQC levels and expressiveness separations in higher-order logic.

On the issue of persistent unrestrictedness, one major question is whether any model theoretic
criteria can enforce potential RQC. In particular, we do not know if NIP
theories can always be expanded to become RQC. It is known that there are NIP theories (even
stable theories) that cannot be expanded to fall into the known existing classes that imply $1$-RQC  \cite{distalwild}.
We have shown that Atomless Boolean Algebras are persistently unrestricted. We suspect the same argument applies
to another well-known decidable theory, B\"uchi Arithmetic, but we have not verified this.

Our discussion of algebraic examples focused on finite and pseudo-finite fields, and (in the appendix)
on the related topic of vector spaces with a bilinear form into a finite field.
Even for pseudofinite fields and fileds, we do not have results when we restrict to characteristic $0$.

On the issue of impact of the signature, the main question is whether $k$-RQC for $V=\{P\}$, $P$ a unary predicate
$P$ implies $k$-RQC for general $V$.

The results on finite and pseudo-finite fields relate to the broader question of the connection
 between ``tame definability properties'' of a structure such as $k$-RQC and the complexity of the underlying theory.
In  Proposition \ref{prop:elem}, we noted one simple  connection:
roughly speaking,
high expressiveness of isomorphism-invariant $L_V$ sentences implies lower bounds
for  deciding $L$ formulas. In Section \ref{sec:pseudo} we showed that
we can sometimes convert high expressiveness of isomorphism-invariant sentences into lower bounds
on quantifier elimination in the theory, even when the number of variables is not fixed.

We know that, in general, properties like $k$-RQC do not even imply decidability of the theory,  much less
complexity bounds. This follows from the fact that purely model-theoretic properties -- e.g. NFCP, o-minimality, see Section \ref{sec:def} -- with
no effectiveness requirement suffice to get $1$-RQC. Conversely, Example \ref{ex:fcp} can be used to show that theories
with modest complexity of satisfiability (even for fixed number of variables) may not even be $\omega$-RQC. 
Intuitively, a Turing Machine can decide the theory by accessing additional parts of
the structure, while restricted-quantifier formulas have no means to access this structure. One might hope that
adding function symbols to the signature might close this gap, as it does in Example \ref{ex:transient}.
But Theorem \ref{thm:abapersistentlyunbounded}
 shows that there are examples where this tactic  cannot help.


\myparagraph{Acknowledgments} We thank the referees of LICS for many helpful comments. 
This research was funded in part by EPSRC grant EP/T022124/1.
For the purpose of Open Access, the author has applied a CC BY public
copyright licence to any Author Accepted Manuscript (AAM) version arising from
this submission.

\bibliographystyle{IEEEtran}
\bibliography{main}

\newpage
\appendix
\onecolumn
\section{Proof of Theorem \ref{thm:bbiso}: NIP theories are RQC within an infinite set, and also the proof of Proposition \ref{prop:transient}}
Recall that in the body of the paper we statd

\medskip

 In every NIP theory, there is  a model containing an infinite set  $I$ such that, for embedded finite
models based on $I$, every isomorphism-invariant $L_V$ formulas is equivalent to a $1$-RQ formula.

\medskip

As mentioned in the body, this is a variation of a result in  \cite{baldben}. But the exact statement
of the result in \cite{baldben} is different. Theorem 2.3 of \cite{baldben} states:

\begin{theorem} \label{thm:bbreal} Fix NIP $T$, $M$ a model of $T$, and $I,<$ a dense
linearly ordered set of elements in $M$ that is order-indiscernible: 
the truth value of each formula $\phi(\vec x)$ depends only on the ordering relations among $\vec x$ for $\vec x$ in $I$.

Then any $L_V$  formula is equivalent, over embedded finite models in $M$, to  an RQ formula that can make use of $<$.
\end{theorem}

Now suppose that we have an isomorphism-invariant $L \cup V$ formula $\phi$.

It is well-known that every theory with infinite models has a model with an infinite indiscernible sequence based on a dense linear order.
And isomorphism-invariance implies that any operation can be moved into the set.
Thus to close the gap to Theorem \ref{thm:bbiso}, we need to show that we can eliminate the reference to the indiscernible ordering
$<$ in $\phi'$, instead using $L$ formulas. In the case that $V=\{P\}$ this is easy, since if we have a sentence quantifying over $P$ using only  a linear
order on the domain, and the truth of the sentence depends only on the cardinality of $P$, it  must define a finite or co-finite set of cardinalities.
These can be described without referencing the order.

For general $V$, we provide an argument similar to the one in \cite{simon-chernikov-one} (page 422, below Fact 3.1).

First, suppose that the sequence is not totally indiscernible: that means that for some $L$
formula $\gamma(\vec x)$, we can use $\phi$ to distinguish a tuple $\vec i$ in $I$ from a permutation
of $\vec i$.
Then  for large enough embedded finite models within $I$, the ordering on $I$ is definable with
parameters in the model, by some formula $\lambda(i, i', \vec p)$: see comments below Fact 3.1 in \cite{simon-chernikov-one}.
Consider the $L \cup P$ sentence $\phi''$ that states there is $\vec p$ in the active domain of the embedded finite model such
that $\lambda( x,x', \vec p)$  induces a linear order on the elements in the active
domain and
the sentence obtained from $\psi$ by replacing $<$ with $\lambda(x,x', \vec p)$ holds.
Since $\phi$ was isomorphism-invariant, $\phi''$ is equivalent to $\phi$.

Now suppose the sequence is totally indiscernible. Then we can observe that
the argument in Section 4 of \cite{baldben} goes through to show that
we get a first-order restricted-quantifier formula using only the predicates in $V$ and equality.
The argument is given only for $L_P$, but generalizes to arbitrary
relational $V$. The proof uses only that the trace of an $L$-formula
on a totally indiscernible set is uniformly defined by an equality formula
with parameters in the set. The same holds for a finite indiscernible
set in an NIP theory.

We use a similar approach to prove Proposition \ref{prop:transient}.
Recall the statement
\transient*

\begin{proof}
By way of contradiction, let $\phi'_k$ be a $k$-RQ sentence equivalent to $\phi$ in some expansion $T^+$.
Let $I$ be an order-indiscernible set in a model of $T^+$. Let $\phi''_k$ be obtained by replacing atomic formulas in $\phi'_k$ by a disjunction of ordering
formulas, depending on whether the corresponding atom holds for some -- equivalently all -- tuples with the given ordering.
For $V_0$ an embedded structure in a model $M^+$ of $T^+$, $\phi$ holds in $M^+, V_0$ if and only if $\phi''_k$ holds on $M^+$, $V^I_0$, where
$V^I_0$ is a copy of $V_0$ with active domain in $I$.  Thus, arguing by way of contradiction, we can see that the same equivalence holds in any model of $T$ with an order-indiscernible set, where $<$ in $\phi''_k$
is interpreted as the indiscernible order. Att his point we have $\phi$ equivalent to a higher-order RQ formula, ecept for the use of the ordering.

If $k \geq 2$, we can use higher-order quantification  over the active domain to guess the ordering. If $k=1$ we can
use the argument above to eliminate the use of ordering in favor of $L$ formulas.
\end{proof} 
\section{Proof that Example \ref{ex:fcp} is not  $\omega$-RQC, but is potentially $1$-RQC}  

In the body of the paper we mentioned that 
NIP theories that are not $1$-RQC, or even $\omega$-RQC,  may become  $1$-RQC
when the theory is expanded. And we claimed in Example \ref{ex:transient}  that this phenomena holds
in Example \ref{ex:fcp}, which involves
 an equivalence relation $E(x,y)$ with arbitrarily large finite classes, 
We now give the details.

We let $M^+$  expand $M$ by a discrete linear order $<$ for which each
equivalence class is an interval, and $T^+$ be the theory of this model.
We claim that  $T^+$ is $1$-RQC.
For intuition, consider the formula $\phi \in L_P$ stating
that $P$ contains an equivalence class. In $M$ this was not expressible
by a RQ formula. In $M^+$, $\phi$ is equivalent -- over finite $P$
embedded in $M^+$ --
to the following sentence $\phi'$
\begin{align*}
\exists l \in P ~ \exists u \in P ~ E(l,u) \wedge \phi_{min}(l) \wedge 
\phi_{max}(u) \wedge \\
\forall y \in P [(l \leq y < u)  \rightarrow \exists y' \in P ~ Succ(y,y')]
\end{align*}
where $\phi_{min}(l)$ asserts that $l$ is the minimal element of
its equivalence class, $\phi_{max}(u)$ asserts that $u$
is the maximal element of its equivalence relation, and $Succ(y,y')$
expresses that $y'$ is the successor of $y$ in the linear order.
Recalling that $1$-RQ formulas are built up from atoms of $S$
and $L$ formulas, we see that $\phi'$ is a $1$-RQ formula.

To see the general result, it suffices, by \cite{benediktlibkinominimal}
or \cite{belegradekisolation},  to show
that $M^+$ is $o$-minimal: for every $L$-formula
$\phi(x,\vec p)$ and any $\vec p_0$, $\{x| M^+ \models \phi(x, \vec p_0)\}$
 is a finite union of intervals.

For this it suffices to show that an $L$-formula in variables $\vec x$
is a disjunction of formulas specifying:
\begin{itemize}
\item which of the $x_i$ are equivalent
\item ordering  among the equivalence classes
of the $x_i$ (where the notion of one class being above another is 
the obvious one)
\item ordering among the $x_i$ in the same class,
\item for each $x_i$ and $x_j$ that are neighbors in the ordering lying
within the same class,
\item information (lower bounds or exact bounds) on the distance of the
lowest $x_i$ in the class from the minimal element of the class,
and similarly for the highest $x_i$ in a class.
\end{itemize}

Given this statement, if we fix values for all but one free variable $x$, and
fix a formula as above, 
the set of values will be a single interval. The $o$-minimality result will then follow.

The quantifier-elimination result above is proven by a standard induction.
\section{Alternative proof of Claim \ref{clm:fv}: $1$-RQC over subalgebras
of an Atomless Boolean algebras}

Let $T$ be the theory of Atomless Boolean Algebras. In the body of the paper
we claimed that for this theory, every isomorphism-invariant sentence
was equivalent to a $2$-RQ sentence.
In the body we argued that it suffices to prove Claim \ref{clm:fv}:

\medskip

For every $L_V$ formula $\phi$ there is a $1$-RQ formula
$\phi'$ that is equivalent to $\phi$ over embedded finite models in which 
the active domain of the reduct to $V$ is a subalgebra.

\medskip

We gave a model-theoretic proof in the body, and here we give an alternative algorithmic argument.

As usual it suffices to eliminate one unrestricted quantifier at a time, converting:

$\phi=\exists x ~ Q(u_1) \ldots Q(u_k) ~ \Gamma(x,\vec u, \vec y)$

Where $Q(u_i)$ are active domain quantifiers and $\Gamma$ is a Boolean combination of $L$ formulas and $S$ formulas.
We ignore the $S$ formulas in $\Gamma$ in the analysis below.

Atomic formula in $\Gamma$  are of one of the forms 
$x \cap \sigma_1(\vec u) \cap \sigma_2(\vec y) \neq 0$, $x^c \cap \sigma_1(\vec u) \cap \sigma_2(\vec y) \neq 0$
where $\sigma_1, \sigma_2$ are terms built up using
$0,1$ and the usual Boolean operators. We will assume that $\Gamma$ covers all terms in these variables.

A Boolean algebra term $\sigma_1$ in variables $\vec x$ is contained in Boolean algebra
term $\sigma'_1$ if we have containment for every valuation of $\vec x$.

Fixing the variables as above, consider any element $x$ of the algebra. 
For a given $\sigma_2$,  and an element $x_0$ and vector of elements
$\vec y_0$ the \emph{$\sigma_2$ representation} of  $x_0, \vec y_0$ is the pair $(e_1, e_2)$
where $e_1$ is the union of all atoms 
$u$ in the subalgebra such that $x_0 \cap  u \cap \sigma_2(\vec y_0) \neq 0$
and $e_2$ is the union of all atoms $u$
in the subalgebra such that $(x_0)^c \cap  u \cap \sigma_2(\vec y_0) \neq 0$.
The \emph{full representation} of $x_0, \vec y_0$ consist of representatives for each $\sigma_2$. We index
a full representation as $\vec w$, where $w_{x, \sigma_2}$ represents the first components of the pair for
$\sigma_2$ and $w_{x^c, \sigma_2}$ represents the second component of that pair.

A vector of pairs $p_{\sigma_2}$ where  $\sigma_2$ range over terms in the appropriate variables, is said
to be \emph{$\vec y_0$ realizable} if there is some $x_0$ such they are the full representation of $x_0, \vec y_0$.

\begin{proposition} \label{prop:realizable}There is a $1$-RQ formula $\kappa(\vec w, \vec y)$ that holds if and only if
$\vec w$ is $\vec y$ realizable.
\end{proposition}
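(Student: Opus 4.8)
The plan is to characterize realizability by a conjunction of \emph{local} conditions --- one family indexed by the ``atomic'' $\vec y$-terms, namely the complete monomials $\tau(\vec y)=\bigcap_j y_j^{\epsilon_j}$ for $\epsilon \in \{0,1\}^{|\vec y|}$, and one family recording additivity --- and then to observe that each such condition is already $1$-RQ. Throughout I work over $\mathcal C$, so the active domain $A$ is a finite subalgebra of an atomless algebra $B$; its atoms partition $1$, and membership ``$a$ is an atom of $A$'' is itself $1$-RQ, via $\mathrm{Atom}(a) := a \neq 0 \wedge \forall v \in \adom\,(a \cap v = 0 \vee a \cap v = a)$.

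First I would record the only structural regularity of a genuine representation: $w_{x,\sigma_2}$ is \emph{additive} in $\sigma_2$, since $x_0 \cap a \cap (\sigma_2 \cup \sigma_2')(\vec y_0) \neq 0$ holds iff it is nonzero for $\sigma_2$ or for $\sigma_2'$. Hence every component is the union, over atomic terms $\tau$ Boolean-below $\sigma_2$, of the $w_{x,\tau}$, and likewise for the $x^c$-components; writing $S(\sigma_2)$ for this fixed, syntactic set of atomic terms, the family of equations $B_{\sigma_2}: w_{x,\sigma_2} = \bigcup_{\tau \in S(\sigma_2)} w_{x,\tau}$ together with its $x^c$-analogue is a finite conjunction of quantifier-free $L$-formulas in $\vec w$, so $1$-RQ. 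This reduces realizability to the atomic terms.

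Next I would pin down realizability over atomic terms. Fix an atom $a$ of $A$ and an atomic term $\tau$; distinct complete monomials evaluate to disjoint elements, so the meets $a \cap \tau(\vec y_0)$, as $(a,\tau)$ vary, are exactly the disjoint atoms of the algebra generated by $A$ and $\vec y_0$. On each nonzero piece $\gamma = a \cap \tau(\vec y_0)$ the behaviour of $x_0$ contributes independently, and because $B$ is atomless any such $\gamma$ may be assigned to $x_0$, to $x_0^c$, or split between both. Thus a genuine $x_0$ can be assembled piece-by-piece from any assignment obeying the local rule: if $a \cap \tau(\vec y_0)=0$ then $a$ lies in neither $w_{x,\tau}$ nor $w_{x^c,\tau}$, and if $a \cap \tau(\vec y_0)\neq 0$ then $a$ lies in at least one of them. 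Encoding this for each atomic $\tau$ gives
\[
A_\tau := \forall a \in \adom\,\Big[\mathrm{Atom}(a) \to \big((a \cap \tau(\vec y)=0 \to a\cap w_{x,\tau}=0 \wedge a\cap w_{x^c,\tau}=0)\,\wedge\,(a \cap \tau(\vec y)\neq 0 \to a\cap w_{x,\tau}=a \vee a\cap w_{x^c,\tau}=a)\big)\Big],
\]
which is $1$-RQ, since the quantifier ranges over the active domain and everything inside is an $L$-formula in $a,\vec y,\vec w$. Setting $\kappa := \bigwedge_{\tau} A_\tau \wedge \bigwedge_{\sigma_2} B_{\sigma_2}$ then defines realizability.

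Finally I would verify both directions. Necessity is immediate: any representation satisfies additivity and the local rule. For sufficiency, the independence of the pieces $\gamma$ lets one build $x_0$ realizing the atomic data, and the $B_{\sigma_2}$ force the remaining components to match. The main obstacle is precisely establishing that the local rule is \emph{sufficient}: one must check there is no global obstruction coupling the choices on different atoms of the subalgebra or different monomials. This is exactly where atomlessness of $B$ is essential, guaranteeing each nonzero piece can be freely assigned or split; once this independence is secured, the passage to $1$-RQ form is routine because the atoms of the active domain are $1$-RQ-definable.
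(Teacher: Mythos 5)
Your proof is correct and follows essentially the same route as the paper's: characterize realizability by local conditions on the atoms of the subalgebra, verify sufficiency by assembling $x_0$ piece-by-piece over the pairwise disjoint cells $a \cap \tau(\vec y_0)$, using atomlessness to split a cell when both $w_{x,\tau}$ and $w_{x^c,\tau}$ claim it. The only difference is cosmetic: your additivity equations $B_{\sigma_2}$ are a sharpened form of the paper's (Monotonicity)/(Complementarity) propagation conditions that determine the non-maximal components outright, which slightly streamlines the verification for general $\sigma_2$.
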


Given the proposition, we rewrite $\phi$ to 
\[
\phi'= \exists \vec w \in \adom ~ \kappa(\vec w, \vec y) \wedge
 Q(u_1) \ldots Q(u_k) ~  \Gamma'(\vec u, \vec w)
\]
where in $\Gamma'$ $x \cap \sigma_1(\vec u) \cap \sigma_2(\vec y) \neq 0$ is replaced with  a formula stating
$\sigma_1(\vec u)$ contains an atom in $w_{x, \sigma_2}$, while
$x \cap \sigma_1(\vec u) \cap \sigma_2(\vec y) \neq 0$ is replaced with  a formula stating
$\sigma_1(\vec u)$ contains an atom in $w_{x^c, \sigma_2}$.
 
We now give the proof of Proposition \ref{prop:realizable}.
We say a vector $\vec w$ of pairs of elements as above  is \emph{consistent for
$\vec y$} if:
\begin{itemize}
\item (Complementarity) for every index $\sigma_2$, for every atom $a$ of the subalgebra
such that $\sigma_2(\vec y) \cap a$ is non-empty, $a$
intersects either  $w_{x, \sigma_2}$ or $w_{x^c, \sigma_2}$.
Further, for any non-maximal $\sigma_2$ and any $y_i$ not mentioned
in $\sigma_2$,  $a$
intersects either  $w_{x, \sigma_2 \cap y_i}$ or $w_{x^c, \sigma_2 \cap (y_i^c}$.

\item (Containment) If  $a$ intersects $w_{x, \sigma_2}$ 
then
$\sigma_2(\vec y) \cap a$ is non-empty
and if   $a$ intersects $w_{x^c, \sigma_2}$ then
$\sigma_2(\vec y) \cap a$ is not all of $\sigma_2(\vec y)$.

\item  (Monotonicity) if $\sigma_2$ is contained in $\sigma'_2$, then
$w_{x, \sigma_2}$ is contained in $w_{x, \sigma'_2}$ and similarly
for $x^c$.
\end{itemize}

%

Clearly consistency can be expressed as a $1$-RQ formula.
It is also easy to see that if vector is $\vec y$ realizable then it is 
$\vec y$ consistent.
We finish the argument by showing
that if a vector $\vec w$ is $\vec y$ consistent then it is $\vec y$ realizable.
Given a $\vec y$ consistent $\vec w$, for each $\sigma_2$ of the maximal
length we let $x_{\sigma_2}$
be the union of the following elements:
\begin{itemize}
\item $u \cap \sigma_2(\vec y)$, for  each atom $u$  of the subalgebra that 
intersects 
$w_{x, \sigma_2}$ but does not intersect $w_{x^c, \sigma_2}$, 
\item for each atom  $u$ of the subalgebra that intersects
both $w_{x, \sigma_2}$ and $w_{x^c, \sigma_2}$, an element of
the atomless Boolean algebra that is a non-zero  proper subset
of $\sigma_2(\vec y) \cap u$. By the the (Containment)  condition,
for each atom $u$ that intersects $w_{x, \sigma_2}$ 
$\sigma_2(\vec y) \cap u$ is non-empty, so such a subset exists.

\end{itemize}
We let $x_0$ be the union of $x_{\sigma_2}$ over all maximal-length $\sigma_2$.
We claim $x_0$ is the desired realization.

Fixing an arbitrary $\sigma_2$, we need to argue that the $\sigma_2$ 
representation
of $x_0$ is the pair $w_{x,\sigma_2}, w_{x^c, \sigma_2}$.

We first show that   $w_{x,\sigma_2}$ is correct.
In one direction, suppose $u$ is an atom of the subalgebra
for which $x_0 \cap u \cap \sigma_2(\vec y)$ is non-empty.
We argue that $u$ intersects $w_{x,\sigma_2}$. If $\sigma_2$ is
of maximal length, this is true by definition. In the general case
we can choose some maximal $\sigma'_2$ that $x_ \cap \sigma'_2(\vec y)$ is
non-empty and then argue via (Monotonicity).

In the other direction, suppose that  $u$ intersects $w_{x,\sigma_2}$.
We need to argue that  $x_0 \cap u \cap \sigma_2(\vec y)$ is non-empty.
By repeated applications of (Complementarity), we can find a $\sigma'_2$
of maximal length such that $u$ intersects $w_{x, \sigma_2}$.
Now by definition $x_0 \cap u \cap \sigma'_2(\vec y)$ is non-empty, and
the result follows.

We now give the argument that $w_{x^c,\sigma_2}$ is correct.
This will not be quite symmetrical, since we did not form $x^c$ explicitly.

In one direction, suppose $u$ is an atom of the subalgebra
for which $(x_0)^c \cap u \cap \sigma_2(\vec y)$ is non-empty.
We need to argue that $u$ intersects $w_{x^c,\sigma_2}$. 
Let us consider the case where $\sigma_2$ is maximal.
We know that $u \cap \sigma_2(\vec y)$ is not contained in $x_0$.
Thus $u$ could not have fallen under the first condition in the construction of
$x_0$.
This means that $u$  either does not
intersect
$w_{x, \sigma_2}$ or intersects $w_{x^c, \sigma_2}$.
But if $u$ does not intersect $w_{x, \sigma_2}$, by (Complementarity) it
must intersect $w_{x^c, \sigma_2}$, and hence we can conclude
that $u$ must intersect $w_{x^c, \sigma_2}$ as required.
In the case that $\sigma_2$ is not maximal, we apply (Monotonicity).

In the other direction, suppose $u$ intersects $w_{x^c,\sigma_2}$.
We argue that  $(x_0)^c \cap u \cap \sigma_2(\vec y)$ is non-empty.
By repeatedly applying (Complementarity), we can assume $\sigma_2$ is of
maximal length. Suppose by way of contradiction that $u$ were contained
in $(x_0) \cap \sigma_2(\vec y)$. Then $u$ must have fallen into the
first case of the construction of $x_0$ for $\sigma_2$. 
Thus  $u$  
intersects
$w_{x, \sigma_2}$ but does not intersect $w_{x^c, \sigma_2}$, a contradiction.

\section{Proof of Proposition  \ref{prop:transient}: isomorphism-invariant sentences that can not be  converted to
restricted-quantifier form,  cannot become restricted-quantifier in an expansion}

Recall the statement:

\medskip

If a theory $T$ has an isomorphism-invariant $L_V$  sentence $\phi$ that is not
equivalent to a $k$-RQ  sentence relative to $T$, then the same will be true in any expansion
of $T$.

\medskip

\begin{proof}
We first deal with the case of $k>1$, fixing a $\phi$ and considering some expansion $T^+$ of $T$ where
$\phi$ is equivalent to a $k$-RQ $\phi'$.
By considering an order indiscernible set in some model of $T^+$, $\phi'$ can be rewritten to have only the $V$ predicates and the ordering
relation on the indiscernibles. This ordering can be guessed, since $k>1$. Thus $\phi$ can be rewritten to have only the $V$ predicates,
for all models of $T^+$. An equivalence $M \models \phi \leftrightarrow \phi'$ over finite embedded models does not depend on the model of $T$,
since a counterexample would be witnessed by an $L$ sentence and $T$ is complete. Hence
the equivalence of $\phi$ to a $k^{th}$ order $V$ sentence holds in models of $T$ as required.

Now assume $k=1$, and fix   $\phi$. The only difference in the argument will be that we do not get rid of the ordering by guessing
it. Over models of $T^+$ $\phi$ is equivalent to a sentence using only the $V$ predicates and the  ordering
relation on the indiscernibles, thus $\phi$ can be rewritten using only the $V$ predicates and an additional arbitrary ordering on the finite
model. The same will be true in all models of $T$, by the observations above.
If $T$ has a totally indiscernible set $I$ -- one where the truth values of $L$-formulas $\phi(i_{j_1}, \ldots i_{j_k})$ is invariant
under permutations -- then over this set, the ordering can be removed, and thus $\phi$ is equivalent to a sentence using only  $V$ atoms
on models of $T$, a contradiction of the hypothesis that $\phi$ was not $1$-RQ in $T$.
If $T$ does not have such a set, then there is an order indiscernible $I$ where the  ordering over $I$ is $L$-definable with parameters 
from $I$: see also the proof of  Theorem \ref{thm:bbiso} in the other appendix.
 The truth value of $\phi$ will not depend on  which parameters are chosen, so they can be existentially quantified.
Thus $\phi$ is equivalent to a $1$-RQ formula over models of $T$, as required.
\end{proof}

\section{Details in the proof of Theorem \ref{thm:abapersistentlyunbounded}: Atomless Boolean Algebra is persistently unrestricted}
Recall the statement:

\medskip

No expansion of a theory of an infinite
Boolean Algebra is $\omega$-RQC.

\medskip

\begin{proof}
We give the argument for $\alba$, but the reader will see that atomlessness
is not crucial.
We start by considering one particular extension $\alba^<$, which we describe
as the complete theory of a particular model. In this model  we have a distinguished
partition of $1$ into countably many elements $E_i$ ordered by a discrete linear order $<$.
 We look at the Boolean Algebra generated by these sets.
Each element of the subalgebra is of the form  $E_S=\bigcup_{i \in S} E_i$< 
where $S$ is a subset of
of the natural numbers.
We can identify $E_S$
 with the infinite bit strings over $\nats$ where there is a $1$ on the $i^{th}$ bit exactly
when $i \in S$. We order two elements of the subalgebra via the lexicographic
ordering on   the corresponding strings.
Let $\phi_{union}$ be the $L_P$ sentence stating that there is an element that is the union of all the other elements.
We claim that $\phi_{union}$ is   not finitely redeemable within $\alba^<$.
For a given $k$ we create sets $U_k$, $U'_k$ by letting $U_k$ include $E_i$ for $i$ in
a  set $S_k$ of natural numbers of size $k$ and also $E_{S_k}$.
$U'_k$ consists of $E_i$ for $i \in S_k$ and also $E_{S'_k}$ where $S'_k=S_k\cup \{i'\}$ where $i'$ is above
every element in  $S_k$. The embedded finite subset $U_k$ satisfies $\phi_{union}$,
while $U'_k$ does not.
Let $f_k$ be the identity on $E_i$ for $i \in S_k$ and map $E_{S_k}$ to $E_{S'_k}$. Then
$f_k$ is a bijection from $U_k$ to $U'_k$ which preserves all atomic formulas of arity
at most $k$.

The proof closes as described in the body of the paper.
We recall the characterization of everywhere Ramsey expansions from \cite{definabilitypatterns}, which we denote as $(\dagger)$:

\medskip

For every $M \models T$  every $k$, and every set of $k$-tuples $A$ in $M$, there exists a
an elementary extension $(M^*, A^*)$ of $(M, A)$ containing a copy $M'$
of $M$ such that $A^* \cap M'$
is  definable by an $L(M)$ formula without parameters.

\medskip

In \cite{definabilitypatterns}, Section 5.6, it is shown roves that $\alba^<$ satisfies $(\dagger)$.
Now consider an  expansion $M^*$ of a model of $\alba$. 
We can further expand $M^*$ to a model of $\alba^<$ by choosing the ordering
appropriately. By a compactness argument, we can assume that $M^*$ is countably saturated: any countable collection of formulas
with parameters from the model that is consistent is satisfied in $M^*$.
Let $U_n, U'_n, f_n: n < \omega$  be the witnesses above that $\phi_{union}$ is not finitely redeemable in $\alba^<$.
We will argue that there is  a sequence $Q_n, Q'_n, f'_n$ witnessing that $\phi_{union}$ is still not finitely redeemable in $M^*$.
Fixing $n$, it suffices, by countable saturation in $M^*$, to show that for any fixed $\phi_1(\vec x) \ldots \phi_n(\vec x)$  there
 are $Q_n, Q'_n, f'_n$ such that $f'_n$ preserves each $\phi_i:i< n$, $(M^*,Q_n) \models \phi_{union}$, and $(M^*, Q'_n) \models \neg \phi_{union}$.
By $(\dagger)$ there is a copy $M_0$ of $\alba^<$ where each $\phi_i$ intersected with the domain of $M_0$ is $L(M)$-definable.
The copies of $U_n, U'_n$ in $M_0$ can be taken as witnesses: $f_n$ will preserve each 
$n$-ary definable set, thus it will preserve each $\phi_i$ for
tuples in the copy.
\end{proof}

\section{Proof of Theorem \ref{thm:pseudofiniteqe}}
We recall the statement of Theorem \ref{thm:pseudofiniteqe}:

\pseudofiniteqe*

Let $x=(x_1,\ldots,x_n)$ be an $n$-tuple of variables, $y$ another.   A {\em basic} formula is one of the form
 \[(\exists y) (f_m(x) y^m + \cdots + f_0(x) \wedge f_m(x) \neq 0 \wedge \theta(x,y)) \]
  where $\theta(x,y)$ is a quantifier-free formula in the language of rings.

Notice that such a formula is algebraically-constrained, and does not use parameters.
Our first step is to reduce to a Boolean combinations of such formulas:

  \begin{proposition} \label{prop:bcbasic} Any formula $\phi(x)$ in the language of field  is  equivalent, uniformly over finite fields,  to a Boolean combination of 
basic formulas.   \end{proposition}

\begin{proof}   By absoluteness of arithmetic statements, we may assume $2^{\aleph_0} = \aleph_1$ 
We will work model-theoretically. It suffices to show that:

\medskip

if $M,N$ are saturated models of the theory of pseudo-finite fields of cardinality $2^{\aleph_0}$,
$a \in M^n, b \in N^n$, 
and $a,b$ satisfy the same basic formulas, then there exists an isomorphism $f: M \to N$ with $f(a)=b$. 

\medskip

Let $A_0$ be the field spanned by $a$, and  $B_0$ the field spanned by $b$.  Then we have an isomorphism $f_0: A_0 \to B_0$.  

Let $A_1,B_1$ be the   perfect hulls (see \cite{zoenotes}) within $M,N$ respectively.  Extend  $f_0$ to $f_1:A_1 \to B_1$:    if $x \in A_1$, then $x^q=a_0 \in A_0$
for some power $q$ of $p$.  Then $M \models (\exists x)(x^q=a_0)$ and it follows that $N \models (\exists y)(y^q = f_0(a_0)$.
Moreover, given the choice of $q$, $y$ is unique.  Map $x$ to $y$; it is easy to check that this is well-defined and an isomorphism $A_1 \to B_1$.

We now claim:

\medskip

If $K$ is any finite field extension of $A_1$,  $f_1$ can be extended to have domain $K$.  

\medskip

We prove the claim.
 For this claim, we may identify $A_1$ with $B_1$ via $f_1$, so we can assume $f_1$ is the identity restricted to ${A_1}$.  If the base field is perfect, then every finite extension is separable. By the theorem of the primitive element, $K=A_1[c]$ for some element $c$.  The minimal polynomial of $c$ over $A_1$
can be written as $f_m(a^{1/q} ) y^m + \cdots + f_0(a^{1/q}) =0$, with $f_i$ polynomials  over $\Zz$.  So 
$f_m^q(a) y^{qm} + \cdots + f_0^q(a) =0$.   Since $b$ satisfies the same basic formulas as $a$,
we can find $d \in N$ with $f_m^q(b) y^{qm} + \cdots + f_0^q(b) =0$, and by basic algebra we have $A_1[c] \cong B_1[d]$.  

Let $A,B$ be the  relative algebraic closures of $A_1,B_1$ in $M,N$ respectively.   We will argue now that $f_1$ extends to $f: A \to B$, using Koenig's tree lemma.

Let $K_1 \leq K_2 \leq \cdots $ be a chain of finite field extensions of $A_1$, with $\union_m K_m = A$.  
Consider the tree whose $n^{th}$ level is the (finite) set of  extensions of $f_1$ to  $K_n$.))        Similarly $f \inv $ extends to a map $g$ from $B$ to $A$.
Any field homomorphism $gf: A \to A$ over $A_1$  must be surjective, since the restriction of $gf$ to $A \meet N$, with $N$ any finite Galois extension of
$A_1$, is surjective.  Hence $gf$ is  surjective, so $f$ is surjective, so $f: A \to B$ is an  isomorphism.  This completes the proof of the claim.

Returning to the proof of the proposition, by \cite[Theorem 1]{ax} $(M,a) \cong (N,b)$.  This completes the proof of the proposition.
 \end{proof}

 The proof above also shows that any formula is equivalent to a disjunction of conjunctions of a single basic formula and a negated basic formula: the proof of the claim   uses just one basic formula,
as does the dual direction -- i.e. the same claim used from $N$ to $M$.

Now let us return to the proof of Theorem \ref{thm:pseudofiniteqe}. We note that  a negated basic formula is equivalent to a disjunction of basic formulas with parameters.
For example consider
   $\neg (\exists y)(y^2=x)$. This is equivalent to  $(\exists y)(y^2 \cdot c=x)$, where
   $c$ is any element that is not a square.  See e.g. the remarks in \cite{louremarkax}.

Also note that a conjunction of basic formulas is an algebraically-constrained formula. This completes the proof of Theorem \ref{thm:pseudofiniteqe}.

\section{Analogs of Atomless Boolean Algebras with bilinear forms}

We know that NIP theories  may not have§ $1$-RQC or even $\omega$-RQC (see Example \ref{ex:fcp}).
But they have $1$-RQC
when restricting interpretations of $V$ predicates which lie within an order-indiscernible set. Hence
isomorphism invariant sentences are well-behaved: contained in order-invariant FO.

Atomless Boolean Algebras are an example that is similar to NIP theories, to some extent:
they do not have $1$-RQC or even $\omega$-RQC, but for very specialized interpretations  they
have $1$-RQC. Hence isomorphism-invariant $L_P$ sentences (Monadic uninterpreted signature) are still equivalent to $1$-RQ sentences.
In this case, the well-behaved instances are not indiscernible bur rather have certain closure
properties: namely they are subalgebras.

We show similarly  phenomena for another basic algebraic example, a vector space with a bilinear form.

Assume $T$ consists of an infinite Abelian  group structure $(G,+)$ of finite 
exponent $p \neq 2$ and a nondegenerate symmetric
 bilinear form  $\bilin( \textunderscore , \textunderscore ): G^2 \to GF(p)$.

We start by  showing that $T$ does not have $1$-RQC, and does not even have it when we restrict
to an indiscernible set.  We first note that:

\begin{proposition}
In any model there an infinite $I$ that is indiscernible
and  such that
$\bilin(d,d) =a \neq 0$ for each $d$
and $\bilin(d,c)=0$ for distinct $d,c$. 
\end{proposition}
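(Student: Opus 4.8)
The plan is to build $I$ explicitly as an infinite ``orthonormal'' system inside the given model $G$ and then to verify that any such system is automatically totally indiscernible. First I would note that, since $p \neq 2$ and the form is nondegenerate on the nonzero space $G$, there is an anisotropic vector, i.e.\ a $d$ with $\bilin(d,d) \neq 0$: if $\bilin(v,v)=0$ held for every $v$, then polarizing gives $\bilin(u,w) = \tfrac12(\bilin(u+w,u+w)-\bilin(u,u)-\bilin(w,w)) = 0$ for all $u,w$ (using that $2$ is invertible), contradicting nondegeneracy. Given an anisotropic $d_1$, the decomposition $G = \langle d_1\rangle \perp d_1^{\perp}$ is orthogonal and the form restricted to the hyperplane $d_1^{\perp}$ is again nondegenerate; as $G$ is infinite-dimensional, $d_1^{\perp}$ remains infinite-dimensional, so the previous step applies inside it and produces an anisotropic $d_2 \perp d_1$. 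Iterating yields an infinite sequence $d_1,d_2,\ldots$ of pairwise orthogonal anisotropic vectors, all inside the fixed model (no saturation is needed). Their norms $\bilin(d_i,d_i)$ lie in the finite set $\Ff_p^{\times}$, so by pigeonhole some value $a \neq 0$ occurs infinitely often; discarding the rest and relabelling, I obtain an infinite $I = \{d_i\}$ with $\bilin(d_i,d_i)=a$ for all $i$ and $\bilin(d_i,d_j)=0$ for $i \neq j$. Since the Gram matrix of any finite subtuple is $a$ times the identity, which is invertible, the $d_i$ are linearly independent, hence distinct.

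It remains to show $I$ is totally indiscernible. The key point is that the type of a tuple of vectors in such a space is determined by its linear dependencies together with its Gram matrix. For an orthonormal subtuple $(d_{i_1},\ldots,d_{i_n})$ the dependencies are trivial (linear independence) and the Gram matrix is $aI$, data that is invariant under permutations and identical across any two such tuples of the same length. Concretely, for two orthonormal $n$-tuples with the same $a$, the linear map sending one to the other is an isometry between nondegenerate $n$-dimensional subspaces, and by Witt's extension theorem it extends to an isometry of all of $G$, i.e.\ an automorphism of the structure; hence the two tuples realize the same type. This yields total indiscernibility of $I$, which is exactly the claim.

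The delicate point — and the step I expect to be the main obstacle — is this last extension, namely passing from a finite isometry to an automorphism of the ambient \emph{infinite-dimensional} space. I would justify it either by invoking the infinite-dimensional form of Witt's theorem (legitimate because the subspaces in question are nondegenerate, so their orthogonal complements are isometric copies of one and the same infinite form), or, equivalently, by establishing quantifier elimination for $T$ in the natural language and reading off that the quantifier-free type of an orthonormal tuple, a Boolean combination of linear equations and form equations $\bilin(\sum c_i x_i, \sum d_j x_j) = e$, is completely determined by its Gram matrix. Either route closes the argument.
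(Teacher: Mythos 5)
Your construction is essentially the paper's: the paper likewise builds the set by showing inductively that a finite pairwise-orthogonal anisotropic family cannot be maximal (using that $A^\perp$ is infinite via the dual map, where you use the orthogonal decomposition $G=\langle d_1\rangle\perp d_1^\perp$) and then pigeonholes on the finitely many nonzero values of $\bilin(d,d)$. Your write-up is in fact more complete than the paper's, which leaves implicit both the existence of an anisotropic vector (your polarization step, using $p\neq 2$) and the verification of indiscernibility from the Gram-matrix data (which, as you anticipate, rests on quantifier elimination for this theory, cited elsewhere in that appendix).
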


\begin{proof}
We show by induction that a finite set of mutual orthogonal elements with 
$\bilin(d,d) \neq 0$ 
cannot be maximal.  Letting $A$ be the span of such a set, it is clear
that $A^\perp \cap A=\{0\}$. We can also see
that $A^\perp$ is infinite:
considering
the standard map $g$ from $V$ to the dual $A^*$ of $A$, we have
 $A^\perp$ is the kernel of $g$. If $A^\perp$ were finite, since $A^*$
is finite, we would have that $V$ is finite.
But choosing a non-zero element of $A^\perp$,
we have contradicted maximally.
Given that we have an infinite set with  $\bilin(d,d) \neq 0$
we can take a subset where the value of $\bilin(d,d)$ is constant.
\end{proof}

We will consider a $J$ interpreting $P$ ranging over finite subsets of
$I$.
We refer to this as an  \emph{orthonormal interpretation} below.

The
 span $V$ of such a $J \subset I$ interpreting $P$
is $L_P$ definable:
the span is the same as the orthogonal complement of the orthogonal
complement:
$e$ is in $V$ if and only if $\forall d$ if $\bilin(d,c)=0$ for each $c \in J$
then $\bilin(d, e)=0$.

Given an element $e$ of the span of $J$, consider the set
$S_e= \{i \in I ~ \mid ~ \bilin(e,i) \neq 0 \}$. This is a subset
of $J$, and it is easy to see (using the properties of $I$)
that every subset $S$ of $J$ is
$S_e$ for a unique $e$ in the span, namely the sum of the elements
of $S$. Thus we can quantify over subsets of $J$ using quantification
in $L_P$.

We can also define the parity of a subset $S$ of $J$ modulo
$p$,  and in particular, the parity
of $J$ itself.   Namely, if $S$ is represented by its sum
$e$ in the span, then  $|S| = a \cdot \bilin(e,e)$ modulo $p$.  

Using this
we  see that $1$-RQC fails even when restricting interpretations to lie inside this indiscernible set. 
 As the trace on $J$ of the first-order structure on $G$ is just pure 
equality, parity cannot be defined by any first-order logic formula on $J$.

 We now consider restricting interpretations to  finite  \emph{non-degenerate subspaces} of a model
$M$ of the theory: those for which the form restricted to the subspace is non-degenerate.
Any finite set $P_0$ has a finite superset $P'_0$ that is in this class: for  any
$p \neq 0 \in P_0$ with $\bilin(p, q)=0$, for each $q \in P$, we throw into
$P_0$ an element $q'$ such that $\bilin(p, q') \neq 0$. We then close under addition to
get a subspace.

We claim that for interpretations of $P$ lying within this class, the theory has $1$-RQC.
We will use the well-known fact that as a vector space
$M$ is the direct sum of $P$ and $P^\perp$. And the bi-linear form structure
on the direct sum is determined by its values on the two components.
Thus by a standard Fefferman-Vaught argument, any sentence $\phi$ in $L_P$
decomposes into a Boolean combination of sentences $\phi_P$ quantifying over $P$ and sentences $\phi^\perp_P$ over
$P^\perp$, uniformly in $P$.
But  $P^\perp$ is also a model of the same theory, and thus by completeness its theory
does not depend on $P$. Thus we can reduce to a $1$-RQ sentence.
The argument for open formulas is similar.

We combine the above two arguments above to conclude:

\begin{proposition} If we restrict interpretations  of $P$ to be  orthonormal, then every $L_P$ sentence
is equivalent to a $2$-RQ one. Since every interpretation can be mapped to an orthonormal one,
we have every $L_P$ isomorphism-invariant sentence is equivalent to a $2$-RQ one.
\end{proposition}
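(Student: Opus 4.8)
The plan is to bridge the two regimes already analyzed — orthonormal sets and non-degenerate subspaces — by passing through the subspace generated by an orthonormal interpretation. Fix an orthonormal interpretation, so that $P$ is interpreted by a finite $J \subseteq I$, and let $V$ be the subspace spanned by $J$. Since the elements of $J$ are pairwise orthogonal with $\bilin(d,d)=a \neq 0$, they are linearly independent and the form restricted to $V$ is non-degenerate; hence $V$ is one of the finite non-degenerate subspaces for which $1$-RQC has already been established.

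First I would reduce $\phi$ to a first-order sentence over $V$. The sentence $\phi$ is an $L_P$ sentence using only $P=J \subseteq V$. Applying the Fefferman--Vaught decomposition to $M = V \oplus V^\perp$ — now carrying the extra unary predicate $J$, which lives entirely inside $V$ — the quantifiers of $\phi$ ranging over $M$ split into a Boolean combination of first-order statements about the finite factor $(V, \bilin{\restrict}V, J)$ and first-order statements about $V^\perp$. As $V^\perp$ is again a model of $T$, its contribution is constant by completeness. This yields a single first-order sentence $\Psi$ over the finite structure $(V, \bilin{\restrict}V, J)$ — that is, one quantifying over the active domain $V$ and using the $L$-atoms together with the predicate $J$ — with $\phi(M,J) \leftrightarrow \Psi(V,J)$ holding uniformly in $J$.

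The second, and main, step is to re-express $\Psi$ as a $2$-RQ sentence with active domain $J$ itself. Every $v \in V$ is a unique $GF(p)$-combination $\sum_{j \in J} \lambda_j(v)\, j$ of the basis $J$, so $v$ is coded by the $p-1$ subsets $S_\mu = \{j \in J : \lambda_j(v) = \mu\}$, $\mu \in GF(p)\setminus\{0\}$, of the active domain $J$. I would translate $\Psi$ by replacing each quantifier $\exists x \in V$ with a block of $p-1$ monadic quantifiers over subsets of $J$ (with a disjointness clause), and each atom by its reading in the code: $J(x)$ becomes ``$S_1$ is a singleton and the other $S_\mu$ are empty''; equality and additive atoms become Boolean identities among the coding sets; and the form atom $\bilin(x,y)=c$ becomes $a\sum_{\alpha,\beta \neq 0}\alpha\beta\,|S^x_\alpha \cap S^y_\beta| \equiv c \pmod p$, a finite disjunction of conditions each asserting a congruence modulo $p$ on the cardinality of a set-definable subset of $J$. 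The result is a monadic second-order sentence over $J$, hence $2$-RQ, completing the orthonormal case. For the final assertion, an isomorphism-invariant $\phi$ depends only on $|P|$, so its restriction to orthonormal sets (where the trace of $L$ is pure equality) is a pure-equality monadic sentence defining a set $S$ of cardinalities, and $\phi$ is then equivalent over all interpretations to the $2$-RQ sentence ``$|P| \in S$''.

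The step I expect to be the crux is the translation of the bilinear-form atom: it relies on the fact that monadic second-order logic over a pure set can express congruences modulo $p$ of the cardinality of a definable subset, since the spectra of pure-equality monadic sentences are exactly the ultimately periodic sets. This is precisely the extra expressive power — modular counting, unavailable to first-order restricted quantification — that both accounts for the earlier failure of $1$-RQC on $I$ and makes the present conversion succeed. A secondary point requiring care is verifying that the Fefferman--Vaught reduction remains genuinely uniform once the marked predicate $J$ is added, so that $\Psi$ is a single sentence independent of $|J|$ and the subset coding can be applied uniformly.
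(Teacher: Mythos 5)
Your overall route is the same as the paper's: pass to the span $V$ of the orthonormal set $J$, observe that $V$ is a finite non-degenerate subspace so that the Fefferman--Vaught/completeness argument reduces $\phi$ to a single first-order sentence $\Psi$ over the finite structure $(V,\bilin{\restrict}V,J)$, and then push $\Psi$ down to second-order quantification over $J$ by coding each element of $V$ by the level sets of its coordinates. You are in fact more explicit than the paper on two points it glosses over --- carrying $J$ as a predicate on the finite factor rather than claiming $J$ is definable from its span, and the $(p-1)$-fold coding needed for general $p$ --- and both of those are handled correctly.

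However, the step you yourself identify as the crux is wrong as stated. You translate the atom $\bilin(x,y)=c$ into congruences modulo $p$ on the cardinalities $|S^x_\alpha\cap S^y_\beta|$ and assert these are expressible in \emph{monadic} second-order logic over the pure set $J$, ``since the spectra of pure-equality monadic sentences are exactly the ultimately periodic sets.'' That last claim is false: ultimate periodicity of MSO spectra is B\"uchi's theorem about \emph{ordered} structures (words). Over a naked set with no order, a standard Ehrenfeucht--Fra\"iss\'e argument (split each Boolean atom of the chosen sets, matching sizes up to a threshold halving at each round) shows that all sufficiently large pure sets are MSO-$k$-equivalent, so pure-equality MSO defines only finite or cofinite sets of cardinalities and in particular cannot express $|A|\equiv r \pmod p$. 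This is consistent with --- indeed forced by --- the paper's earlier observation that the trace of $L$ on $J$ is pure equality, so that parity modulo $p$ is not $1$-RQ definable there; and if monadic restricted quantification did suffice, your final paragraph would conclude that the cardinality set $S$ of an isomorphism-invariant sentence is finite or cofinite, contradicting the definability of parity. The repair is cheap: $2$-RQ formulas allow second-order variables of arbitrary arity over the active domain, and with a binary relation variable one expresses $|A|\equiv r\pmod p$ by positing, say, an equivalence relation on $A$ minus $r$ marked points all of whose classes have size exactly $p$. With the monadic restriction dropped at this one point, your translation and the concluding isomorphism-invariance step go through and coincide with the paper's argument.
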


\begin{proof} Fix $\phi$ in $L_P$. We let $\phi^*$ be such
that for any  $P$ in the class, $\phi$ evaluated over $P$
is the same as $\phi^*$ evaluated over $P^*$ equal to the span of $P$.
$P$ is definable within its span, so $\phi^*$ can easily be created.
There is a  $1$-RQ formula
$\phi'$ that is equivalent to $\phi^*$ over non-degenerate subspaces.
In the orthonormal case, the span of $P$ is a non-degenerate subspace,
thus $\phi'$ evaluated over $P^*$ is equivalent to $\phi^*$ over $P^*$.
However the span of $P$ is clearly restricted-quantifier second-order definable over $P$.
Thus replacing $P^*$ in $\phi'$ by the RQ second-order definition shows
that we have $2$-RQC when restricting to embedded finite models in  this class.
\end{proof}

Note that this contrasts with the case of pseudofinite fields, where isomorphic-invariant
sentences can be extremely complex: see Theorem \ref{thm:pseudofiniteposexhaustive}.
 
Continuing  the analogy with Atomless Boolean Algebra, we show that the
theory does not have $\omega$-RQC: obviously, using queries that are not isomorphism-invariant.

 We proceed in the same way as for Example \ref{ex:fcp}, Atomless Boolean Algebra, and pseudofinite fields, by providing  an $L_P$ sentence $\phi$ such  that:

\medskip

for each $k$, there are $P_k,P'_k$ and a function $f_k$ taking $P_k$ to $P'_k$
such that: $(M,P_k) \models \phi_{contained}$, $(M,P'_k) \models \neg \phi$,
but  $f_k$ preserves all formulas with at most $k$ variables.

Following the terminology given in the earlier appendix, we  say that the formula $\phi$ is
not finitely redeemable.

For simplicity, we will assume $p=2$ in the argument below.


Consider the sequence of subsets $C_k$ and $D_k$ such
that $C_k= \{x_1,\ldots,x_k\}$ with $\bilin(x_i,x_j)=0$ for each $i,j \leq k$, and
the $x_i$ are linearly independent. The fact that $G$ is infinite implies that
such a $C_k$ exists. Note that  $\bilin(x_i, x_i)=0$, in contrast to the assumption
on the set $I$ considered above.  Let
$D_k  =\{x'_1,\ldots,x'_k\}$ with $x'_i=x_i$ for $i \leq k-1$ and
$x'_k=-\Sigma_{i \leq k-1} x_i$. Note that we still have $\bilin(x'_k, x'_j)=0$ for $j <k$.

Note that in $D_k$ there is an element that is in the span of the remaining
elements, but this is not the case in $C_k$. It is thus easy to distinguish
them in $L_P$.

Fixing an arbitrary bijection $f_k$ from $C_k$ to $D_k$, we next claim that $f_k$ preserves
each \emph{atomic} formula of size at most $k-1$. Atomic formulas of the form
$\bilin(\tau_1, \tau_2)=0$ for arbitrary terms $\tau_1, \tau_2$ will be true when bindings of
variables 
to either $C_k$ or  $D_k$ are applied. 
While formulas
of the form $\bilin(\tau_1, \tau_2)=1$ are the negation of the equality above, given that $p=2$.
Atomic formulas of the form $\tau_1=\tau_2$ will be preserved when the sum of the
sizes of $\tau_i$ is at most $k-1$.

We then claim
that for any fixed set $J$ of $L$ formulas $\phi$, there is $k$ such that
$f_k$ preserves the truth of all formulas in $J$. This follows from quantifier-elimination
for the theory, which is argued in \cite{ch}.
This completes the argument that the theory is not finitely redeemable, and hence not $\omega$-RQC.

Although the results above indicate a similarity between this example and Atomless Boolean Algebras,
we do not know if this example is persistently unrestricted.

\section{Separating RQC levels with NIP theories}
In the body of the paper we proved that there are theories that are $k+2$-RQC
but not $k$-RQC. The argument also shows that one can separate $k+1$ from $k$
if one can separate the hierarchy of pure spectra. 
But the examples we get are not NIP.
We mentioned that for NIP theories we can get separation of $2$-RQC and $1$-RQC.
This is more evidence that  imposing NIP does not tell us much about higher RQC:
it does not imply even $\omega$-RQC  (Example \ref{ex:fcp}), and  it does not imply
that the $k$-RQC hierarchy collapses.

We now give a separating example. It will be decidable, NIP (in fact stable, which implies NIP),
and $2$-RQC, but not $1$-RQC. We do not have NIP examples that
are  $k+2$-RQC but  no $k$-RQC  for $k>1$.

Let $I_n=[a_n, b_n]: n \in \omega$ be a partition of the natural numbers into 
intervals, where $b_n-a_n =n$. Let $g(n)= \Sigma_{i \in I_n} i$.

Consider a signature with binary relations $E$, $F$, and unary relations $\mymax$
and $\mymin$,
and let $M$ be a structure in which $E$ is an equivalence
relation with equivalence
classes $C_n$ for each $n$, where $C_n$ has size $g(n)$.
 $F$ is a binary relation such that
for each $n$, $F$ defines a bijection on $C_n$ such that
there is one $F$-orbit of size $i$ for each $i \in I_n$.
$\mymin$ contains, for each $E$ equivalence class $C_n$
 exactly one element of  $c \in C$, where 
$c$ must be in an orbit of size $a_n$. $\mymax$ also contains
one element in each $C_n$, but this time one that lies in an orbit
of size $b_n$.

Formulas in the base language $L$
can express whether the $x_i$ are equivalent,
 and  if they
are equivalent whether or not they are a fixed $F$-distance from each other
or from the unique $\mymin$ and $\mymax$ elements in  their equivalence-class.
One can thus show directly that all types are definable, and thus that
$M$ is stable.

The statement $\phi(P)$ that $P$ takes up a whole equivalence class is
clearly not  definable by a restricted-quantifier formula, so $M$ is not
$1$-RQC.

We now argue that $M$ is  $2$-RQC.

By a simple ultraproduct argument it suffices to show the following:

Let $M'$ be an ultrapower of $M$ via a  non-principal ultrafilter.
In $M'$ say a subset of the domain is hyper-finite if it is an ultraproduct
of  finite sets.
Let $P_0$ and $P'_0$ be hyper-finite sets in $M'$ such that the expansion
of $M$ with $P$ interpreted  by $P_0$ and the expansion of $M$ with $P$ interpreted by $P'_0$   agree on all restricted-quantifier second order
formulas, where the quantifiers are interpreted to range over hyper-finite
subsets of the interpretations of $P$. Then the expansions with $P_0$ and $P'_0$ agree on all $L_P$ formulas.

To show the conclusion, it suffices to construct a back-and-forth.
Suppose we have a partial
isomorphism $h$ in $M'$ between hyper-finite sets $P$ and $P'$,
and an $x$ in $M'$. We will show that we can extend $h$ to include $x$.
Let $C_x$ be the $E$-equivalence class of $x$ in $M'$. The interesting case
is when $C_x$ is hyper-finite but not finite, so we assume this below.
If $C_x$ does not contain an element of $P$, then we can map $x$
to an arbitrary infinite equivalence class that does not contain an element
of $P'$.  So assume that $C_x$ contains an element $p_0 \in P$, and
let $C'$ be the equivalence class of $h(p_0)$.
We will also assume that $P$ contains the unique element satisfying
$\mymin$ and the unique element satisfying $\mymax$ in $C_x$, since these can
be referenced with restricted-quantifier formulas.

If $x$ has finite $F$-distance $k$ from some $p \in P$, then
we can map it to the unique element $x'$ that is $k$ away from $h(p)$ in the same direction, 
and it is clear that this extends $h$ to a partial isomorphism.
So we can further suppose $P$ is disjoint from the $F$-orbit containing $x$.

First consider the case where for arbitrarily large $k$, $C_x$  has an orbit that
contains an element $p_1$ of $P$ but does not contain any elements in an $F$-interval of size
$k$ around $p_1$, where by an $F$-interval we mean a set $o, F(o)\ldots F^k(o)$.  Then the same is true for $P'$ in  $C'$, since we can talk about the interval
around $h(p_1)$ with first-order restricted quantifiers.
Thus (by saturation) $C'$ will contain a hyperfinite $F$-interval that contains  no elements of $P$, and
hence in particular contains an orbit with no elements of $P$. We can map $x$ into that orbit.

Now consider the case where there is a bound $k_0$ on the size of  a $P$-free
interval in any of the orbits containing a $P$ element within $C_x$. That is, in orbits
which are not $P$-free, the $P$
elements are very dense.  Then there is a second order
restricted-quantifier formula $\phi(p, p')$ such that for each $p \in P$ $\phi$ holds of all
the $p' \in P$ that are in the $F$-orbit of $p$. The formula quantifies over a partial function
taking elements of $P$ to the nearest element of $P$ in terms of $F$-distance.
From this we can  find a restricted-quantifier formula that defines the $F$-orbit of $p$.
Finally, we can use this latter formula
to obtain a restricted-quantifier second-order logic formula enforcing that the $F$-orbit
closure of $C_x$ is not all of $C_x$. Informally the formula says that we ``skip a cardinality''.
We state that there are $p,p' \in P$ such that the cardinality of the
$F$-orbit  of $p'$ is larger than the cardinality of the $F$-orbit
 of $p$, and there is no $p'' \in P$ such that the $F$-orbit  of $p''$
has cardinality strictly between that of $p$ and that of $p'$.
Here the restricted second-order quantifiers are existential and the witnesses
can be taken to be hyper-finite. Thus the formula  will also hold in $P'$, which
allows us to map $x$.

\end{document}